\documentclass[
twocolumn,
superscriptaddress,
pra,
10pt
]{revtex4-2}

\usepackage{dcolumn}
\usepackage{booktabs}

\usepackage{amsfonts}
\usepackage{amssymb}
\usepackage{mathrsfs}
\usepackage{mathtools}
\usepackage{amsmath}
\usepackage{amsthm}
\usepackage{graphicx}
\usepackage{color} 
\usepackage{array}
\usepackage[makeroom]{cancel}
\usepackage{changes}
\usepackage{braket}
\usepackage{bbm}
\usepackage{bm}
\usepackage{multirow}
\usepackage{cellspace} 
\usepackage{appendix}
\usepackage{algorithm}
\usepackage{algpseudocode}
\usepackage{comment}
\usepackage[export]{adjustbox}
\usepackage{xcolor}
\usepackage{overpic}

\usepackage{tikz}
\usetikzlibrary{shapes,arrows,positioning,fit}
\usepackage{quantikz}
\usetikzlibrary{quantikz2}
\usepackage[colorlinks=true, allcolors=blue]{hyperref}
\usepackage{orcidlink}
\usepackage[capitalize]{cleveref}

\crefname{section}{Sec.}{Secs.}
\crefname{thm}{Theorem}{Theorems}
\crefname{dfn}{Definition}{Definitions}
\crefname{rmk}{Remark}{Remarks}
\crefname{lem}{Lemma}{Lemmas}
\crefname{cor}{Corollary}{Corollaries}
\crefname{prop}{Proposition}{Propositions}
\theoremstyle{plain}  
\newtheorem{prop}{Proposition}

\newtheorem{lem}{Lemma}

\theoremstyle{remark}
\definecolor{sprintRed}{HTML}{D7333B}
\definecolor{sprintGold}{HTML}{E2A300}
\definecolor{sprintGreen}{HTML}{00973E}
\definecolor{sprintBlue}{HTML}{4D53C8}
\definecolor{sprintGrey}{HTML}{555555}
\definecolor{sprintMidGrey}{HTML}{8D8D8D}   
\definecolor{sprintLightGrey}{HTML}{C5C5C5}

\begin{document}

\title{Theory and practice of Trotter product formulas for quantum chemistry}

\author{Pablo A. M. Casares~\orcidlink{0000-0001-5500-9115}}
\affiliation{Xanadu, Toronto, ON, M5G 2C8, Canada}

\author{William Maxwell}
\affiliation{Xanadu, Toronto, ON, M5G 2C8, Canada}

\author{Danial Motlagh~\orcidlink{0009-0003-7655-4341}}
\affiliation{Xanadu, Toronto, ON, M5G 2C8, Canada}

\author{Hitarth Choubisa}
\affiliation{Xanadu, Toronto, ON, M5G 2C8, Canada}

\author{Zy Niu~\orcidlink{0000-0002-4025-5994}}
\affiliation{Xanadu, Toronto, ON, M5G 2C8, Canada}
\author{Ignacio Loaiza~\orcidlink{0000-0001-9630-855X}}
\affiliation{Xanadu, Toronto, ON, M5G 2C8, Canada}

\author{Jonathan E. Mueller~\orcidlink{0000-0001-8811-8799}}
\affiliation{Volkswagen AG, Berliner Ring 2, 38440 Wolfsburg, Germany} 

\author{Arne-Christian Voigt}
\affiliation{Volkswagen AG, Berliner Ring 2, 38440 Wolfsburg, Germany} 

\author{Juan Miguel Arrazola~\orcidlink{0000-0002-0619-9650}}
\affiliation{Xanadu, Toronto, ON, M5G 2C8, Canada}

\author{Stepan Fomichev~\orcidlink{0000-0002-1622-9382}}
\affiliation{Xanadu, Toronto, ON, M5G 2C8, Canada}

\begin{abstract}
  %Hamiltonian simulation is the core subroutine for quantum algorithms in chemistry and materials science. 
  Trotter product formulas are a fundamental class of methods for Hamiltonian simulation, particularly attractive due to their low qubit requirements. However, they are often overlooked for use with fault-tolerant quantum algorithms, because of their perceived higher gate counts and the difficulty of estimating Trotter error.
  Here, we introduce Symmetry-Protected Randomized near-Integrable Trotter (SPRINT) formulas, a framework for building optimized product formulas for electronic structure Hamiltonians widely used in quantum chemistry. 
  SPRINT integrates a generalization of classical near-integrability, randomization, symmetry protection, use of QROM, and other techniques into a thoroughly optimized methodology for Hamiltonian simulation. When applied to concrete simulation tasks, we find SPRINT leads to substantial reduction in gate count compared to previous approaches. Alongside SPRINT, we introduce and analyze a Generalized Rank Decomposition (GRADE) of electronic Hamiltonians that generalizes previous factorization methods. We apply these techniques to the task of simulating the X-ray absorption spectrum of Li$_4$Mn$_2$O, a candidate battery cathode material, leveraging recent advances in tight Trotter error estimation to carefully identify the best version of SPRINT for this problem. Using a Trotter error estimation tool developed in the PennyLane software platform, we show that SPRINT reduces the Toffoli gate cost by a factor of $4.5$ relative to the previous state of the art for this problem, with a gate cost only $\times 2.5$ higher than qubitization, while requiring a dramatic $\times 5.5$ fewer logical qubits. These results establish well-designed Trotter product formulas as an attractive Hamiltonian simulation method for industrially relevant problems in chemistry and materials science.
\end{abstract}

 \maketitle

\section{Introduction}

Hamiltonian simulation is the core subroutine of the majority of quantum algorithms with applications to chemistry and materials science. The two main families of simulation methods -- Trotter product formulas and qubitization -- have complementary strengths. Trotter formulas require few auxiliary qubits but exhibit polynomial scaling with precision~\cite{childs2021theory}. Qubitization achieves polylogarithmic precision scaling, but at the cost of significant qubit overhead \cite{low2019hamiltonian,low2024trading}. Trotter methods offer a significant qubit advantage that makes them well-suited for early generations of fault-tolerant hardware. Yet they have remained comparatively underexplored in the context of concrete resource estimation for industrially relevant problems, with most studies still relying on the basic Trotter-Suzuki formulas \cite{childs2021theory} or qubitization. This is in large part because the analytical bounds available for Trotter error are overly loose, artificially inflating constant-factor resource estimates and making product formulas appear unattractive in practice \cite{childs2021theory,casas2026error}.

In this manuscript, we introduce Symmetry-Protected Randomized near-Integrable Trotter formulas (SPRINT), a framework for crafting state-of-the-art, problem-specific Trotter formulas. This framework combines several leading techniques for product formulas: a novel factorization scheme, a generalization of the near-integrability concept from the classical literature, randomization, processing, and symmetry protection. The techniques are combined in a mutually complementary way that addresses their respective weaknesses, while amplifying their strengths. The general structure of a product formula built with this framework is depicted in~\cref{fig:SPRINT_product_formula}. Leveraging recent advances in tight Trotter error estimation, we quantitatively identify the optimal combination of SPRINT techniques for a given problem. This delivers low cost at acceptable error, yielding resource counts that appear attractive in practice. Unlike in prior works, for certain problems we find product formulas are actually competitive with qubitized approaches in terms of Toffoli  counts, while maintaining Trotter's significant qubit advantage.

The starting point of SPRINT is the factorization of the electronic Hamiltonian into fast-forwardable fragments. Here we introduce the Generalized Rank Decomposition (GRADE,~\cref{ssec:GRADE}), which smoothly interpolates between compressed double factorization (CDF) and isometric tensor hypercontraction (THC), the leading rank factorization methods in Trotter simulation of electronic Hamiltonians \cite{cohn2021quantum,luo2025efficient}. GRADE splits the Hamiltonian into fragments whose norms span several orders of magnitude, creating a hierarchical structure that can be exploited by the product formula. While GRADE can reduce per-step gate counts, we find the use of auxiliary orbitals increases the simulation error, making it less cost efficient than standard compressed double factorization. Nevertheless, the norm hierarchy that GRADE reveals is the key enabler of the subsequent techniques.

\begin{figure*}
\centering
\scalebox{0.85}{
$\begin{quantikz}
  \lstick{$\ket{0}_{\text{aux}}$} & \gate[3, style={dashed, draw=sprintGrey, fill=sprintGrey!20}]{e^{+P}} & \gate[style={draw=sprintGreen, fill=sprintGreen!20}]{e^{+i (N_e - N_{\text{ph}})\phi}}
  \gategroup[3,steps=5,style={dotted, inner ysep=22pt}, label style={above=-6pt}]{Kernel $e^{iK}$}
  & \gate[3, style={draw=sprintBlue, fill=sprintBlue!20}]{\parbox[c][1.5cm]{2.2cm}{\centering $U_{1, B}(\tau)$ \\[4pt] 1st order \\[4pt] (Randomized)}}
  \gategroup[3,steps=3,style={dotted, inner sep=6pt}]{$\tilde{U}_{2,B+C}(\tau)$}
  & \gate[3, style={draw=sprintGold, fill=sprintGold!20}]{\parbox[c][1.5cm]{2.2cm}{\centering qDRIFT \\ RTE \\ BLISS \\ STAR}}
  & \gate[3, style={draw=sprintBlue, fill=sprintBlue!20}]{\parbox[c][1.5cm]{2.2cm}{\centering $U_{1, B}^\dagger(-\tau)$ \\[4pt] 1st order reversed\\[4pt] (Randomized)}} & \gate[style={draw=sprintGreen, fill=sprintGreen!20}]{e^{-i (N_e - N_{\text{ph}})\phi}} & \gate[3, style={dashed, draw=sprintGrey, fill=sprintGrey!20}]{e^{-P}} & \\
  \lstick[2]{$\ket{\psi}_{\text{ph}}$} & & \gate[2, style={draw=sprintRed, fill=sprintRed!20}]{\parbox{2cm}{\centering $U_{4, A}^{1/2}(\tau)$ \\[4pt] 4th order\\[4pt] (Randomized)}} & & & & \gate[2, style={draw=sprintRed, fill=sprintRed!20}]{\parbox{2cm}{\centering $U^{1/2}_{4, A}(\tau)$ \\[4pt] 4th order\\[4pt] (Randomized)}} & & \\
& & & & & & & &
\end{quantikz}$
}
  \caption{General structure of a product formula under the SPRINT framework. Here $U_{n,X}(\tau)$ represents a $n$-th order Trotter product formula step of length $\tau$ on the group of fragments $X$. The term ``randomized'' refers to the ordering of the fragments in the product formula. We assume the Hamiltonian is decomposed in three \textit{groups} of fragments $H = H_A + H_B + H_C$ in accordance with their norms (\cref{ssec:GRADE}). 
  %The two first fragments, one coming from the one-electron and another from the two-electron integrals, form the group $H_A$. The rest of the two body fragments are assigned to $H_B$, and we reserve group $H_C$ for the Pauli strings. 
  To take advantage of the different norms, we use a near-integrable product formula (\cref{ssec:Near_integrability}): in this example, we use a fourth order formula for $A$ (\textcolor{sprintRed}{red}) and a second order for $B$ (\textcolor{sprintBlue}{blue}). In the middle of the first order formula we propose simulating the smallest fragments (the Pauli string group $C$) 
  %or even the uncomputation of the Trotter error 
  with even cheaper methods like qDRIFT, RTE, or STAR, optionally enhanced with BLISS to decrease their cost by reducing the 1-norm
 (\textcolor{sprintGold}{yellow}). 
  %These methods may also be used to perform error mitigation. 
  We may also implement symmetry protection (\cref{ssec:Symmetry_protection}) to decrease the leakage error from fragments with $M_\ell > N$ (\textcolor{sprintGreen}{green}). 
  %$N_e$ and $N_{\text{ph}}$ represent the number of electrons in the $M_\ell$ or $N$ ``physical" orbitals, respectively.
  Finally, a processor $P$ (\textcolor{sprintGrey}{grey}) might be used to further decrease the simulation error at an additive constant overhead (\cref{ssec:Processing}). 
  }
  \label{fig:SPRINT_product_formula}
\end{figure*}

This hierarchical norm structure of different fragments can be directly exploited by a primitive we call a \textit{near-integrable} product formula (\cref{ssec:Near_integrability}). This is a concept we generalize from the classical geometric integrator literature for use in quantum computing that exploits energy-scale separation of different fragments. Rather than treating all fragments on equal footing, these formulas apply a high-order formula only to the dominant fragments. The remaining fragments have small norm individually but are numerous, making them collectively costlier; these are handled with a low-order method. These formulas achieve accuracy comparable to a uniformly high-order scheme at a fraction of the cost -- up to $\times 2$ Toffoli savings for the second-order near-integrable formula, or $\times 5$ for the fourth-order formula. A complementary technique called \textit{processing} (\cref{ssec:Processing}) further reduces the simulation error by conjugating the entire time-evolution circuit with a carefully optimized unitary. Processing cancels commutator error terms at constant additive cost \cite{blanes2000processing,blanes2004numerical,blanes2006composition}: because the processor is applied only once at the beginning and end of the full evolution, its overhead is negligible for all but the shortest simulation times.

When the factorization introduces auxiliary orbitals (e.g. in isometric THC), the wavefunction can leak into unphysical subspaces, producing an error that is uncontrolled by the Trotter step size $\tau$. Symmetry protection (\cref{ssec:Symmetry_protection}) addresses this issue by conjugating each Trotter step with phases on the auxiliary qubits, suppressing the leakage error from $O(1)$ to $O(\tau^2)$. 

\begin{figure*}[t]
  \centering
  \includegraphics[width=0.35\textwidth]{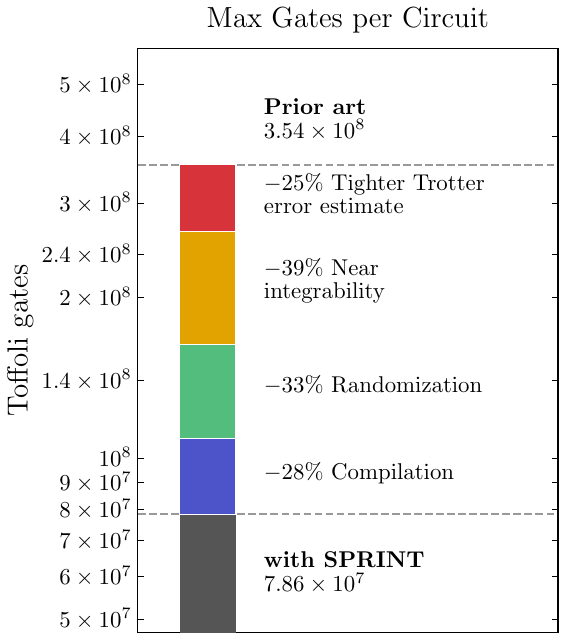}
  \includegraphics[width=0.35\textwidth]{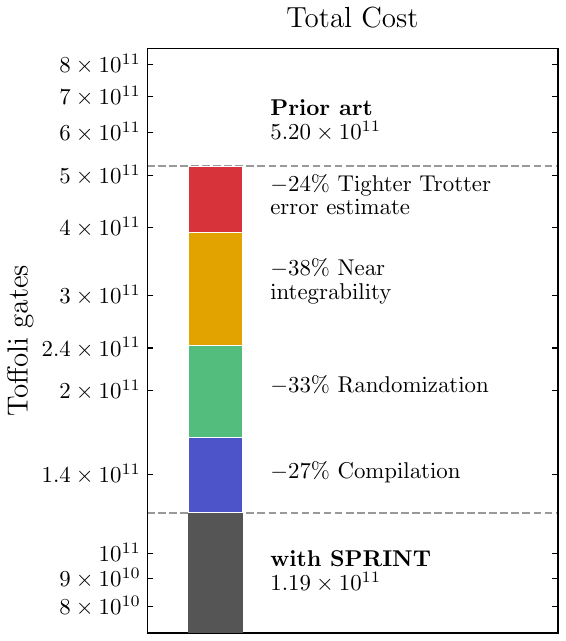}
  \caption{Cost reduction in the problem of X-ray absorption for Li$_4$Mn$_2$O in the active space of $N = 18$ orbitals, with respect to the previous state-of-the-art results in Ref. \cite{fomichev2025fast}. The savings factors are larger for maximum circuit depth because state preparation contributes a constant additive overhead that is unaffected by these optimizations. Overall, the Toffoli gate count is reduced by around $\times 4.5$ times. The specific sources of improvement are explained in detail in \cref{ssec:trotter_error_results}.}
  \label{fig:SPRINT_Towers}
\end{figure*}

The gate cost of the formula may be further reduced by randomization of the order of the fragments in-between Trotter steps (\cref{ssec:randomization}) \cite{childs2019faster}. In addition, random compilation such as qDRIFT \cite{campbell2019random}, randomized Taylor expansion (RTE) \cite{wan2022randomized}, or efficient partially fault tolerant compilation regimes like Space-Time efficient Analog Rotation (STAR) \cite{akahoshi2024partially,akahoshi2025compilation,chung2026partially,kanasugi2026enabling} can also be used to implement any terms remaining after the Hamiltonian factorization. Since the cost of those methods depends on the one-norm of the remainder being implemented, we may exploit symmetry shifts \cite{loaiza2023block} to decrease that 1-norm and thus the cost of the methods. Finally, QROM use (\cref{ssec:QROM}) can allow us to batch the rotation gates within each fragment using quantum read-only memory, also reducing the Toffoli footprint \cite{low2024trading}.

In~\cref{sec:Applications} we apply SPRINT and GRADE to the task of calculating the X-ray absorption spectrum of a CAS(22e, 18o) Li$_4$Mn$_2$O cluster, a system of interest in studies of Li-excess battery materials \cite{fomichev2024simulating,fomichev2025fast}. To estimate the Trotter error and decide on the optimal combination of SPRINT components, we use a new error estimation tool~\cite{Xanadu_Trotter_error_paper}. On this system the main cost saving drivers are near-integrability, randomization, and QROM-based compilation, yielding saving factors of approximately $\times 1.6$, $\times 1.5$, and $\times 1.4$, respectively, relative to prior state of the art \cite{fomichev2025fast}, see~\cref{fig:SPRINT_Towers}. The overall SPRINT pipeline delivers a $\times 4.5$ Toffoli reduction over Ref. \cite{fomichev2025fast}, and requires only $\times 2.5$ more Toffoli gates than qubitization, while using $\times 5.5$ fewer logical qubits. These results establish Trotter product formulas as a competitive Hamiltonian simulation method for industrially relevant problems in chemistry and materials science.

The rest of this manuscript is organized as follows. In~\cref{sec:Factorization} we present the GRADE factorization method that generalizes previous techniques from the literature. In~\cref{sec:Product_formulas} after presenting the key mathematical tools we use in \cref{ssec:Product_formulas}, we introduce each of the elements of the SPRINT framework shown in \cref{fig:SPRINT_product_formula}, namely near-integrability (\cref{ssec:Near_integrability}), processing (\cref{ssec:Processing}), symmetry protection (\cref{ssec:Symmetry_protection}), randomization (\cref{ssec:randomization}) and QROM use (\cref{ssec:QROM}). In~\cref{sec:building_pf} we provide a general methodology to decide which of these methods to use for a specific problem. Finally, in~\cref{sec:Applications} we evaluate the performance of the new methods on the case of the Li-excess cluster, including a detailed comparison against state-of-the-art qubitization methods.

\section{Factorization of the Hamiltonian\label{sec:Factorization}}

Implementing quantum simulation with Trotter product formulas requires making two key decisions: how to split the Hamiltonian into fast-forwardable fragments, and how to compose the resulting fragment unitaries into a product formula. A fragment is called fast-forwardable or exactly integrable if we can implement the unitary $e^{-iH_j t}$ for any time $t$ using a circuit whose cost is independent of $t$ or the simulation precision, up to polylogarithmic cost. The simplest fast-forwardable fragments are individual Pauli strings, natively implementable as qubit rotations. More structured fragments, however, can unlock significant efficiency gains.

\subsection{Generalized rank decomposition\label{ssec:GRADE}}

Consider the electronic Hamiltonian, given by
\begin{multline}\label{eq:el-ham}
    H = E + \sum_{p,q = 1}^N \sum_{\gamma \in \{\uparrow,\downarrow\}} (p|\kappa|q) a_{p\gamma}^\dagger a_{q\gamma}+ \\ \frac{1}{2}\sum_{p,q,r,s=1}^N\sum_{\gamma,\tau \in \{\uparrow,\downarrow\}}(pq|rs) a_{p\gamma}^\dagger a_{q\gamma} a_{r\tau}^\dagger a_{s\tau},
\end{multline}
where $a_{p\gamma}^{(\dagger)}$ is the annihilation (creation) operator for spatial orbital $p$ and spin $\gamma$, $E$ is the scalar energy offset, $N$ is the number of spatial orbitals, and $(p|\kappa|q)$ and $(pq|rs)$ are the one- and two-electron integrals.
\textbf{G}eneralized \textbf{Ra}nk \textbf{De}composition (GRADE) factorizes the two-electron
integrals in the following way
\begin{equation}\label{eq:GRADE}
 (pq|rs) = \sum_{\ell=1}^L \sum_{k,l = 1}^{M_\ell}
 U^{(\ell)}_{pk} U^{(\ell)}_{qk} Z^{(\ell)}_{kl}
 U^{(\ell)}_{rl} U^{(\ell)}_{sl} + R_{pqrs},
\end{equation}
where $R_{pqrs}$ is the residual two-electron integral tensor capturing what the rank-$L$ factorization leaves behind. We can further approximate
\begin{equation}\label{eq:GRADE_residual}
    \sum_{pqrs}R_{pqrs} a_{p\gamma}^\dagger a_{q\gamma} a_{r\tau}^\dagger a_{s\tau} \approx \sum_i \alpha_i P_i.
\end{equation}
The $\alpha_i P_i$ symbols represent additional weighted Pauli strings approximating the (unfactorized) remainder.
The matrices $U^{(\ell)}$ are used to implement single-particle basis rotations
 \begin{equation}
    a^{(\ell)\dagger}_{k\gamma} = \sum_{p = 1}^{M_\ell} U_{pk}^{(\ell)} a^\dagger_{p\gamma}, \quad a^{(\ell)}_{k\gamma} = \sum_{q=1}^{M_\ell} U_{qk}^{(\ell)} a_{q\gamma};
\end{equation}
 and diagonalize the fragments in the Hamiltonian.
 The $Z^{(\ell)}$ symmetric matrices are the coefficients for $\sigma_{z,k\gamma} \otimes \sigma_{z,l\tau}$ Pauli strings that appear when we construct the $\ell$-indexed fast-forwardable fragments. Performing these basis rotations using the ansatz, we rewrite the Hamiltonian in the following form, as shown previously in Ref. \cite[App. A]{fomichev2025fast}
\begin{multline}\label{eq:GRADE_Hamiltonian}
    H  \approx \left(E + \sum_k Z_k^{(0)} -\frac{1}{2}\sum_{\ell,kl}Z_{kl}^{(\ell)}+\frac{1}{4} \sum_{\ell,k} Z^{(\ell)}_{kk} \right)\bm{1}\\ 
    +\sum_{\ell \geq 1} \bm{U}^{(\ell)} \Bigg[\frac{1}{8}\sum_{(k,\gamma)\neq(l,\tau)}  \left( Z^{(\ell)}_{kl} \sigma_{z,k\gamma} \sigma_{z,l\tau}\right)  \\
    - \frac{1}{2}\sum_{k,\gamma}\left(\sum_l Z^{(\ell)}_{kl} \sigma_{z,k\gamma}\right)\Bigg](\bm{U}^{(\ell)})^T\\
    -\frac{1}{2}\bm{U}^{(0)} \left[\sum_k Z^{(0)}_{k} \sum_\gamma  \sigma_{z,k\gamma} \right](\bm{U}^{(0)})^T
    + \sum_i\alpha_i P_i.
\end{multline}
There is one key difference with respect to
Ref. \cite{fomichev2025fast}. In Ref. \cite{fomichev2025fast} the authors absorb the one-body contribution of each two-body fragment in the one-body fragment. In GRADE, that would lead to a one-body fragment acting on $\max_{\ell }M_{\ell}$ qubits. Instead, we keep each correction within its corresponding two-body fragment, where the correction terms are already diagonal. Thus, we explicitly include $1/2\sum_l Z^{(\ell)}_{kl} \sigma_{z,k\gamma}$. This leads to $M_\ell$ additional $\sigma_z$ rotations in each fragment, beyond the $\sigma_z\otimes \sigma_z$, but keeps the one-body fragment acting on $N$ qubits only.

The unitaries $\bm{U}^{(\ell)}$ appearing in the Hamiltonian are full Hilbert space rotations constructed from the single-particle basis changes $U^{(\ell)}$ using Thouless's theorem \cite{kivlichan2018quantum,thouless1960stability},
\begin{align}\label{eq:basis_rotation}
    \bm{U}^{(\ell)} = \exp\left(\sum_{p,q}[\log U^{(\ell)}]_{pq} (a^\dagger_p a_q -a^\dagger_q a_p) \right),
\end{align}
and implemented via Givens rotations \cite{arrazola2022universal},
\begin{equation}\label{eq:Givens}
    G(\theta) 
=  \exp\left( i \frac{\theta}{2} (X \otimes Y - Y \otimes X) \right).
\end{equation}
Time evolution under the rank factorized fragments can be performed with the circuit shown in~\cref{fig:CDF_circuit}, for more details, see Ref. \cite{fomichev2025fast}. There is one final subtle difference with respect to prior literature \cite{fomichev2025fast}: in CDF, it is common to merge two subsequent basis rotations \cite{cohn2021quantum,fomichev2025fast}; on the other hand, isometric THC adopts the basis of the one-body term and only implements the partial basis rotations of the two-body term \cite{luo2025efficient}. Since GRADE interpolates between both regimes, we must choose between two strategies: explicitly implementing the partial basis rotations $U^{(\ell)}$ and $(U^{(\ell)})^\dagger$ without merging nearby basis rotations, or implementing a single full basis rotation on the $M_\ell$ qubits that accounts for the product $U^{(\ell)}(U^{(\ell+1)})^\dagger$. Overall, we have generally found that it is cheaper to use the single full basis rotation.

Overall, the GRADE factorization framework unifies several factorization schemes for electronic Hamiltonians: 
\begin{enumerate}
\item direct qubit mappings, such as Jordan-Wigner or Bravyi-Kitaev, when $L=0$; 
\item CDF when $M_\ell=N$ and $\alpha_i=0$ \cite{cohn2021quantum, motta2021low, oumarou2024accelerating},
\item isometric THC for $L=1$ and $\alpha_i=0$ \cite{luo2025efficient}.
\end{enumerate}
Since standard and isometric THC are equivalent \cite{luo2025efficient}, and THC is DFTHC with parameters $(R, B, C) = (1, R_{THC}, R_{THC})$ \cite{low2025fast}, GRADE can also be seen as an extension of the DFTHC approach that incorporates Pauli strings and fragment-specific $M_\ell$ parameters. The $R, B$ DFTHC parameters are mapped to $L, M_\ell$ in GRADE, while $C$ is left unconstrained.

\begin{figure}[t]
    \centering
    \begin{quantikz}[row sep=0.7em, column sep=0.6em]
  \ket{\cdot}_{\uparrow}  & \gate{(\bm{U}^{(0)})^T} & \gate[wires=2]{\prod\limits_{k,\gamma} e^{\frac{-i}{2} Z^{(0)}_{kk}\sigma_{z,k\gamma}}} & \gate{\bm{U}^{(0)}} & \qw \\
  \ket{\cdot}_{\downarrow}  & \gate{(\bm{U}^{(0)})^T} & & \gate{\bm{U}^{(0)}} & \qw
    \end{quantikz}
    
    \vspace{0.5cm}
    
\begin{quantikz}[row sep={4.em,between origins}, column sep=0.6em]
\ket{\cdot}_{\uparrow} & \gate{(\bm{U}^{(\ell)})^T } & \gate[wires=2]{
 \begin{aligned}
 &\displaystyle\prod_{(k,\gamma)\neq(l,\tau)} e^{\frac{i}{8} Z^{(\ell)}_{kl}\,\sigma_{z,k\gamma}\,\sigma_{z,l\tau}}\\[6pt]
 &\displaystyle\times \prod_{k,\gamma} e^{-\frac{i}{2}\!\left(\sum_l Z^{(\ell)}_{kl}\right)\sigma_{z,k\gamma}}
 \end{aligned}
 } & \gate{\bm{U}^{(\ell)}} & \qw & \cdots \\
\ket{\cdot}_{\downarrow} & \gate{(\bm{U}^{(\ell)})^T } & & \gate{\bm{U}^{(\ell)}} & \qw & \cdots
\end{quantikz}
    
    \caption{Circuits implementing the one-body fragments (top) and two-body fragments (bottom) arising in the factorization of the electronic Hamiltonian \cite{cohn2021quantum}. The register is split into two wire groups corresponding to the spin up and spin down sectors. Here $Z^{(\ell)}$ are symmetric matrices, $\sigma_{z,k\gamma}$ is the Pauli $Z$ operator acting on the $(k,\gamma)$-th spin orbital, and $\bm{U}^{(\ell)}$ are the unitaries transforming the system register according to the effect of the corresponding single-particle basis rotation $U^{(\ell)}$ in~\cref{eq:GRADE}.}
    \label{fig:CDF_circuit}
\end{figure}

\subsection{Computing a GRADE factorization}
\label{ssec:grade-compute}

To compute a GRADE factorization for a given Hamiltonian, we first select the hyperparameters $L$, $M_\ell$, a Pauli-string threshold $\bar{t}$ (i.e. the threshold above which elements of the remainder $R_{pqrs}$ are still converted to Pauli strings, rather than  discarded) and $b$ bits of precision for the rotations. To start, in practice we typically set the size of the first two-body fragment $M_{\ell=1} = N$, and then slowly increase the $M_{\ell}$ value with $\ell$ as we optimize fragment-by-fragment. Ideally, we would choose $O(\log N)$ fragments of increasing size $M_\ell$ to achieve the same cost scaling as isometric THC, $O(N^2)$ per Trotter step. Typical values for the accuracy $b$ range between $10$ (approximately a rotation error of $\epsilon = 10^{-3}$) and $20$ (approximately $\epsilon = 10^{-6}$). The Pauli-string threshold $\bar{t}$ is often chosen in the $10^{-3}$ to $10^{-1}$ range, if used. We found that these parameter ranges for $b$ and $\bar{t}$ were sufficient for the approximately 1 eV spectral precision requirements of the application we discuss in \cref{sec:Applications}, X-ray absorption spectroscopy \cite{fomichev2025fast}.

Using a slowly increasing $M_\ell$ that starts at $M_{\ell = 1} = N$ and fitting each fragment to the residual left by the previous one has two key advantages. First, it makes larger-$\ell$ fragments act as smaller corrections, see~\cref{fig:Zs} and \cite[Fig. 7]{gunther2025phase}. This choice creates a hierarchical structure within the Hamiltonian that we explicitly exploit in our product formula design.
Second, having the leading fragments have low $M_\ell$ helps minimize the effect of the so-called \textit{leakage error}. The leakage error, first identified in the isometric THC method \cite{luo2025efficient}, arises when the wavefunction `leaks' to qubits representing auxiliary orbitals $M_\ell >N$ that were introduced in the course of factorizing the Hamiltonian. Because our $M_\ell$ schedule restricts large-norm early fragments to small $M_\ell$, leakage occurs mostly through the smaller-norm later fragments, minimizing its overall impact.

\begin{figure}[t]
  \centering
  \includegraphics[width=\linewidth]{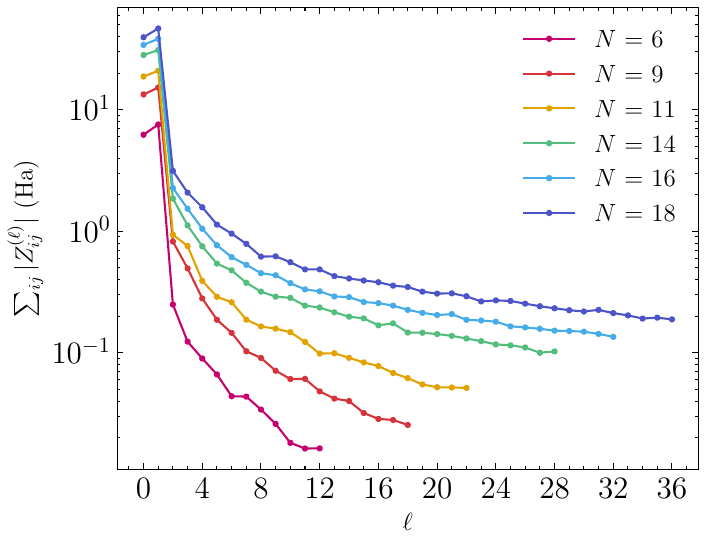}
  \caption{Frobenius norm of different fragments in the CDF decomposition for different active space sizes $N$ of Li$_4$Mn$_2$O. The term $\ell = 0$ denotes the one-body term.}
  \label{fig:Zs}
\end{figure}

Once the hyperparameters are chosen, the coefficients $U^{(\ell)}$ and $Z^{(\ell)}$ are numerically fitted, fragment by fragment, to the residual of the two-electron integrals tensor, under the accuracy restriction of $b$ bits for the rotations. Any residual terms with coefficients whose absolute value is larger than $\bar{t}$ are retained as explicit Pauli strings. More concretely, each fragment is numerically fitted using gradient descent to the residual left by previous fragments. Explicitly, we minimize
\begin{multline}
    \min_{U^{(\ell)}, Z^{(\ell)}}\left\|(pq|rs) - \sum_{\ell=1}^{L} \sum_{k,l = 1}^{M_\ell} U^{(\ell)}_{pk} U^{(\ell)}_{qk} Z^{(\ell)}_{kl} U^{(\ell)}_{rl} U^{(\ell)}_{sl}\right\|.
\end{multline}
If $M_{\ell} = N$, we parametrize
\begin{equation}
    U^{(\ell)} = \exp((X^{(\ell)}-X^{(\ell),T})/2),
\end{equation}
and impose symmetric matrices $(Z^{(\ell)}+Z^{(\ell),T})/2$ \cite{cohn2021quantum}. We then run gradient descent on parametric real matrices $X^{(\ell)}$ and $Z^{(\ell)}$, initialized uniformly at random or with the parameters of the explicit Double Factorization algorithm \cite{von2021quantum}. If instead $M_\ell > N$, we use the procedure described in Ref. \cite{luo2025efficient}. First, numerically find a THC decomposition of the residual \cite{lee2021even,caesura2025faster}; second, isometrize the rotations, dividing the partial rotations by an appropriate scalar; and third, numerically optimize the $V^{(\ell)}$ and $Z^{(\ell)}$ parameters via gradient descent.  The isometry $V^{(\ell)}$ corresponding to a partial basis rotation $U^{(\ell)}$ is defined as a mapping between vector spaces that preserves the inner product, $V^{(\ell)\dagger} V^{(\ell)} = 1$. The three-step method above results in approximate isometries, but only exact isometries are implementable in a quantum computer. To address this problem, we implement a singular value decomposition (SVD) of the isometry, project the singular values to the closest scalar of absolute value 1, and reconstruct the closest exact isometry. After the isometry has been set, we compute a QR decomposition of $V^{(\ell)}$ and recover the full basis rotation matrix $U^{(\ell)}$ \cite{kivlichan2018quantum}.

Finally, once a factorization is obtained, it needs to be tested for how well it is capturing the key spectral properties of the Hamiltonian. In previous works focused on ground-state energy problems, the standard approach was to take the final, optimized $U^{(\ell)}$ and $Z^{(\ell)}$ matrices, reconstitute the fermionic Hamiltonian, and perform a ground-state energy calculation using a classically tractable method such as coupled cluster. If the result of this calculation is close to that obtained by applying coupled cluster to the original Hamiltonian, the factorization is considered high-quality. In our case, since we are interested in Hamiltonian excited states and spectra more broadly, we will evaluate several (typically 10) of the lowest-lying energies of the factorized Hamiltonian using density matrix renormalization group (DMRG), and compare them with those of the original Hamiltonian. If the average of the deviations exceeds our pre-specified cutoff -- typically 1 eV for the XAS application \cite{fomichev2025fast} -- we restart the factorization with a different set of hyperparameters. We iterate this outer loop until we obtain a factorization that meets our energy deviation cutoff.

More specifically, once we have obtained all of the $Z^{(\ell)}$ and $U^{(\ell)}$ matrices for each fragment, we can map back the Pauli operator to a fermionic operator
\begin{multline}\label{eq:el-ham_expanded}
    H = E + \sum_{p,q = 1}^{\max M_\ell} \sum_{\gamma \in \{\uparrow,\downarrow\}} (p|\kappa|q) a_{p\gamma}^\dagger a_{q\gamma}+ \\
    \frac{1}{2}\sum_{p,q,r,s=1}^{\max M_\ell}\sum_{\gamma,\tau \in \{\uparrow,\downarrow\}}(pq|rs) a_{p\gamma}^\dagger a_{q\gamma} a_{r\tau}^\dagger a_{s\tau},
\end{multline}
whose low-lying eigenstates we can then compute using classical methods like DMRG. Doing the same for the original, non-factorized Hamiltonian and comparing the results allows us to conclude whether or not the factorization is faithfully preserving the spectral properties of the original Hamiltonian. 

\subsection{Leakage error}
\label{ssec:grade-leakage}

As a final note, we highlight that the GRADE-factorized Hamiltonian evidently has new terms that were not present in the original Hamiltonian. These are terms that exchange electrons between the physical and ``auxiliary orbitals'' -- orbitals corresponding to qubits that arise in the factorization in the case of $M_\ell > N$. We will call such terms `leakage' terms in the Hamiltonian: this introduces a new source of error to the simulation that must be carefully controlled (see \cref{ssec:Symmetry_protection}). In this sense, the formulation of isometric THC and GRADE differs from the description typically used in the THC qubitization literature \cite{lee2021even,caesura2025faster,luo2025efficient}. In the standard formulation of THC, the two body fragment of the Hamiltonian is decomposed as
\begin{equation}
    (pq|rs) = \sum_{\mu,\nu=1}^{M} \chi_p^{(\mu)} \chi_q^{(\mu)} \zeta_{\mu\nu} \chi_r^{(\nu)} \chi_s^{(\nu)}.
\end{equation}
The two body integrals tensor is described as a linear combination of $M^2$ unitaries, running over indices $\mu$ and $\nu$. Implementing each unitary requires $O(N)$ gates, which would naively suggest an overall $O(M^2N) = O(N^3)$ gate cost. However, qubitization relies on QROM methods to efficiently prepare such a linear combination at a $O(MN)$ per-qubitization-step cost. In contrast, Trotter formulas do not generate a linear combination of unitaries; rather, they rely on the product of exponentials of individual fragments. Consequently, a naive Trotter-based THC strategy would require $O(M^2N) = O(N^3)$ gates. To overcome this disadvantage, isometric THC creates a single fragment with a global basis rotation \cite{luo2025efficient}. This reduces the per-step gate complexity to $O(M^2)$, at the cost of requiring $2(M-N)$ auxiliary qubits.

\section{Designing product formulas\label{sec:Product_formulas}}

In this section we introduce each element of the SPRINT framework of \cref{fig:SPRINT_product_formula}. We will first describe the main tools and notation we will use in our analysis, in~\cref{ssec:Product_formulas}. In~\cref{ssec:Near_integrability} and~\cref{ssec:Processing} we will explain how to use such tools to design what we are calling \textit{near-integrable} and \textit{processed} product formulas adapted to the structure of the electronic Hamiltonian such as that generated by GRADE.  In~\cref{ssec:Symmetry_protection} we will analyze \textit{symmetry protection} as a way to address the leakage error. In~\cref{ssec:randomization} we describe how \textit{randomization} of the fragment ordering within a product formula may be used to further reduce the cost. We end with~\cref{ssec:QROM}, where we introduce a way to leverage QROM to reduce the Toffoli gate count of implementing Trotter steps.

\subsection{Product formulas\label{ssec:Product_formulas}}

The goal of this paper is to explain how to implement Hamiltonian simulation, or in other words, implement the operator $e^{-i \tau H}$.
Product formulas are one way to do this -- by what is essentially a divide-and-conquer approach. Since we cannot, in general, easily exponentiate $H$ directly, we split $H$ into easily exponentiable (also known as fast-forwardable) fragments $H_j$, $H = \sum_j H_j$. In our case, the $H_j$ are either the $\ell$-indexed factorized fragments from \cref{sec:Factorization} or the individual Pauli strings in~\cref{eq:GRADE_Hamiltonian}. 
Then, we compose a product formula $U(\tau)$ as the product of the evolution under different fragments: in general, it takes the form
\begin{equation}\label{eq:product_formula}
    e^{-i \tau H}\approx U(\tau) = \prod_k \exp(-i \tau a_k H_k).
\end{equation}
Each $H_j$ may appear multiple times in the product formula step. The number and order of terms and the coefficients $a_k$ control the accuracy of the approximation to $e^{-i \tau H}$. The main task in product formula design is to find the right fragment order $H_k$ and coefficients $a_k$.

\textbf{Baker-Campbell-Hausdorff expansion:} Our main tool to \textit{design} product formulas will be the Baker-Campbell-Hausdorff (BCH) expansion, which, for any two operators $X$ and $Y$, computes $Z$ such that $e^X e^Y = e^Z$ \cite{baker1905alternants,campbell1896law,hausdorff1906symbolische}:
\begin{multline}\label{eq:BCH}
    Z = X + Y + \frac{1}{2}[X, Y]\\
    + \frac{1}{12}\left([X, [X,Y]] + [Y, [Y, X]]\right) + \ldots
\end{multline} 
A key implication of the BCH expansion is that \textit{product formulas implement exact Hamiltonian simulation of an approximate Hamiltonian} (this property is sometimes called \emph{symplecticity} \cite{blanes2008splitting,blanes2024splitting}). We will show this explicitly below. Given this, our priority is to understand what effective Hamiltonians are generated by different product formulas. That effective Hamiltonian will determine everything from the Trotter error to the leakage error to the implementation cost.

We can compute the effective Hamiltonian generated by a product formula by recursively applying the BCH formula to the exponentials and keeping only the lowest order terms. For example, the first order formula, also called Lie-Trotter, is just a direct product of the exponentials of individual fragments \cite{trotter1959product}
\begin{equation}\label{eq:U1}
    U_{1}(\tau) = \prod_j \exp(-i H_j \tau)
\end{equation}
Using the BCH expansion, we can write the effective Hamiltonian implemented by a first order formula
\begin{multline}\label{eq:U1_eff_Ham}
    U_{1}(\tau) = \exp(-i \tau H + (-i\tau)^2 Y^{(1)}_{2} + (-i\tau)^3 Y^{(1)}_{3}\ldots),
\end{multline}
where
\begin{align}\label{eq:Y_2^1}
    Y^{(1)}_{2} &= \frac{\sum_k\left[H_k, \sum_{j>k} H_j\right]}{2},\\
    \label{eq:Y_3^1}
    Y^{(1)}_{3} &= \frac{\sum_k\left[H_k,\left[H_k, \sum_{j>k} H_j\right]\right]}{12}\\
    &+ \frac{\sum_k\left[\sum_{j>k}H_j,\left[\sum_{j>k}H_j, H_k\right]\right]}{12}.\nonumber
\end{align}

\textbf{Symmetric BCH expansion:}
While using the BCH expansion in~\cref{eq:BCH} is in principle sufficient to analyze any product formula like \cref{eq:product_formula}, we will find it very useful to leverage the \textit{symmetric} version of the BCH, which computes $Z$ such that $e^{X/2} e^Y e^{X/2} = e^Z$ \cite{casas2009efficient}: 
\begin{equation}\label{eq:symmBCH}
    Z = X + Y  - \frac{1}{24}[X, [X,Y]] -  \frac{1}{12}[Y, [X, Y]] + \ldots
\end{equation}
The symmetric version is helpful in analyzing symmetric product formulas ``from the middle outwards'': this is essential for all product formulas beyond the first-order one, and especially the near-integrable formulas that we introduce in this manuscript. As an example, consider the second-order Trotter formula, also known as the Strang or St\"{o}rmer--Verlet splitting~\cite{strang1968construction,stormer1907trajectoires,verlet1967computer}. This formula takes half-steps of duration~$\tau/2$ and is symmetric: it applies all fragment exponentials first in ascending and then in descending order:
\begin{equation}\label{eq:U2}
    U_{2}(\tau) = \underbrace{\prod_{j=1}^L \exp(-i H_j \tau/2)}_{U_{1}(\tau/2)} \times \underbrace{\prod_{j=L}^1 \exp(-i H_j \tau/2)}_{U^\dagger_{1}(-\tau/2)}
\end{equation}
Applying the symmetric BCH expansion, we find
\begin{equation}
    U_{2}(\tau) = \exp(-i \tau H + (-i\tau)^3 Y^{(2)}_{3} + (-i\tau)^5 Y^{(2)}_{5}\ldots),
\end{equation}
where
\begin{multline}\label{eq:Y_3^2}
 Y_3^{(2)}:=-\sum_{j} \Bigg[\frac{[H_j,[H_j, \sum_{k>j}H_k]]}{24}\\ + \frac{[\sum_{k>j}H_k,[H_j, \sum_{k>j}H_k]]}{12} \Bigg].
\end{multline}
A key fact to notice about this product formula is that even orders of $\tau$ in the effective Hamiltonian of~\cref{eq:U2} cancel out because the product formula is symmetric, $U_{2}(\tau) = U^\dagger_{2}(-\tau)$. By imposing this symmetry, which has a moderate cost, we automatically cancel not just the leading order error, but all even order error terms.

\textbf{Higher order product formulas:}
Finally, before we move to the near-integrable derivation, for completeness we also give a brief summary of higher order formulas. A product formula is said to achieve \textit{order $k$ if the leading order contribution in the effective Hamiltonian is $O(\tau^k)$}. For example, the first order formula effective Hamiltonian has leading error at $O(\tau)$, while the second order formula effective Hamiltonian has leading error $O(\tau^2)$.
We will denote such a $k$-th order product formula by $U_{k,X}(\tau)$ where $X$ designates a group of Hamiltonian fragments the product formula is being applied to. The Suzuki hierarchy provides a systematic way of constructing product formulas of arbitrary even order \cite{suzuki1990fractal,suzuki1991general}
\begin{equation}\label{eq:U_2k}
    U_{2k}(\tau) = U_{2k-2}^2(u_k\tau) U_{2k-2}((1-4u_k)\tau) U^2_{2k-2}(u_k\tau).
\end{equation}
The scalar $u_k$ is selected as
\begin{equation}~\label{eq:Suzuki_constant}
u_k = 1/(4-4^{1/(2k-1)}).
\end{equation}
This choice cancels the $O(\tau^{2k-2})$ error, as shown in~\cref{prop:Suzuki} in~\cref{app:proofs}.
In general, the standard Suzuki hierarchy product formula of order $k$ has the effective Hamiltonian
\begin{multline}
    U_{2k}(\tau) = \\
    \exp\left(-i \tau \underbrace{\left(H + \sum_l (-i\tau)^{2(k+l)}Y^{(2k)}_{2(k+l)+1}+\ldots\right)}_{\text{effective Hamiltonian}}\right),
\end{multline}
where $Y^{(2k)}_{2(k+l)+1}$ represents a linear combination of $[2(k+l)+1]$-nested commutators.\\

\subsection{Near-integrability\label{ssec:Near_integrability}}

With the key analytical tools at hand, we design so-called \textit{near-integrable} product formulas -- product formulas that exploit how different fragments contribute unequally to the Trotter error. 

\textbf{Intuition:} As discussed in~\cref{ssec:GRADE} and shown in~\cref{fig:Zs}, once factorized, the one-body and the first two-body fragment of the Hamiltonian will typically have norms significantly larger than the norms of the long tail of smaller fragments. To make this distinction explicit, we will write
\begin{equation}
    H_A = H_0 + H_1\qquad H_B = \alpha \sum_{\ell>1} H_\ell,
\end{equation}
with $\alpha \ll 1$ the parameter that captures the energy scale difference, and $\ell = 0$ the one-body fragment.

If the norms of the fragments in the $H_A$ group are larger than the norms of the fragments in $H_B$, which is captured by $\alpha\ll 1$, then nested commutators that have more factors of $H_0$ or $H_1$ will typically contribute more to the total error.
For example, it is expected that $[H_0, [H_0, H_1]]$ will usually contribute much more to the total Trotter error than $\alpha^3[H_2, [H_2, H_3]]$. Thus, it will typically make sense to use higher order methods for $H_A$ than for $H_B$.

\textbf{Near-integrable formulas:} 
Here we introduce two near-integrable formulas we derived for the characteristic Hamiltonian norm structure shown in \cref{fig:Zs}: we will explicitly apply these formulas to a concrete problem in~\cref{sec:Applications}. The first near-integrable product formula we design is the `second-order' near-integrable formula
\begin{equation}\label{eq:V_21}
    V_{2,1}(\tau) = U_{1,A}(\tau/2)\times U_{1,B}(\tau) \times  U_{1,A}^\dagger(-\tau/2),
\end{equation}
where $U_{n, A}$ is the $n$-th order Trotter product formula applied to fragment group $A$. The subindex 2,1 in $V_{2,1}(\tau)$ makes reference to the order of the product formula steps used: first order for $H_B$, and second order for $H_A$ (a second order formula is just two first-order formulas back to back, $U_2(\tau) = U_1(\tau/2) U_1^\dagger(-\tau/2)$). This generates an effective Hamiltonian satisfying
\begin{multline}\label{eq:V_21_Hamiltonian}
  V_{2,1}(\tau) = \exp\!\bigg(-i\tau H + (-i\tau)^2\alpha^2 Y_{2,B}^{(1)} + (-i\tau)^3 Y_{3,A}^{(2)} \\
  + \frac{(-i\tau)^3\alpha}{4}[Y_{2,A}^{(1)},\,H_B] - \frac{(-i\tau)^3\alpha}{24}[H_A,[H_A,H_B]] \\- \frac{(-i\tau)^3\alpha^2}{12}[H_B,[H_A,H_B]] + \ldots\bigg),
\end{multline}
as can be seen directly from applying the symmetric BCH expansion to the $V_{2,1}$ ansatz. Here, $Y_{n,X}^{(p)}$ represent the linear combination of $n$-nested commutators of the $p$-th order formula of the terms in $H_X$, see~\cref{eq:Y_2^1,eq:Y_3^1,eq:Y_3^2} for $Y_2^{(1)}, Y_3^{(1)}$ and $Y_3^{(2)}$ respectively. 
 
 Note the advantage of using this near-integrable formula. Comparing to the Trotter second order formula $U_2(\tau)$, where the leading error term scales as $O(\tau^3)$. In regimes where $\tau^2 \alpha^2 < \tau^3$, the product formula design of $V_{2,1}$ achieves error comparable to the second-order formula but at a cost resembling that of a first-order formula. This is because most of the costly-to-implement fragments are in $U_{1,B}$, while the two main Trotter-error-contributing fragments are in the more tightly-error controlling $U_{2,A}$. 

The second product formula we design is a `fourth-order' near-integrable product formula, given by
\begin{equation}\label{eq:V_42}
  V_{4,2}(\tau) = U_{4,A}(\tau/2)\times U_{2,B}(\tau)\times U_{4,A}(\tau/2),
\end{equation}
which generates an effective Hamiltonian satisfying
\begin{multline}\label{eq:V_42_Hamiltonian}
  V_{4,2}(\tau) = \exp\bigg(-i \tau H 
  + (-i\tau)^3 \alpha^3 Y_{3,B}^{(2)} + 2(-i\tau/2)^5 Y_{5,A}^{(4)} \\
  - (-i\tau)^3 \alpha^2 \frac{[H_B, [H_A, H_B]]}{12}
  - (-i\tau)^3 \alpha \frac{[H_A, [H_A, H_B]]}{24} + \ldots\bigg)\\
  = \exp\!\left(-i\tau H + O(\tau^5) + O(\alpha\tau^3)\right),
\end{multline}
as shown in~\cref{prop:near_integrable_error}. We provide an expression for $Y^{(4)}_5$ in~\cref{Y^{4}_5}, while the $Y_3^{(2)}$ term is as in \cref{eq:Y_3^2}. This near-integrable formula should be compared to the Suzuki fourth order formula $U_4(\tau)$, where the leading error term scales as $O(\tau^5)$. Similar to the case of $V_{2,1}$, what the product formula design of $V_{4,2}$ achieves is error comparable to the fourth order formula, but with the cost resembling that of a second order formula. 

These product formulas are not the only possible design choices. For example, for the second order near-integrable formula, one may instead consider
\begin{equation}
    \tilde{V}_{2,1}(\tau) = U_{2,A}(\tau/2) \times U_{1,B}(\tau) \times U_{2,A}(\tau/2).
\end{equation}
instead of~\cref{eq:V_21}. This further halves the $O(\tau^3)$ error associated to the $H_A$ group of fragments in~\cref{eq:V_21} and  eliminates the error term $(-i\tau)^3\alpha[Y_{2,A}^{(1)},\,H_B]/4$, at the expense of duplicating the cost associated with $U_{2,A}(\tau)$.
Similarly, in~\cref{eq:V_42} one may consider a variation where we replace the fourth order step with $n$ second order steps:
\begin{align}
  \tilde{V}_{2n,2}(\tau) &=  U_{2,A}^n\left(\frac{\tau}{2n}\right) \times U_{2,B}(\tau) \times U_{2,A}^n\left(\frac{\tau}{2n}\right).
 \end{align}
 This decreases the order associated to $H_A$ from $O(\tau^5)$ to $O(\tau^3)$, but also multiplies the $H_A$-associated second order error by a factor of $(1/2n)^2$ compared to the fragments in $H_B$. 
 To decide between these variations, we need to incorporate further information about the specific system we are simulating. We suggest using Trotter error estimation tools to evaluate the contribution of each commutator to the Trotter error, and prioritize the resource allocation accordingly. Crucially, such choices cannot be based on crude upper bounds to the norm, since we already exploited that information. Instead, we suggest estimating the dominant matrix elements in the perturbation expansion of the effective Hamiltonian’s eigenvalues via numerical methods. We show an example of this on the case of the Li-excess cluster calculation in~\cref{sec:Applications}.

\textbf{Crafting near-integrable formulas:} The general procedure to build near-integrable formulas for different desired error orders and fragment norm structures is the following. Given a factorization of the Hamiltonian that naturally splits into a large-norm group $H_A$ and a more numerous small-norm group $H_B$, 
\begin{enumerate}
    \item Evaluate the norm of the different fragments in the Hamiltonian, which will serve as a proxy for their contribution to the error; 
    \item Use the BCH expansion to evaluate the symbolic error terms in the effective Hamiltonian of different product formula ans\"{a}tze;
    \item Adjust parameters $a_k$ in \cref{eq:product_formula} or the ans\"{a}tze to minimize the resulting error, based on the available information about the relative sizes of the norms. 
\end{enumerate}
Overall, we prioritize canceling the coefficients of nested commutators with either a lower $\tau$ scaling, \textit{or larger norm}, as these typically contribute the most to the error.

\textbf{How to estimate norms:} A natural question at this stage is how exactly the norm of a Hamiltonian fragment should be measured. Throughout this manuscript, we will characterize the fragment's contribution to the error in terms of the Frobenius norm of its coefficient matrix $Z^{(\ell)}$: we take $\|Z^{(\ell)}\|_F$ as a cheap proxy for the fragment's contribution to overall Trotter error. 
While $\|Z^{(\ell)}\|_F$ can in principle be used in a formal Trotter error bound via standard commutator inequalities, such a bound is too loose to be practically informative. We therefore use it as an empirical proxy. Nevertheless, as we will see in~\cref{sec:Applications}, empirically we observe it to be a cheap and reliable indicator of Trotter error contribution.

\subsection{Processing\label{ssec:Processing}}

Near-integrability is not the only way to exploit knowledge of the structure of commutator error terms in the effective Hamiltonian. As motivation, consider again \cref{eq:V_42_Hamiltonian}. Ideally, we would like to eliminate the $O(\alpha\tau^3)$ term in the error of $V_{4,2}(\tau)$ without increasing the cost. It turns out it is possible to do so with \textit{processing}, a basis change by an operator $P$ constructed from the evolution under $H_A$ and $H_B$ \cite{blanes2000processing,blanes2004numerical,blanes2024splitting,morales2025selection, blanes2024families}. A key property of the processor is that it incurs only a constant additive implementation cost. If $K$ is the effective Hamiltonian (the kernel) generated by $V_{4,2}(\tau)$, written as $V_{4,2}(\tau) = e^K$, the processor $P$ will only need to be implemented once at the beginning and once at the end of the full time evolution, because the intermediate rotations cancel out:
\begin{equation}
  U = e^{-P}e^K e^P \Rightarrow U^n = e^{-P}(e^K)^n e^P.
\end{equation}
Here $K$ represents the effective Hamiltonian. For example, if we used~\cref{eq:V_42},
\begin{multline}
  K = H 
  - \tau^2 \alpha^3 Y_{3,B}^{(2)} + \left(\frac{\tau}{2}\right)^4 Y_{5,A}^{(4)} \\
  + \tau^2 \alpha^2 \frac{[H_B, [H_A, H_B]]}{12}
  + \tau^2 \alpha \frac{[H_A, [H_A, H_B]]}{24} + \ldots.
\end{multline}
Note that $K$, the effective Hamiltonian, will carry an implicit dependence on the time step $\tau$. Our goal is to figure out which $P$ we can use to make the error in the effective Hamiltonian smaller.

Perhaps the simplest example of a processor is the case of the two-fragment Hamiltonian $H = H_0 + H_1$, where it is possible to use it to convert a first order formula into a second order formula:
\begin{multline}
    e^{+i H_0 \tau/2} U_1(\tau) e^{-i H_0 \tau/2}
    = \\
    e^{+i H_0 \tau/2}(e^{-i H_0 \tau}e^{-i H_1 \tau})e^{-i H_0 \tau/2} =  U_2(\tau).
\end{multline}

In~\cref{prop:processing} we show how to design a processor for the near-integrable formula $V_{4,2}(\tau)$ presented in the previous section, see~\cref{eq:V_42}. Our goal is to remove the $O(\alpha \tau^3)$ nested commutator in the effective Hamiltonian. Processing can also be applied to cancel the same nested commutator in $V_{2,1}(\tau)$ and $\tilde{V}_{2,1}(\tau)$, though in this case there are other terms of the same importance, so the impact could be smaller.

To understand the effect of the processor we use the
Hadamard lemma: for any operators $P$ and $K$, we can expand
\begin{equation}\label{eq:hadamard_lemma}
 e^{P} K e^{-P} = \sum_{j=0}^{\infty} \frac{1}{j!} \underbrace{[P,\ldots[P,}_{j} K]\cdots]
\end{equation}
We state and prove it formally using the Taylor expansion in~\cref{lemma:Hadamard} in~\cref{app:hadamard}.
Applying the Hadamard lemma gives
\begin{equation}\label{eq:processed_Hamiltonian}
 H_{\mathrm{eff}} = K - [P, K] + \frac{1}{2!}[P,[P,K]] + \ldots
\end{equation}
In our case, $K$ is the effective Hamiltonian in~\cref{eq:V_42_Hamiltonian}. In \cref{prop:processing}, we show that if we choose the processor
\begin{equation}\label{eq:processor}
 P = \alpha \tau^2  \frac{[H_A, H_B]}{24},
\end{equation}
it will result in a nested commutator $-[P, K]$ that cancels the $O(\alpha\tau^3)$ error term in~\cref{eq:V_42_Hamiltonian}. Using this processor generates an effective Hamiltonian satisfying
\begin{equation}
 H_{\mathrm{eff}} = \tau(H_A + \alpha H_B) + O(\tau^5) + O(\alpha^2 \tau^3).
\end{equation}
Compared to~\cref{eq:V_42_Hamiltonian}, the processor $P$ not only eliminates the $O(\alpha\tau^3)$ error term, but also halves the coefficient of the $O(\alpha^2\tau^3)$ term. Overall, the processed formula delivers pseudo-fourth-order accuracy at the computational cost of a second-order formula, provided $\alpha \sim \tau$ and $\|H_{A}\| = \|H_B\| = 1$. 

The final question to answer is how to implement the processing basis change in practice. Just like time evolution, evolution under the processor $e^P$ can be approximated by product formulas, some of which are described in Ref. \cite{casas2025approximating}. These product formulas have different order conditions: the $[H_A, H_B]$ prefactor in their effective Hamiltonian should be equal to 1, and the rest of the prefactors -- including the linear order -- should be equal to 0. For example, to implement the processor of~\cref{eq:processor} for $V_{4,2}(\tau)$ we can use
\begin{multline}
e^P = e^{\frac{\alpha\tau^2}{24} [H_A, H_B]} \\
\approx
    e^{+i \frac{\alpha \tau}{\sqrt{24}} H_B} e^{+i \frac{\tau}{\sqrt{24}} H_A}  e^{-i \frac{\alpha \tau}{\sqrt{24}} H_B} e^{-i \frac{\tau}{\sqrt{24}} H_A}.
\end{multline}
Higher order formulas are available in Ref. \cite{casas2025approximating}.
Since it is applied only once at the beginning and end of the simulation, this overhead is negligible for all but the shortest evolution times.

\subsection{Symmetry protection\label{ssec:Symmetry_protection}}
\begin{figure}[t]
    \centering
    \begin{tikzpicture}
        \node[anchor=north] at (9, -1.0) {
        \begin{quantikz}[row sep=1.0em, column sep=1.0em]
          \lstick{$\ket{0}_{\mathrm{aux}}$}
            & \gate[]{e^{+i\phi\, Q_{\text{ph}}}}
            & \gate[3, ]{U(\tau)}
            & \gate[]{e^{-i\phi\, Q_{\text{ph}}}}
            & \qw \\
          \lstick[2]{$\ket{\psi}_{\mathrm{ph}}$}
            & \qw &  & \qw & \qw \\
            & \qw &  & \qw & \qw
        \end{quantikz}
        };
    \end{tikzpicture}
    \caption{Implementation of the symmetry protection. We conjugate the auxiliary qubits in the Trotter step with phases $e^{\pm i \phi Q_{\text{ph}}}$, where $Q_{\text{ph}} = 1-\ket{0}\bra{0}_{\text{aux}}\otimes 1_{\text{ph}}= \sum_{n>0} \ket{n}\bra{n}_{\text{aux}}$ is a projector onto the auxiliary space.}
    \label{fig:symmetry_protection}
\end{figure}

There are three main error sources in the Hamiltonian simulation method discussed so far. First, we have the error from approximation of the Hamiltonian in~\cref{eq:GRADE}, which can be controlled by the factorization optimization parameters. Second, we have the Trotter error, which we are addressing with near-integrability and processing in~\cref{ssec:Near_integrability,ssec:Processing}. Now, we discuss how to mitigate the third error source: the \textit{leakage} of the wavefunction to auxiliary modes whenever $M_\ell > N$.

A priori, a standard product formula gives no control over the leakage error. As noted in~\cref{ssec:grade-compute}, choosing lower $M_\ell$ for the leading fragments confines leakage to the terms that have smaller $Z^{(\ell)}$-norms. Consequently, leakage is primarily driven by smaller-norm fragments, which helps decrease this error somewhat. However, even with this choice, the leakage error is still formally of order $O(1)$. The reason is that the factorized Hamiltonian contains terms coupling the physical and auxiliary subspaces, as discussed in \cref{ssec:grade-leakage}. Time evolution under the factorized Hamiltonian therefore inevitably drives evolution under these leakage terms as well, even though only the physical subspace is meaningful. This leakage error is therefore on the same order as the Hamiltonian itself -- potentially much larger in magnitude than the Trotter error -- and it cannot be controlled by reducing the Trotter time step. To allow us to control and ultimately reduce the leakage error, we propose using symmetry protection.

Symmetry protection addresses the leakage error by conjugating each Trotter step unitary $U$ with phases on the auxiliary orbital qubits, see~\cref{fig:symmetry_protection}. The intuition is that whenever parts of the wavefunction leak into auxiliary orbitals, we will force such components to acquire different phases in different Trotter steps, so over multiple steps they will interfere and cancel out. Specifically, we propose the following symmetry-protected product formula construction for a given product formula $U(\tau)$
\begin{align}
    \label{eq:U_s(tau)}
    U_s(\tau) &= \left(e^{-i \pi Q_{\text{ph}}}\, U(\tau)\, e^{+i \pi Q_{\text{ph}}}\right)\, U(\tau),\\
\label{eq:S(tau)}
    S(\tau) &= U_s^\dagger(-\tau)\, U_s(\tau).
  \end{align}
  Here $Q_\text{ph}$ is a projector into the auxiliary space that we define formally below. We claim that an $S(\tau)$ constructed in this way reduces the order of the leakage error from $O(1)$ to $O(\tau^2)$. The mechanism proceeds in two stages: first, summing over roots of unity cancels the zeroth-order leakage, reducing it to $O(\tau)$; second, symmetrizing the product formula (composing forward and backward cycles) cancels all remaining odd-order terms, yielding $O(\tau^2)$.

This is better seen in the Hamiltonian. We can decompose the effective Hamiltonian into components defined by number of electrons exchanged between the physical and auxiliary orbitals,
\begin{equation}
    H_{\mathrm{eff}} = \sum_{\Delta m} Y_{\Delta m}.
\end{equation}
Here $Y_{\Delta m}$ indicates the part of the Hamiltonian that either preserves ($\Delta m = 0$) or changes ($\Delta m = 1$)  the number of electrons in the auxiliary orbitals. We also define the operator
\begin{equation}
    Q_{\text{ph}} = 1-\ket{0}\bra{0}_{\text{aux}}\otimes 1_{\text{ph}},
\end{equation}
as a projector into the auxiliary orbitals. 
This projector obeys the commutation relation
\begin{equation}
    [Q_{\text{ph}}, Y_{\Delta m}] = (\Delta m)
Y_{\Delta m},
\end{equation}
which can be seen by applying both sides to a state with a constant number of electrons in the auxiliary space. With the projector $Q_\text{ph}$ defined, we conjugate the Trotter step by exponentials of the projector (\cref{eq:U_s(tau)}). Such exponentials of $Q_{\text{ph}}$ can be implemented with a multi-OR gate in the auxiliary orbitals and a single qubit rotation. The multi-OR gate is implemented via multi-controlled NOT.

To see the impact of symmetry protection, we can leverage the fact that the Trotter step is unitary, which means that such phases can be brought `up' directly to the effective Hamiltonian:
\begin{equation}
 e^{-i \phi Q_{\text{ph}}} U(\tau) e^{+i \phi Q_{\text{ph}}}= \exp(e^{-i \phi Q_{\text{ph}}} H_{\text{eff}} e^{+i \phi Q_{\text{ph}}}).
\end{equation}
Using that $[Q_{\text{ph}}, Y_{\Delta m}] = (\Delta m)\, Y_{\Delta m}$ and applying the Hadamard lemma (\cref{lemma:Hadamard}) gives
\begin{multline}
 e^{-i \phi Q_{\text{ph}}} Y_{\Delta m} e^{+i \phi Q_{\text{ph}}} = \sum_j \frac{(-i\phi)^j}{j!} \underbrace{[Q_{\text{ph}}, \ldots [Q_{\text{ph}}, Y_{\Delta m}]]}_{j\text{ nested commutators}}.\\
 = \sum_j \frac{(-i\phi)^j}{j!} (\Delta m)^j Y_{\Delta m}= e^{-i\phi \Delta m} Y_{\Delta m}.
\end{multline}
This expression makes it clear that any term in the Hamiltonian that leaks the wavefunction to auxiliary orbitals (i.e. one with $\Delta m = 1$) will pick up a phase $\phi$.
The strategy is then to select different phases $\phi$ in different Trotter steps, so that we can cancel out, to linear order, the $Y_{\Delta m}$ contributions to time evolution under the Hamiltonian. 
Such angles might be chosen at random, or with a given structure to cancel the most error \cite{tran2021faster}. In~\cref{prop:symmetry_protection} we show that a simple choice of $\phi$ that works in practice is choosing it to be the roots of unity on different Trotter steps. With this choice, and the time-symmetry property of product formulas, we can build \textit{second-order symmetry protection} into any product formula $U(\tau)$ using~\cref{eq:U_s(tau),eq:S(tau)}.
As shown in \cref{prop:symmetry_protection}, this product formula structure ensures the leakage error in the effective Hamiltonian behaves as $O(\tau^2)$:
this places it on par with the intrinsic Trotter error of the second-order formula -- a significant improvement over the previous, uncontrolled $O(1)$ scaling:
\begin{multline}
  H_{\mathrm{eff}}^{(2)} = H
  + \tau^2 (Y_{\mathrm{leak}}^{(2)} + Y_{\mathrm{Trotter}}^{(2)})
  + O(\tau^3),\\ 
  Y_{\text{leak}}^{(2)} = -\left(\frac{1}{24}[H', [H', H_{\text{eff}} ]] + \frac{1}{12} [H_{\text{eff}}, [H', H_{\text{eff}} ]]\right),
\end{multline}
where
\begin{equation}
  H' = \sum_{\Delta m} e^{-i\pi\Delta m} Y_{\Delta m}.
\end{equation}
For an extensive analysis of different variations of symmetry protection and its impact on problems where the Hamiltonian spectrum is of interest, see~\cref{app:symmetry_protection}.

As an alternative to Symmetry Protection, one might consider Quantum Singular Value Transformation (QSVT) \cite{gilyen2019quantum,martyn2021grand} to suppress leakage. However, the impact of QSVT on the spectrum is limited by a fundamental structural reason: it modifies the singular values (magnitudes) of the evolution operator, but it cannot alter its eigenphases. Thus, it will not suppress the energy shift beyond what symmetry protection may or may not have done already.

\subsection{Randomization\label{ssec:randomization}}
Another key tool in SPRINT is randomization, which involves modifying the Hamiltonian fragment ordering \cite{childs2019faster} at each Trotter step. For any Hamiltonian fragment ordering $H = \sum_l H_l$ and Trotter formula, some nested commutators in the effective Hamiltonian will contain no repeated fragments (e.g., $[\ldots, [H_i, H_j]]$). For every such ordering, there exists a permutation $H_i \leftrightarrow H_j$ in the ordering of the fragments that makes the Trotter step~\cref{eq:product_formula} generate an effective Hamiltonian with the reversed nested commutator: $[\ldots, [H_j, H_i]] = -[\ldots, [H_i, H_j]]$. If we compute the effective Hamiltonian of the full evolution operator, the linear order of the BCH expansion indicates we should sum the effective Hamiltonian $H_{\text{eff},j}$ of different Trotter steps $U_j(\tau)$: 
\begin{equation}
    U(J\tau) = \prod_{j=1}^J U_j(\tau)\approx \exp\left(-i\tau\sum_j H_{\text{eff},j}\right),
\end{equation}
This leading-order addition of effective Hamiltonians will in expectation cancel those nested commutators: $[\ldots, [H_j, H_i]]  + [\ldots, [H_i, H_j]] =0$.
This cancellation only applies to commutators where the two innermost terms appear just once in the nested commutator expression, since for example $[H_i, [H_i, H_j]] \neq -[H_j, [H_j, H_i]]$.
This method is most effective at low Trotter orders and incurs no additional computational overhead \cite{childs2019faster}. When combined with a near-integrable formula, we restrict the randomization process to the fragment ordering within each specific group. Randomization does not increase the simulation cost, and incurs only a small amount of spectral line broadening, see~\cref{app:randomized_spectra}, so it is usually worth exploiting.

\subsection{Using QROMs in Trotter formulas\label{ssec:QROM}}

\begin{figure}[t]
\centering

\vspace{4pt}

\begin{quantikz}[row sep=0.8em, column sep=0.5em]
\lstick{$s_k$}
  & \gate[2]{R_{XX}\!\!\left(\!-\!\frac{\pi}{4}\right)}
  & \ctrl{3}
  & \qw
  & \ctrl{3}
  & \gate[2]{R_{XX}\!\!\left(\!+\!\frac{\pi}{4}\right)}
  & \qw \\
\lstick{$s_{k{-}1}$}
  &
  & \qw
  & \ctrl{1}
  & \qw
  &
  & \qw \\
\lstick{$|\theta\rangle_{b-1}$}
  & \qw
  & \qw
  & \gate[2]{\mathrm{Add}}
  & \qw
  & \qw
  & \qw \\
\lstick{$\ket{\phi_b}$}
  & \qw
  & \targ{}
  &
  & \targ{}
  & \qw
  & \qw
\end{quantikz}

\vspace{4pt}
{\small
$\underbrace{\rule{7.5cm}{0pt}}_{\displaystyle G_k(\theta): \;\text{one controlled } \mathrm{Add}_{b-1} = 2(b{-}1) \text{ Toff.}}$
}

\caption{Efficient programmable Givens rotation circuits from Ref. \cite{caesura2025faster}. A Givens rotation $G_k(\theta) = e^{i\pi\theta X_k Y_{k-1}} e^{-i\pi\theta Y_k X_{k-1}}$ has eigenvalues $\{1, e^{i2\pi\theta}, e^{-i2\pi\theta}\}$ and is implemented using a phase gradient state $\ket{\phi_b}$.
The two rotations are fused into a single circuit. A joint $R_{XX}(\pm\pi/4)$ basis change on $s_k, s_{k-1}$ and conditional bit-flips controlled by~$s_k$ replace the two separate diagonalizations. The key change is that both rotations are now implemented by a single \emph{controlled} Gidney adder \cite{gidney2018halving}, with $s_{k-1}$ as the control qubit. Because the fused circuit operates on the full angle~$\theta$ rather than~$\theta/2$, the angle register is one bit shorter: $b{-}1$ bits. A controlled $(b{-}1)$-bit adder costs $2(b{-}1)$~Toffolis, saving $2$~Toffolis per Givens rotation. }
\label{fig:givens_fused_adder}
\end{figure}

A key advantage of qubitization methods is that they effectively leverage Quantum Read-Only Memories (QROMs) \cite{low2024trading}, which increase the qubit cost but decrease the gate cost. For example, QROM decreases the Tensor Hypercontraction cost scaling from a naive $O(M^2)$ to $O(M)$ \cite{lee2021even,caesura2025faster}.
A natural question is then  whether QROM might similarly benefit Trotter methods. In this subsection we will show that the use of QROM provides a moderate constant factor improvement (between $\times 1.25$ and $\times 1.9$) in the overall Toffoli cost of product formulas.

\textbf{Compiling $\sigma_z\otimes \sigma_z$ rotations with QROM:}
In rank-factorized Trotter formulas such as CDF, isometric THC, or GRADE, fragments consist of two main components: (i) basis rotations, which are computed via QR decomposition and applied independently to each spin sector using Givens rotations, see~\cref{eq:basis_rotation,eq:Givens}; and (ii) $\sigma_z\otimes\sigma_z$ rotations, which account for the largest computational cost. This is depicted in~\cref{fig:CDF_circuit}.
Specifically, the basis rotations in the $\ell$-th fragment require \cite{luo2025efficient}
\begin{align}\label{eq:cost_partial_basis_rotation}
  2\binom{M_\ell}{2} - 2\binom{M_\ell-N}{2} = 2M_\ell N - N^2 - N 
\end{align}
Givens rotations, each of which can be implemented with 2 $(b-1)$-bit rotations using the Gidney adder \cite{caesura2025faster}, depicted in~\cref{fig:givens_fused_adder}. In contrast, the $\sigma_z\otimes\sigma_z$ rotation block requires
\begin{equation}
  \frac{2M_\ell(2M_\ell-1)}{2} = 2M_\ell^2 - M_\ell
\end{equation}
single qubit rotations -- any Pauli string rotation might be mapped to Clifford gates plus a single qubit rotation. 

\begin{figure}[t]
\centering
\adjustbox{max width=\columnwidth}{
\begin{quantikz}[row sep=0.6em, column sep=0.5em]
\lstick{$q_1$}                     & \ctrl{1}       & \qw                                       & \qw                    & \qw                                          & \ctrl{1}       & \qw \\
\lstick{$q_2$}                     & \targ{}\vqw{2} & \gate[4]{\mathrm{QROM}}        & \qw                    & \gate[4]{\mathrm{QROM}^{\dagger}} & \targ{}\vqw{2} & \qw \\
\lstick{$\vdots$}                  &   \targ{}             &                                            &                        &                                               &      \targ{}          & \\
\lstick{$q_n$}                     & \targ{}        &                                            & \qw                    &                                              & \targ{}        & \qw \\
\lstick{$\ket{0}^{b}$}            & \qw            &                                            & \gate[2]{\mathrm{Add}} &                                              & \qw            & \rstick{$\ket{0}^{b}$}\qw \\
\lstick{$\ket{\phi_b}$}           & \qw            & \qw                                        &                        & \qw                                          & \qw            & \qw
\end{quantikz}
}
\caption{Quantum circuit for the QROM-based compilation of the $\sigma_z \otimes \sigma_z$ rotation block $\prod_{i<j} \exp(i\theta_{ij} \sigma_{z,i} \sigma_{z,j})$ acting on a group of $n$ system qubits. A fan-out of CNOTs from $q_1, q_2,\ldots ,q_n$ canonicalizes each computational basis state so that $q_1=0$, exploiting the parity symmetry $\phi(\bar{z}) = \phi(z)$ to halve the QROM address space from $2^n$ to $2^{n-1}$. The QROM loads the $b$-bit cumulative phase angle into an ancilla register, which is then added to a phase gradient state $\ket{\phi_b}$ via a Gidney adder circuit. After uncomputing the QROM and reversing the CNOTs, the ancilla returns to $\ket{0}^b$. The total Toffoli cost is C$_\text{QROM}$(n,b), yielding a $\times 2 - 4$ improvement over the baseline cost of $bn(n-1)/2$ per group.}
\label{fig:QROM_circuit}
\end{figure}

\begin{table}[t]
  \centering
  \begin{tabular}{c|c|c|c}
  $b$ & $\epsilon_r$ & $n$ & $C_{\text{QROM}}(n,b) / C_{\text{baseline}}(n,b)$\\
  \hline\hline
   10  &  $10^{-3}$ & 6 & 0.52\\
   15  & $3\times 10^{-5}$ & 7 & $0.295238$\\
   20 & $10^{-6}$ & 7 & $0.257143$
  \end{tabular}
  \caption{Potential savings from using QROM to implement the rotations corresponding to the $\sigma_z\otimes \sigma_z $ Pauli strings. The values are $k = 2$ and $k' = 8$ in all three cases.}
  \label{tab:QROM_savings}
\end{table}

All $\sigma_z\otimes\sigma_z$ rotations commute, so we may implement them in any order. If magic states are abundant, both the $\sigma_z\otimes\sigma_z$ blocks and the basis rotations can be parallelized, reducing the circuit depth to $O(N)$ vs the total Toffoli gate cost of $O(N^2)$. In this manuscript we count Toffoli gates, so we instead propose a strategy to leverage QROMs to lower the Toffoli count of the $\sigma_z\otimes\sigma_z$ blocks. 

The key observation is that the product of all $\sigma_z\otimes\sigma_z$ rotations within a group of $n$ qubits is diagonal in the computational basis. On a basis state $\ket{z} = \ket{z_1 \cdots z_n}$, each Pauli operator satisfies $\sigma_{z,i}\ket{z_i} = (-1)^{z_i}\ket{z_i}$, so the full rotation block acts as
\begin{align}\label{eq:cumulative_phase}
 \prod_{i<j} e^{i\theta_{ij}\,\sigma_{z,i}\sigma_{z,j}} \ket{z} &= e^{i\phi(z)}\ket{z}, \\
 \text{where } \phi(z) \;=\; &\sum_{i<j} \theta_{ij}\,(-1)^{z_i + z_j}.
\end{align}
That is, each computational basis state $\ket{z}$ accumulates a single cumulative phase $\phi(z)$ that is fully determined by the bitstring $z$.
Consequently, on an arbitrary superposition $\ket{\psi} = \sum_z c_z \ket{z}$, the rotation block maps $\ket{\psi} \mapsto \sum_z c_z\, e^{i\phi(z)}\ket{z}$. Our compilation strategy then is as follows: rather than implementing $n(n{-}1)/2$ individual rotations, we precompute the $2^n$ values $\{\phi(z)\}$ and store them in a QROM that acts as a lookup table for the overall phase to implement. The circuit, depicted in~\cref{fig:QROM_circuit}, proceeds as follows: (i) the QROM, addressed by the system qubits $\ket{z}$, loads the $b$-bit representation of $\phi(z)$ into an ancilla register; (ii)~the ancilla is added to a phase gradient state $\ket{\phi_b}$ via a Gidney adder, which imprints the phase $e^{i\phi(z)}$ on $\ket{z}$ \cite{gidney2018halving}; and (iii) the QROM is uncomputed, returning the ancilla to $\ket{0}^b$. The Gidney adder uses a phase gradient state
\begin{equation}
    \ket{\phi_b} = \frac{1}{\sqrt{2^b}}\sum_{k=0}^{2^b-1} e^{-2\pi i k / 2^b}\ket{k}
\end{equation}
to implement a $\sigma_z$ rotation.
This can be seen by noting that adding an integer to the phase gradient state register results in a phase via phase kickback implementing the desired rotation:
\begin{multline}
    \ket{n}\ket{\phi_b} \mapsto
    \ket{n}\sum_{k=0}^{2^b-1} e^{-2\pi i k / 2^b}\ket{k-n} =\\ \ket{n}\sum_{k=0}^{2^b-1} e^{-2\pi i (k+n) / 2^b}\ket{k}
    = e^{-2\pi i n / 2^b}\ket{n}\ket{\phi_b}.
\end{multline}

\textbf{Exploiting symmetries:}
More generally, let us assume we want to implement one set of $n(n-1)/2$ rotations acting on $n$ qubits to $b$ bits of precision. We will now compare how attractive is this method. The Toffoli cost with the Gidney adder would be \cite{gidney2018halving}
\begin{equation}
  C_{\text{baseline}}(n,b) = b \frac{n (n-1)}{2}.
\end{equation}
Using a QROM we need to consider the $2^n$ possible states of the $n$ qubits. The cost would be \cite{low2024trading}
\begin{equation}
  C'_{\text{QROM}}(n,b) = \min_{k, k'}\left(\frac{2^n}{k} + \frac{2^n}{k'} + 2b(k-1) + k' + b\right),
\end{equation}
where $k$ and $k'$ are powers of 2 chosen to minimize the cost. 

As illustrated in the example above, we can improve this estimate: since the $\sigma_z\otimes\sigma_z$ rotations only depend on the relative parity of the input bitstrings, the phases to be implemented are duplicated. We can leverage this information to reduce the size of the QROM. For instance, $z = 001$ would implement the same phases as $\bar{z}=110$. Thus, we can reduce the input cost from $2^n$ possible bitstrings to $2^{n-1}$ by forcing each bitstring into a canonical form, where the first bit is $0$. This can be done with CNOT gates from the first bit to the rest, see~\cref{fig:QROM_circuit}. As a consequence, the cost of this method would be
\begin{equation}
  C_{\text{QROM}}(n,b) = \min_{k, k'}\left(\frac{2^{n-1}}{k} + \frac{2^{n-1}}{k'} + 2b(k-1) + k' + b\right).
\end{equation}
We can study how much more efficient this QROM method is compared to the baseline. In~\cref{tab:QROM_savings} we estimate that the QROM-based implementation of the rotations may amount to $\times 2$ to $\times 4$ fewer Toffoli gates in the $\sigma_z\otimes \sigma_z$ block of rotations. 

\textbf{Overall savings from QROM use:}
The remaining question is how many of these two-qubit rotations we can package in sets without repetitions. An upper bound is given by the Sch\"{o}nheim bound \cite{schonheim1966maximal}. The number of sets of size $n$ that one can create is
\begin{equation}
  m(N,n,2) \leq \left\lfloor\frac{N}{n}  \left\lfloor\frac{N-1}{n-1} \right\rfloor \right\rfloor.
\end{equation}
While the tightness of this bound depends on the specific values of $N$ and $n$, it is also known that \cite{rodl1985packing}
\begin{equation}
  \lim_{N\rightarrow\infty} \frac{ m(N,n,2) \binom{n}{2}}{\binom{N}{2}} = 1,
\end{equation}
so when $N$ grows only a small fraction of two-qubit rotations are not packed into those sets.

Put together, the compilation tricks mentioned in this article improve the Toffoli cost of a Trotter step by a factor between $\times 1.25$ and $\times 1.9$, depending on the values of $b$, $M_\ell$ and $N$. This estimate is obtained by combining the number of rotations, with cost reductions in the $\sigma_z\otimes\sigma_z$ rotation block in~\cref{tab:QROM_savings}.

\section{How to construct the SPRINT formula\label{sec:building_pf}}

Having described in detail all the elements of the SPRINT framework shown in \cref{fig:SPRINT_product_formula}, we now present in concrete terms a guideline to construct the SPRINT product formula when applied to a given Hamiltonian and simulation task. Specifically, we provide a step-by-step guide for evaluating whether to include each technique described in \cref{sec:Product_formulas}, which variation of it to select, and how to ensure the techniques chosen play to each other's strengths. For this guide, we assume that the prospective user has as input a second-quantized chemical Hamiltonian $H$ and a specific simulation task with error budget $\epsilon$.

\textbf{Step 1 -- Factorize the Hamiltonian:} As we saw in~\cref{sec:Factorization}, the first step of any product formula design is to split the Hamiltonian into fast-forwardable fragments. To start, we recommend performing the CDF, isometric THC, and GRADE factorizations, as described in \cref{ssec:grade-compute}. It is necessary to loop over the hyperparameters until the user's error criterion is satisfied: for ground-state energy, this would be minimizing the difference in the ground state energies of the original $H$ and factorized $H'$, according to a classical method like DMRG. For excited-state problems, it would be the same but for a range of eigenstates; it may be different for generic dynamics simulation problems. 

\textbf{Step 2 -- Obtain a baseline Trotter error:} Once a satisfactory factorization is obtained, the next step is to evaluate the accompanying Trotter error and leakage error (if applicable, see \cref{ssec:grade-leakage}), using methods such as those of Ref. \cite{Xanadu_Trotter_error_paper}, for a simple first-order Trotter step. For GRADE and isometric THC, one should implement symmetry protection as described in \cref{ssec:Symmetry_protection} and then use Trotter error estimation tools to evaluate how much leakage error can be suppressed. Combining both the Trotter + leakage error and the per-step costs of the factorization (the latter may be evaluated as in Ref. \cite{fomichev2025fast} and \cref{ssec:QROM}), it is possible to calculate the overall cost of performing time evolution for some default time step $\delta$. From these, it should be clear which of the factorizations offers the best balance of being able to take larger step sizes together with the lowest per-step cost.

\textbf{Step 3 -- Evaluate norms, group fragments:} Once the Hamiltonian is factorized, we evaluate the norms of the fragments, a proxy for their contributions to the error, using a method such as described at the end of \cref{ssec:grade-compute}. As \cref{fig:Zs} suggests, the one-body fragment and the first two-body fragment typically have a norm much larger than the rest of the fragments, so we suggest they form group $H_A$, while the rest can be group $H_B$. Finally, the Pauli strings in the remainder and any error mitigation terms form $H_C$, which acts as the middle fragment of any symmetric formula for $H_B$. Depending on the specific norm structure, for example how quickly it decays, it may be advantageous to re-allocate the fragments between the groups $A$ and $B$. This may be decided using the process in the next step of devising the near-integrable formula.

\textbf{Step 4 -- Choose near-integrable formula:} With the groups $H_A$ and $H_B$ identified, one should explore which near-integrable formula to use. This first amounts to choosing which Trotter product formulas -- first, second, fourth, sixth order and so on -- to use on groups $A$ and $B$, and even $C$. Then, one needs to select how to compose individual steps as if we were designing a product formula out of $H_A$ and $H_B$. 
Many options are available, as summarized in~\cite{blanes2024splitting}.
The near-integrable formulas we propose are presented in \cref{ssec:Near_integrability}, with the two common choices being (i) the second-order near-integrable formula $V_{2,1}$ as defined in \cref{eq:V_21}, where we use second-order Trotter on group $A$ and first-order Trotter on group $B$, and (ii) the fourth-order near-integrable formula $V_{4,2}$, where we use fourth order on $A$ and second order on $B$. To decide which variant to use, one should first perform a BCH expansion to obtain the expression for the effective Hamiltonian, and then leverage a Trotter error estimation tool such as that of Ref. \cite{Xanadu_Trotter_error_paper}. Such a tool will be very useful to allow straightforward comparisons between different product formula variations. The estimated Trotter error should be compared with the simulation task's error requirement $\epsilon$: the goal is to find the cheapest variant of near-integrability that satisfies the desired error budget. In spectroscopy-like applications, as a rule of thumb, second order formulas are best for accuracies of $1$ eV, while fourth order formulas are typically better suited for chemical accuracy (around $1$ mHa).

\textbf{Step 5 -- Evaluate processing:} Assuming that a near-integrable formula of fourth order or higher was chosen, it can be useful to consider processing to cancel some of the leading error terms. These may be constructed by using the BCH-derived expression for the effective Hamiltonian of the near-integrable product formula chosen in the previous step and the methods described in \cref{ssec:Processing}. It will be important to once again leverage the Trotter error estimation tool to evaluate the magnitude of the contribution of the term being canceled to the overall Trotter error. One should only deploy the processor if the error turns out to be substantial, such that the savings from being able to take larger Trotter steps outweigh the cost of implementing the processor. 

\textbf{Step 6 -- Randomize the fragment order:} In general, for any product formula, we advise to use randomization -- to randomize the order of the fragments within the groups $A$ and $B$ in each Trotter step -- to further reduce Trotter error at zero additional cost, as described in \cref{ssec:randomization} and \cref{ssec:trotter_error_results}. As will be seen in \cref{sec:Applications}, we find this useful especially for second order formulas. Once randomized, we advise using Trotter error estimation to evaluate the benefit, to lock down the final allowable Trotter step size. Randomization introduces some spectral line broadening, but such an effect is weaker than the Trotter error peak shift, see~\cref{app:randomized_spectra}.

\textbf{Step 7 -- Select a remainder strategy:} Having tackled the $A$ and $B$ fragment groups, it is time to decide on the implementation strategy for the remainder group $H_C$. If we want to consider using error mitigation, we include such Hamiltonian simulation fragments on $H_C$. The decision on using error mitigation will require assessing the implementation costs of two main strategies. The first involves combining random compilation, such as qDRIFT or Randomized Taylor Expansion~\cite{campbell2019random, wan2022randomized, loaiza2023block}. This strategy would require mapping the nested commutators to be uncomputed to Pauli strings, and symmetry shifts might be used to reduce the one-norm of the resulting operator~\cite{loaiza2023block}. Alternatively, one may use unitary MPOs to directly cancel out the effect of specific error terms in the effective Hamiltonian~\cite{termanova2024tensor,nibbi2024block,guo2022quantum}. The use of these techniques will ultimately depend on the error reduction versus cost increase balance. Partially fault-tolerant rotations -- the Space-Time efficient Analog Rotations (STAR) technique -- could similarly be considered to mitigate the compilation overhead associated with very small-angle rotations~\cite{akahoshi2024partially,akahoshi2025compilation,chung2026partially,kanasugi2026enabling}.

\textbf{Step 8 -- Consider using QROM:} Finally, it is worth evaluating the prospect of using QROM in batch implementing rotations, as described in \cref{ssec:QROM}. As randomization, we find it is worthwhile in most cases, even though the benefit varies slightly depending on the Hamiltonian factorization used.

Following this guide, it is possible to determine the best combination of methods for the specific Hamiltonian and simulation task being faced. In the next section, we follow this guide in tackling one specific problem from the literature -- the task of computing the X-ray absorption spectrum of the molecular cluster Li$_4$Mn$_2$O \cite{fomichev2024simulating,fomichev2025fast}.

\section{Application: X-ray absorption spectra of batteries\label{sec:Applications}}

With the SPRINT framework and GRADE factorization approach defined, we apply these techniques to the task of computing the X-ray absorption spectrum (XAS) of the Li$_4$Mn$_2$O cluster, a model system for lithium-excess battery cathode materials~\cite{fomichev2024simulating,fomichev2025fast}. Accurate XAS calculations for such clusters can elucidate degradation mechanisms in Li-excess cathodes and thus help unlock next-generation high-energy-density batteries \cite{house2023delocalized,radin2019manganese}. Comparing to the resource estimates for quantum algorithms for XAS obtained previously \cite{fomichev2025fast} presents an opportunity to evaluate the performance of our new GRADE and SPRINT techniques in practice on an industry-relevant challenge. 

In summary, we report three main findings:
\begin{enumerate}
\item On the Li$_4$Mn$_2$O system, GRADE achieves lower per-step gate costs and scaling than CDF. It requires half the qubits needed by isometric THC. However, if not addressed properly, GRADE suffers from higher Trotter and leakage errors than CDF. In~\cref{app:GRADE_error} we study how to mitigate this, with a mixed result: we managed to significantly reduce the Trotter error, but not the leakage error. We find CDF to be the lowest-cost factorization for a Trotterized simulation approach in our target problem and system, even though this will likely differ for different simulation problems. 
\item We empirically found that near-integrability (\cref{ssec:Near_integrability}), randomization (\cref{ssec:randomization}) and compilation (\cref{ssec:QROM}) are the main methods that deliver Toffoli cost savings for the Li-excess XAS calculation. 
%In particular, the near-integrable formulas of SPRINT can deliver Toffoli gate savings of as much as $\times 2$ or $\times 5$ at comparable error, depending on the product formula order and the desired evolution time. 
Empirically, near-integrability, randomization and compilation yield saving factors of approximately $\times 1.6$, $\times 1.5$ and $\times 1.4$, respectively, see~\cref{fig:SPRINT_Towers}. 
\item The overall SPRINT pipeline yields a $\times 4.5$ Toffoli reduction over the previous state of the art \cite{fomichev2025fast}, and is only $\times 2.5$ more costly than qubitization, while using a dramatic $\times 5.5$ fewer logical qubits. The different factors contributing to this speedup can be seen in~\cref{fig:SPRINT_Towers}, and the final gate counts are described in~\cref{tab:resources}.
\end{enumerate}

In the following subsections we present these findings in more detail. We briefly introduce the X-ray absorption spectroscopy problem for Li-excess and sketch the construction of the Hamiltonian in \cref{ssec:xas_intro}, with additional details available in \cite{fomichev2024simulating,fomichev2025fast} and in \cref{app:hamiltonian_construction} respectively. With the problem and system defined, in~\cref{ssec:grade_results} we report the results related to rank factorization, comparing GRADE with CDF and isometric THC. In~\cref{ssec:trotter_error_results}, after sketching the computational approach to evaluating the Trotter error that we adopt from Ref. \cite{Xanadu_Trotter_error_paper}, we leverage it to compare the performance of different product formulas, specifically the various combinations of the methods presented in~\cref{sec:Product_formulas}, identifying the optimal combination of SPRINT techniques for the Li-excess problem. We close with a cost comparison between Trotter and qubitization in~\cref{ssec:Qubitization}. 

\subsection{X-ray absorption of Li-excess clusters}
\label{ssec:xas_intro}

\begin{figure*}
  \centering
  \includegraphics[width=0.4\linewidth]{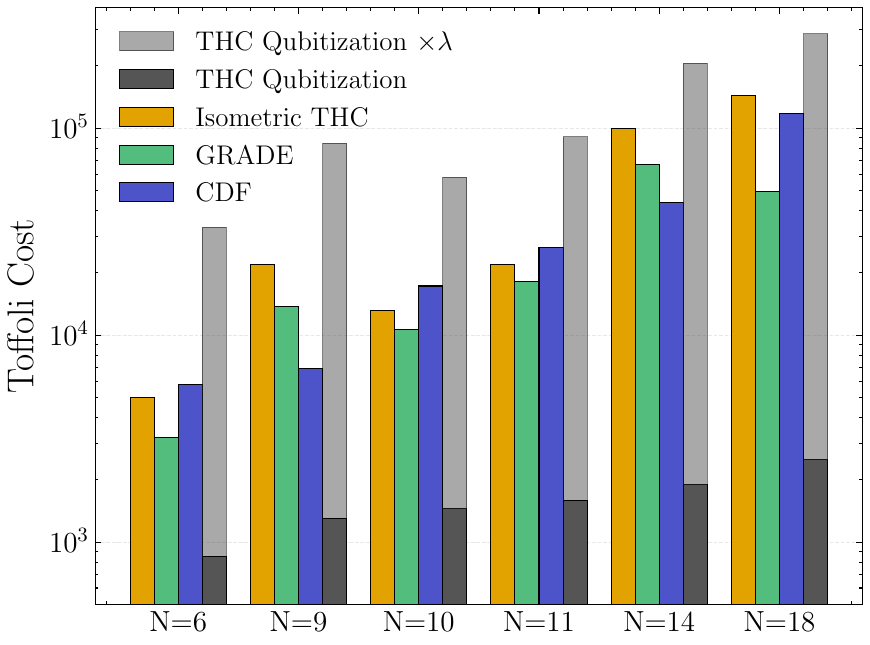}
  \includegraphics[width=0.4\linewidth]{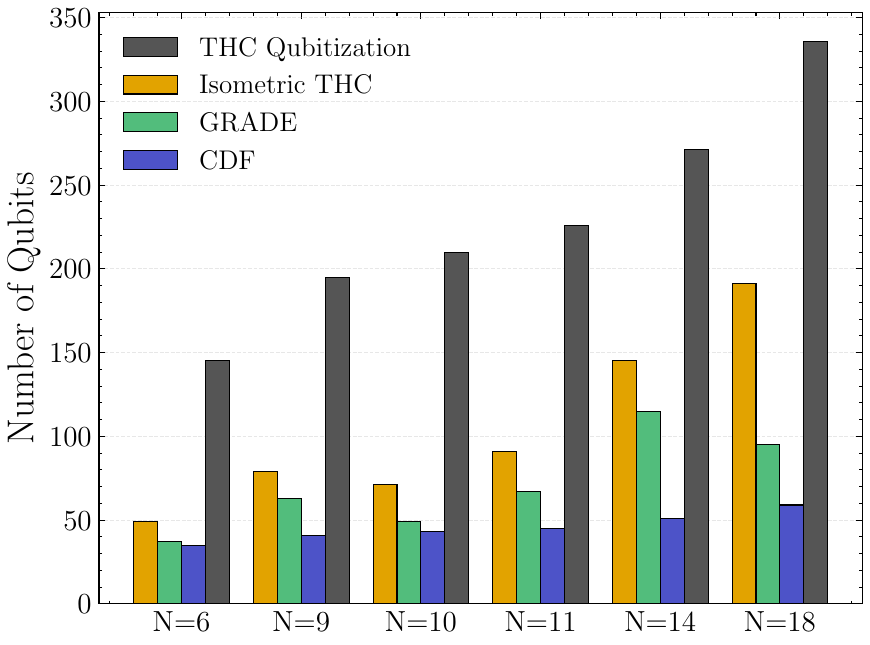}
  \caption{Toffoli gate cost and qubit cost to implement one first-order Trotter step, for different factorizations and active space sizes for Li$_4$Mn$_2$O, targeting a final precision of 1 eV $= 3.67\cdot 10^{-2}$ Ha in peak shift. At this precision, the Toffoli gains are small, though the GRADE factorization exhibits better costs than the alternatives in most cases. All factorizations include optimizing over hyperparameters to get the mean absolute energy error of the first 10 eigenvalues below 1 eV. The qubit counts include a phase gradient state of $b=15$ bits.
  }
  \label{fig:GRADE_per_step_cost_1eV}
\end{figure*}

While SPRINT and GRADE are a general framework for implementing time evolution under a Hamiltonian, in this paper we specifically evaluate them on the task of computing X-ray absorption spectra introduced in Refs. \cite{fomichev2024simulating,fomichev2025fast}. Specifically, for the Li$_4$Mn$_2$O cluster, it is shown in Ref. \cite{fomichev2025fast} that the X-ray absorption spectrum $\sigma(\omega)$ for incoming X-ray frequency $\omega$ is given by
\begin{equation}
  \sigma(\omega) \approx \frac{\eta\delta}{2\pi} \sum_{j=-\infty}^{\infty} e^{-\eta\delta|j|} \tilde{G}_\rho(\delta j) e^{ij\delta\omega},
  \label{eq:green-via-discretetime-fourier}
\end{equation}
where $\tilde G_\rho(\delta j)$ is the time-domain Green's function for the dipole Cartesian component $\rho$ at time $t_j = \delta j$ for integer $j$, given by
\begin{equation}\label{eq:G(t)}
  \tilde{G}_\rho(\delta j) = \frac{\bra{I} m_\rho e^{-i  H \delta j}  m_{\rho} \ket{I}}{\| m_\rho \ket{I} \|^2}.
\end{equation}
Here $m_\rho$ is the dipole operator, $\ket{I}$ is the cluster ground state with energy $E_I$, and $\eta$ is the broadening parameter. 
The matrix element for each time slice $\delta j$ can be evaluated using the Hadamard test on the unitary $e^{-iH\delta j}$. The discrete timestep $\delta$ is chosen so that all eigenvalues within the support of the initial state $m_{\rho}\ket{I} = \sum_\alpha c_\alpha \ket{F}$ are rescaled into the range $[-\pi, \pi)$. The discrete time step $\delta$ is not in general equal to the Trotter timestep $\tau$: rather, typically $\delta = n \tau$ for some integer $n$. 

The Hamiltonian $H$ to be constructed needs to represent the key electronic orbitals governing the absorption response of a molecular cluster surrounding the absorbing atom. In this study, we obtain an oxygen-centered Li$_4$Mn$_2$O cluster similarly to Refs. \cite{fomichev2024simulating,fomichev2025fast} by extracting it from the crystal structure of the Li-excess cathode material Li$_2$MnO$_3$. We then run a Hartree-Fock calculation using PySCF \cite{sun2015libcint,sun2018pyscf,sun2020recent} to get a basis of molecular orbitals, from which we construct a series of active spaces using the automated valence active space (AVAS) method \cite{sayfutyarova2017automated}, and then finally apply the core-valence separation approximation to the resulting $H$ \cite{cederbaum1980many,barth1981many,norman2018simulating,herbst2020quantifying}. Further details on Hamiltonian construction are provided in \cref{app:hamiltonian_construction}. We stress that while from a modeling accuracy perspective there are likely ways to improve the procedure we pursued, in this study it serves mainly as a way of generating a realistic, chemically motivated, and industrially relevant sequence of model Hamiltonians on which we can evaluate the performance of SPRINT and GRADE. A serious effort to generate quantitatively accurate models for the Li-excess X-ray problem is an important task that we intend to pursue in future work.

Having defined a concrete simulation task, we now evaluate how SPRINT and GRADE perform on it.

\subsection{Results: GRADE\label{ssec:grade_results}}

The total cost of time evolution for spectroscopy applications such as XAS is the product of the per-step cost and the number of product formula steps. Given this, we start by analyzing the per-step gate cost of different factorizations described in~\cref{ssec:GRADE}, namely CDF, isometric THC, and finally GRADE itself.

Specifically, using the Hamiltonians of varying active spaces sizes from $N = 6$ to $N = 28$ spatial orbitals built as described in \cref{ssec:xas_intro}, we implement the rank factorization procedure for CDF, isometric THC, and GRADE: in all cases, we run a meta-optimizer over different hyperparameter settings for the Pauli threshold $\bar{t}$ and the fragment numbers and dimensions $L$, $M_\ell$, and $M$. For the criterion of whether we accept a factorization -- which depends on the average deviation between the 10 lowest eigenvalues of the factorized and original Hamiltonian, as computed with DMRG -- we use the value of 1 eV, inspired by the resolution in typical XAS experiments \cite{fomichev2025fast}.  To ensure the factorization can in principle achieve such accuracy, we fixed the rotation precision to $b = 15$ bits uniformly across all methods, since that is the minimum number of bits required to reach our chosen target accuracy. We used PennyLane functionality to get the CDF decomposition of the Hamiltonian \cite{bergholm2018pennylane}; a tutorial can be found in \cite{UtkarshAzad2025how}. For isometric THC and GRADE, we developed custom code.
For all factorizations, the meta-optimizer was run in a `greedy' setting, where the goal was to achieve the smallest per-step gate cost. Once we obtain the $U^{(\ell)}$ and $Z^{(\ell)}$ matrices, we count the number of Givens rotation and $\sigma_z\otimes \sigma_z$ rotations, implemented as described in~\cref{ssec:QROM}. 

\begin{figure*}
  \centering
    \begin{overpic}[width=0.43\linewidth, percent]{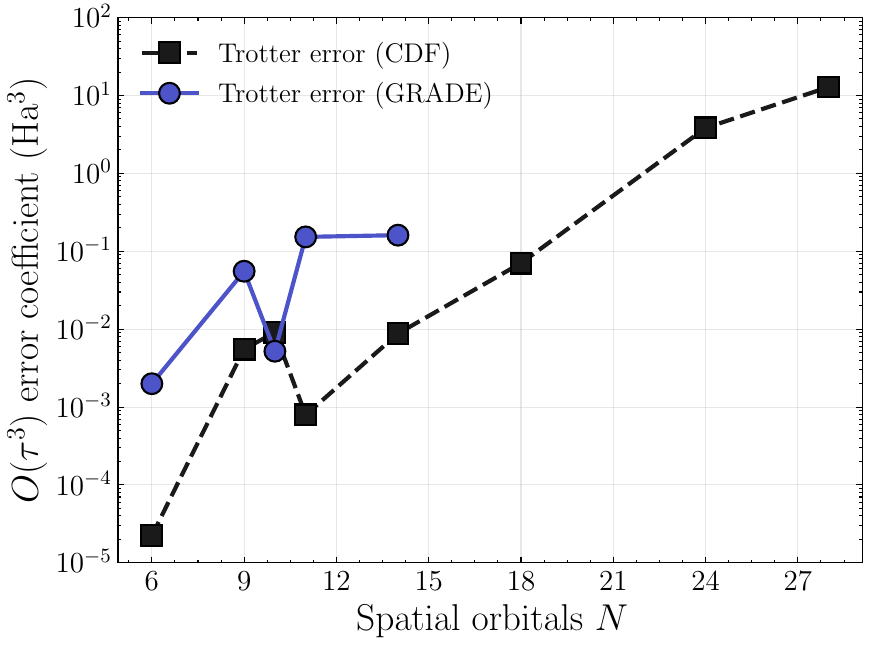}
    \put(0,75){(a)}
    \end{overpic}
    \begin{overpic}[width=0.43\linewidth, percent]{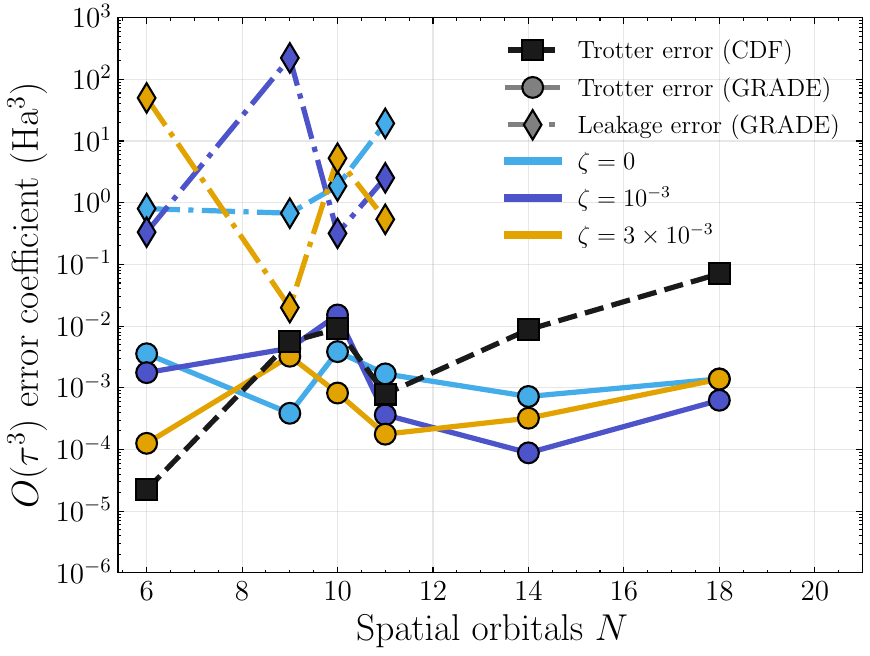}
    \put(0,75){(b)} 
  \end{overpic}
  \caption{Magnitude of the leading order coefficient of Trotter error of CDF (dashed, black), Trotter error of GRADE (solid), and leakage error of GRADE (dot-dashed), for a second order Trotter formula. 
  %Trotter error was computed with bond dimension 100 when $N\leq 18$, and the leakage error used the same bond dimension when $N\leq 9$, since $\max_\ell M_\ell = 2N$. In the largest cases, we instead used bond dimension 25.
  (a) Trotter error comparison of CDF vs GRADE, the latter `greedy-optimized' to just minimize the per-step Toffoli cost without regard to Trotter error or leakage. (b) Trotter error of CDF vs GRADE, with the latter optimized to reduce Trotter error using manually selected schedules $M_\ell$ described in~\cref{tab:grade_schedules}, as well as the leakage error of GRADE. We depict several lines corresponding to choices of regularization loss weights $\zeta$ for the number of auxiliary orbitals and Frobenius norm penalty of $\|Z^{(\ell)}\|$ and the leakage proxy in~\cref{eq:cheap_leak}. Manually designed schedules successfully decrease the Trotter error to CDF levels, but still display a large leakage, preventing GRADE outperforming CDF.}
  \label{fig:CDF_vs_GRADE}
\end{figure*}

The results of counting Toffoli costs and total qubit requirements for the different factorizations to implement a single first-order Trotter step are shown in \cref{fig:GRADE_per_step_cost_1eV}. For reference, we also include a comparison with qubitization, even though the notion of ``per-step'' is somewhat different.  The comparison we make here is twofold: first, we compare our Trotter per-step estimates to the cost of the qubitization quantum walk block-encoding (dark grey); and second, we also plot that block-encoding cost multiplied by the one-norm of the Hamiltonian, which sets the energy scale of the system (light grey). We found that qubitization generally scales better than Trotter on a per-step basis; however, the one-norm prefactor penalizes it sharply, making its gate count advantage over Trotter slim for the small system sizes considered. We will revisit the Trotter versus qubitization comparison in more detail for the full time evolution task in~\cref{ssec:Qubitization}. 

From the factorization results for CDF, isometric THC and GRADE, we see that in practice GRADE usually achieves better per-step costs than alternative factorizations across a range of active spaces for the Li-excess cluster. Impressively, GRADE consistently achieves much better qubit costs than isometric THC, providing evidence that GRADE may be strictly better than isometric THC \cite{luo2025efficient}, at least on a per-step basis. Moreover, on theoretical grounds, for large $N$ isometric THC and GRADE should outperform CDF: their per-step cost scales as $O(N^2)$, versus $O(N^3)$ for CDF. 

However, this advantage was undermined by the fact that GRADE in general tends to have significantly larger Trotter errors than CDF, as well as a new source of error -- leakage to auxiliary orbitals. In the left panel of~\cref{fig:CDF_vs_GRADE} we show on the representative example of the second-order Trotter formula that the Trotter error of GRADE can be over an order of magnitude larger than that of CDF. To see this, we compute and plot the expectation value of the leading order nested commutator error term $\bra{E_l} Y_k \ket{E_l}$, averaged across the lowest 5 eigenstates $\ket{E_l}$: this calculation method for getting an estimate of Trotter error, based on perturbation theory, is explained in the following \cref{ssec:trotter_error_results}. 

In an attempt to reduce the Trotter error by modifying hyperparameters in the GRADE factorization, we pursued a hand-designed $M_\ell$ schedule with $\|Z^{(\ell)}\|$ regularization. 
We found it was possible to substantially reduce the Trotter error of GRADE with a well selected $\{M_\ell\}_{\ell}$ schedule. Specifically, we could lower the Trotter error to essentially the level of CDF, as seen in the right panel of \cref{fig:CDF_vs_GRADE}, while preserving the per-step cost advantage (see the right panel of \cref{fig:Trotter_error_GRADE} in \cref{app:GRADE_error}). 

But while Trotter error could be tamed through optimization of hyperparameters, the leakage error proved much harder to mitigate. In the same right panel of \cref{fig:CDF_vs_GRADE} we also plot the leakage error that obtains in a symmetry-protected second-order Trotter formula. We reduced the leakage error using two strategies. First we added penalty terms $\|Z^{(\ell)}\|_F$ and
\begin{equation}\label{eq:cheap_leak}
  E_{\text{leak}}^{(\ell)} = \|P^{(\ell)} Z^{(\ell)}(I-P^{(\ell)}) + (I-P^{(\ell)})Z^{(\ell)}P^{(\ell)}\|,
\end{equation}
to the GRADE fitting loss function. Here, we define projectors $P^{(\ell)}$ for each fragment $\ell$ using the $M_{\ell} \times N$ isometry $V^{(\ell)}$, implementing part of the basis rotation $U^{(\ell)}$:
\begin{equation}
  P^{(\ell)} = V^{(\ell)} V^{(\ell)\dagger}.
\end{equation}
Second, we used symmetry protection to reduce the leading order of the error to $O(\tau^2)$ (see \cref{ssec:Symmetry_protection}). Thanks to this, the leakage error is on the same footing as the Trotter error, and so the two error sources can be compared directly in the same plot. The leakage error is computed similarly to the Trotter error, by evaluating the expectation value $\bra{E_l} Y_\text{leak}^{(2)} \ket{E_l}$ of the leading nested commutator leakage error term (defined in \cref{eq:leakage_error_symmetry_protected}) with respect to the ten lowest approximate eigenstates, and then averaging. From these results we see that the leading order leakage error remains two or more orders of magnitude higher than the Trotter error, and thus is the main driver of total simulation cost. This means that it is not obvious GRADE is preferable to CDF for the systems studied here, as any per-step cost gains are more than offset by the need to take smaller steps to control the leakage error. Additional discussion of the Trotter and leakage errors of GRADE is presented in \cref{app:GRADE_error}. Given these results, in the next section, we only analyze the performance of SPRINT product formulas using the CDF factorization.

\subsection{Results: SPRINT\label{ssec:trotter_error_results}}

After per-step cost, the second aspect that determines the total simulation cost of time evolution is the required number of Trotter steps. Ultimately, this number of Trotter steps will come from the requirement to satisfy a particular application error budget. Spectroscopies such as XAS typically require us to recover the spectrum of Hamiltonian eigenvalues. For this reason, to impose an application-defined error budget $\epsilon$, we will use the error of reproducing the spectrum -- specifically, shifts in peak positions of key eigenstates $|E_l' - E_l| < \epsilon$ \cite{fomichev2025fast}. This is as opposed to, say, directly controlling the accuracy of the evolution unitary $|e^{-iHt} - U(t)|<\epsilon$, which is how Trotter error is commonly studied \cite{childs2021theory}. The error budget will then determine the required number of Trotter steps and thus the total cost of the simulation task. 

\textbf{Estimating Trotter error:} In this manuscript we will be \textit{empirically estimating} the Trotter error rather than bounding it. Estimating the shifts in peak positions requires three key steps. First, because product formulas implement exact time evolution under an approximate Hamiltonian, we use the BCH expansion to derive the effective Hamiltonian for a given product formula, retaining only the leading-order nested commutators and dropping all higher-order contributions. Second, we treat those leading order nested commutators as \textit{perturbations} to the true Hamiltonian, and use perturbation theory to describe the effect of those nested commutators on the Hamiltonian spectrum, i.e. the difference $|E_l' - E_l|$ between the true eigenvalues $E_l$ and those of our effective Hamiltonian $E_l'$ \cite{mehendale2025estimating,fomichev2025fast}. This approach to estimating Trotter error is well-established, including for spectroscopy applications \cite{mehendale2025estimating,fomichev2025fast}: given a $p$-th order product formula $U_p(\tau)$ implementing the effective Hamiltonian $H_\text{eff} = H + \tau^p Y_{p+1} + O(\tau^{p+1})$, with $Y_{p+1}$ being the leading-order error operator, by perturbation theory for small enough $\tau$, the eigenvalues and eigenstates of the effective Hamiltonian satisfy
\begin{align}
  E'_l &= E_l + \tau^{p}\braket{E_l|Y_{p+1}|E_l} + O(\tau^{p+1}), \label{eq:perturbation_eigenvalues}\\
  \ket{E'_l} &= \ket{E_l} + \tau^{p} \sum_{k\neq l} \frac{\braket{E_k|Y_{p+1}|E_l}}{E_k-E_l}
  \ket{E_k}+O(\tau^{2p}). \label{eq:perturbation_eigenstate}
\end{align}
Consequently, the Trotter error manifests as a coherent shift in the spectral peak positions and a redistribution of spectral weight. Here we focus only on the peak shifts and leave the estimation of spectral weight changes to future work.
With the expressions for Trotter error in hand, we leverage the Trotter error estimation software of Ref. \cite{Xanadu_Trotter_error_paper} to evaluate them. This software is what ultimately allows us to estimate Trotter error for much larger systems than previously considered: this is achieved through a combination of using matrix product states (MPS) for the approximate eigenstates and matrix product operators (MPOs) for the error nested commutators, as well as a number of additional techniques, including approximate norm-ordering and importance sampling of nested commutators.

\textbf{SPRINT improvements:}
Armed with this perturbative approach to estimating peak shift positions of any product formula ansatz, we now explain how the SPRINT framework reduces the Toffoli gate cost relative to the previous state of the art on the example of XAS simulation for the Li-excess cluster \cite{fomichev2025fast}. The cumulative effect of all techniques was already shown in~\cref{fig:SPRINT_Towers}: here we unpack each contribution individually. 

To estimate the impact of each technique, we perform resource estimation of a spectroscopy simulation. Specifically, we count the logical qubits and Toffoli gates needed to carry out both the deepest circuit from among all the Hadamard test calculations we need to perform, and also the combination of all circuits of length $t = \delta, 2\delta, ... j_\text{max} \delta$ needed to build the spectrum in \cref{eq:green-via-discretetime-fourier}. This follows exactly the procedure described in Ref. \cite{fomichev2025fast}: we provide some additional details in \cref{app:resource_estimation}. Essentially, this amounts to determining the total number of Trotter steps needed for the simulation on the basis of the estimated Trotter error, and multiplying it by the respective per-step cost already shown in \cref{fig:GRADE_per_step_cost_1eV}.
We emphasize that these are \textit{not} worst-case upper bounds: the Trotter error coefficients are \textit{estimated} numerically for the specific Hamiltonian and eigenstates of interest. Having these Trotter error estimates, rather than loose error bounds, allows for a fairer comparison of the cost of Trotter-based time evolution against qubitization, beyond just the per-step gate cost comparison. With this approach, we now show how much each of the product formula techniques contributes to the overall cost reduction: 

\begin{figure}[t]
  \centering
  \includegraphics[width=0.8\linewidth]{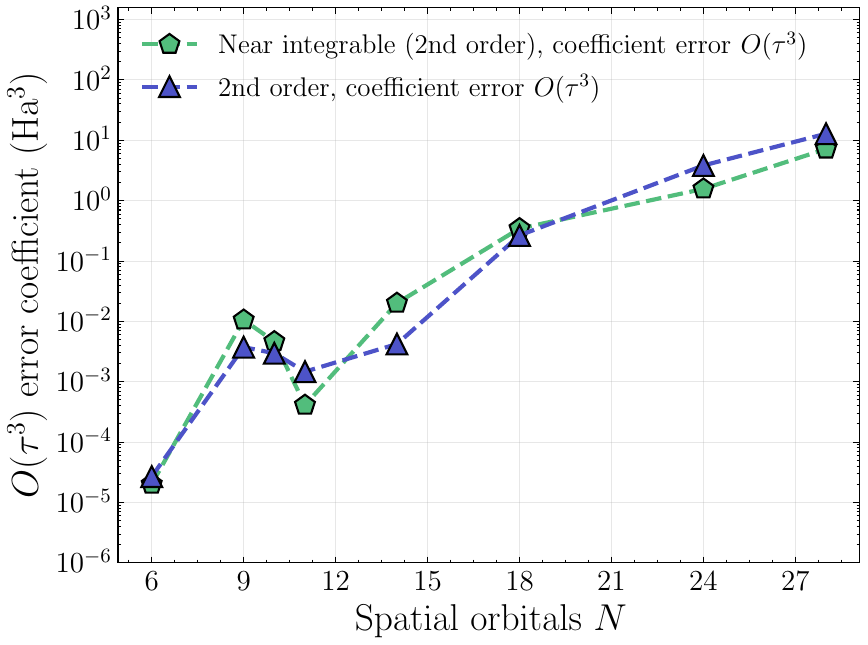}
  \includegraphics[width=0.8\linewidth]{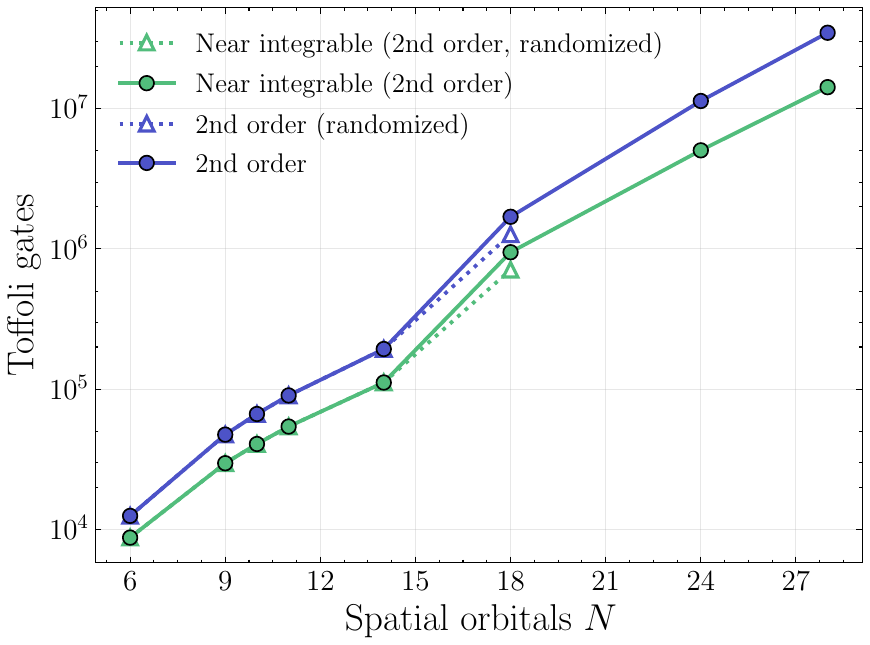}
  \caption{(a) Magnitude of the leading-order Trotter error coefficients that contribute to peak shift for the second order formula, and for the near-integrable formula $V_{2,1}(\tau)$, which we call ``near-integrable second order'', for the Li$_4$Mn$_2$O cluster systems. (b) Toffoli gate cost comparison of second order Trotter formula, and its near-integrable and randomized versions, for performing time evolution for time $t = 1$ a.u. and total error $\epsilon = 1$ eV, for Li$_4$Mn$_2$O with different active space sizes.
  Note that the second order near-integrable formulas generate an $O(\tau)$ error contribution to the effective Hamiltonian, but its leading contribution to peak shifts vanishes in perturbation theory as the error term is imaginary \cite{mehendale2025estimating}. 
  }
  \label{fig:CDF_Trotter_error_pfs}
\end{figure}

\textbf{Tighter Trotter error estimation:} Rather than bounding the Trotter error with operator-norm estimates extrapolated from small systems as in Ref. \cite{fomichev2025fast}, we estimate the \textit{actual} leading-order error for the target Hamiltonian.  Specifically, we evaluate the matrix elements of the BCH error operator~$Y_{p+1}$ on the low-energy eigenstates via first-order perturbation theory (\cref{eq:perturbation_eigenvalues,eq:perturbation_eigenstate}) using the methods of Ref. \cite{Xanadu_Trotter_error_paper}. The results for CDF were shown earlier in \cref{fig:CDF_vs_GRADE} and can be compared directly with those in Ref. \cite{fomichev2025fast}. Because the actual Trotter error ended up being much smaller than the worst-case norm bound -- and still smaller even than earlier estimates \cite{fomichev2025fast} -- we can take larger time steps than previous analysis would suggest, and therefore need fewer Trotter steps~$r$, yielding a constant-factor resource estimate reduction of $\times 1.3$ relative to prior work of Ref. \cite{fomichev2025fast}.

\begin{figure}
  \centering
    \includegraphics[width=0.8\linewidth]{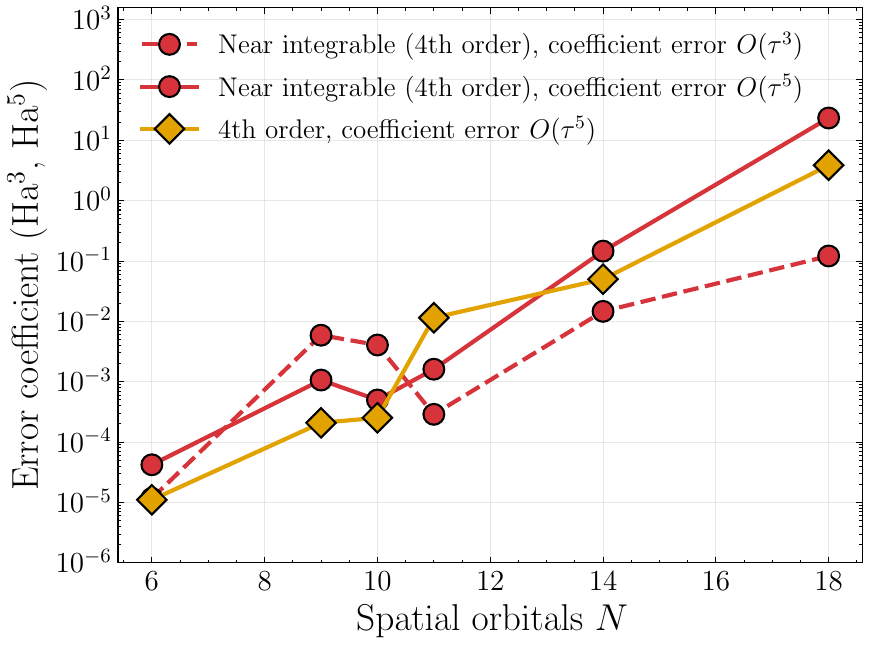}

  \includegraphics[width=0.8\linewidth]{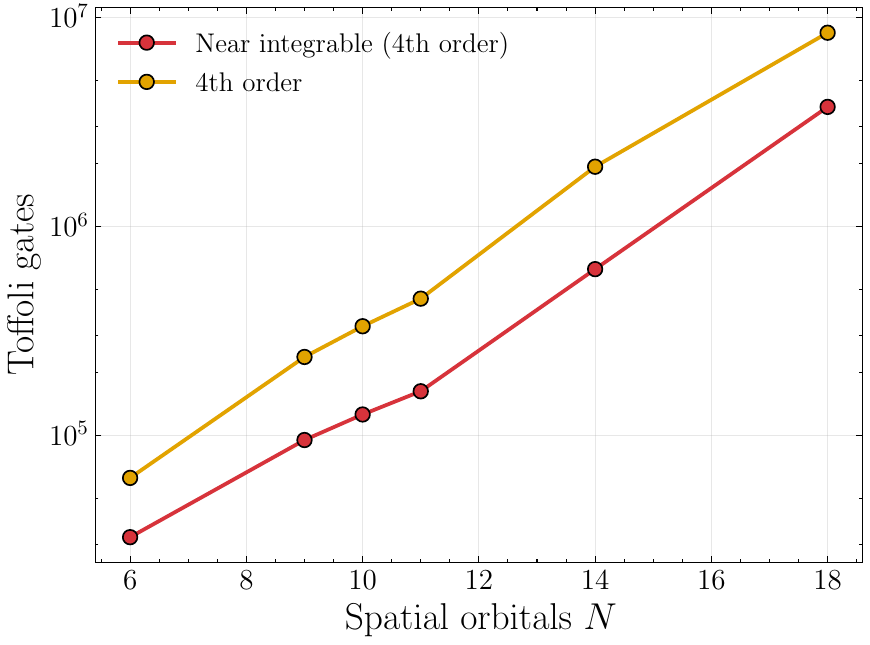}
  \caption{Same as \cref{fig:CDF_Trotter_error_pfs} but for the fourth-order Trotter product formula. Unlike in the second-order case, the near-integrable formula here contributes a higher-order leading error term of $O(\tau^3)$. For smaller system sizes this term is comparable in magnitude to the $O(\tau^5)$ prefactor and thus fully determines the Trotter step size; for larger system sizes such as $N = 18$, however, it is more than an order of magnitude smaller and therefore does not reduce the overall Trotter step size below the $O(\tau^5)$-limited value.}
  \label{fig:CDF_Trotter_error_pfs_4thorder}
\end{figure}

\textbf{Near-integrability:} Recall that the near-integrable formulas exploit the separation of the CDF Hamiltonian into a dominant group $H_A$ and a small tail $H_B$ with relative weight $\alpha \ll 1$ (\cref{ssec:Near_integrability}), selectively cancelling the dominant error terms and suppressing the remainder by $\alpha^2$. For the case of the XAS calculation for Li-excess clusters, we choose $H_A$ to contain the one-body fragment and the first two-body fragment of the CDF factorization, leaving the remaining $N-1$ of the fragments in $H_B$ (with CDF we typically found $N$ fragments led to a good factorization). This follows the pattern in \cref{fig:Zs}.
When we substituted the basic second order formula used in Ref. \cite{fomichev2025fast} with the $V_{2,1}$ near-integrable formula and computed the resulting Trotter error using the procedure described above, we found that $V_{2,1}$ can achieve the same Trotter error as the basic second-order formula, as seen in the top panel of \cref{fig:CDF_Trotter_error_pfs}. However, the $V_{2,1}$ formula can do so at an implementation cost of roughly $\times 1.6$ fewer Toffolis for the example $N = 18$ cluster system: the cost reduction is shown across system sizes in the bottom panel of \cref{fig:CDF_Trotter_error_pfs}. We find a similar result for the fourth-order Trotter formula compared with $V_{4,2}$, where we see a $\times 2.4$ reduction in the Toffoli requirement: this can be seen in \cref{fig:CDF_Trotter_error_pfs_4thorder}. In the large-system limit, these savings approach $\times 2$ and $\times 5$, respectively, since the cost becomes dominated by the cheap blocks $U_{1,B}$ and $U_{2,B}$. These per-step savings are most pronounced at short to moderate evolution times; over long evolution times the $O(\alpha^2\tau^3)$ error term eventually dominates $O(\tau^5)$, eroding the advantage of the near-integrable formula relative to Suzuki, as shown in~\cref{fig:cost_vs_time}.

\begin{table*}
\centering
\begin{tabular}{c c c c c c}
\multicolumn{2}{c}{Cost of the algorithm} & \multicolumn{2}{c}{Algorithm} & \multicolumn{2}{c}{Largest Circuit} \\
\hline\hline
 N & Logical qubits & Toffoli gates & Active Volume & Toffoli gates & Active Volume \\
\hline
$6$ & $76$ & $1.21 \times 10^{10}$ & $1.97 \times 10^{11}$ & $5.52 \times 10^{6}$ & $1.17 \times 10^{8}$ \\
$9$ & $82$ & $1.72 \times 10^{10}$ & $4.75 \times 10^{11}$ & $9.02 \times 10^{6}$ & $3.07 \times 10^{8}$ \\
$10$ & $84$ & $1.98 \times 10^{10}$ & $6.17 \times 10^{11}$ & $1.08 \times 10^{7}$ & $4.05 \times 10^{8}$ \\
$11$ & $86$ & $2.29 \times 10^{10}$ & $7.88 \times 10^{11}$ & $1.30 \times 10^{7}$ & $5.22 \times 10^{8}$ \\
$14$ & $92$ & $3.60 \times 10^{10}$ & $1.51 \times 10^{12}$ & $2.20 \times 10^{7}$ & $1.01 \times 10^{9}$ \\
$18$ & $100$ & $1.19 \times 10^{11}$ & $6.01 \times 10^{12}$ & $7.86 \times 10^{7}$ & $4.11 \times 10^{9}$ \\
$24$ & $112$ & $6.38 \times 10^{11}$ & $3.44 \times 10^{13}$ & $4.35 \times 10^{8}$ & $2.36 \times 10^{10}$ \\
$28$ & $120$ & $1.78 \times 10^{12}$ & $9.69 \times 10^{13}$ & $1.22 \times 10^{9}$ & $6.65 \times 10^{10}$ \\
\hline
\end{tabular}
\caption{Updated resource estimates for the XAS application of~\cite{fomichev2025fast}, with the updated Trotter error estimates, where we assume CDF decomposition of an active space of Li$_4$Mn$_2$O with $N$ spatial orbitals and $L = N$ fragments is used. The parameters used include $\eta = 0.05$ Ha, $\|H\|_\omega = 2$ Ha, number of shots $S = 2500$, discrete time signal time step $\delta = \pi / 2\|H\|_\omega$ and $j_{\max} = 200$. We also used $\alpha =1.3384$ to optimize the sampling procedure described in appendix B in~\cite{fomichev2025fast}.
}
\label{tab:resources}
\end{table*}

\textbf{Randomization:}
As described in \cref{ssec:randomization}, in each Trotter step we can randomly permute the fragment ordering within each group, causing certain error terms to cancel in expectation over the full evolution -- at no extra gate cost. The only trade-off is a small spectral line broadening (this is described in more detail in \cref{app:randomized_spectra}). Using the same effective Hamiltonian perturbative analysis, we can compute the reduction of the Trotter error relative to an un-randomized product formula, and translate that reduction into a corresponding increase of the maximum allowable Trotter step size -- and thus into a Toffoli cost reduction for the overall algorithm. In the specific case of the Li-excess cluster system, for system sizes $N\leq 14$ we selected 11 orderings to estimate the error. These included the $\ell$-strictly-increasing and $\ell$-strictly-decreasing orderings, along with 9 random orderings. For $N = 18$, we selected $\ell$-strictly-increasing and $\ell$-strictly-decreasing orderings, plus 2 or 3 random orderings for the near-integrable and standard second-order formulas respectively, as they were more expensive to evaluate. This number of orderings is likely not sufficient to get a fully converged estimate, but can provide an upper bound of the error and cost. The final costs of the randomized versions of the standard and near-integrable second order product formulas are shown in the bottom panel of \cref{fig:CDF_Trotter_error_pfs}: while they are relatively small for smaller system sizes, for the $N = 18$ representative system they amount to around $\times 1.4$ fewer Toffoli gates than in previous work.

\textbf{QROM-based compilation:}
Each Trotter step contains a block of mutually commuting $\sigma_z \otimes \sigma_z$ rotations.  We replace them with a single Quantum Read-Only Memory (QROM) look-up \cite{low2024trading} that precomputes the cumulative phase for every basis state (\cref{ssec:QROM,fig:QROM_circuit}), making the block $\times 2$--$\times 4$ cheaper in Toffoli gates (\cref{tab:QROM_savings}).  Combined with the improved Givens-rotation circuits of Ref. \cite{caesura2025faster} (\cref{fig:givens_fused_adder}), the per-step cost $C_{\text{step}}$ drops by $\times 1.3$--$\times 1.9$, depending on precision~$b$, fragment rank~$M_\ell$, and orbital count~$N$.

\textbf{Processing:}
In general, processing, i.e. applying a unitary~$e^P$ at the start and end of the time evolution, can cancel certain leading errors terms in the effective Hamiltonian, such as the $O(\alpha\,\tau^3)$ error in the $V_{4,2}(\tau)$ near-integrable formula, at only a fixed additive cost independent of~$r$ (\cref{ssec:Processing,eq:processor}). In practice, for the Li-excess cluster problem, since the second-order formula was typically sufficient to achieve the error requirements, we did not find a benefit to using processing. At higher orders where there could potentially be such a benefit, such as for $V_{4,2}$, it turned out that the targeted commutator already happened to have a near-zero expectation value for all but the smallest active space ($N = 6$).  However, we expect that in contexts with longer time evolution times or stricter spectral peak shift error requirements processing might be a powerful addition to the product formula toolkit.

Overall, the combined savings from compilation, randomization, near-integrability, and tighter Trotter error estimation for the $N = 18$ active space of Li$_4$Mn$_2$O are shown in \cref{fig:SPRINT_Towers} and cumulatively amount to a $\times 4.5$ improvement in total Toffoli count relative to Ref. \cite{fomichev2025fast}. For convenience, the final resource estimates for all system sizes are reported in~\cref{tab:resources}.

\subsection{Comparison with qubitization\label{ssec:Qubitization}}

\begin{figure*}
    \centering
    \includegraphics[width=0.45\linewidth]{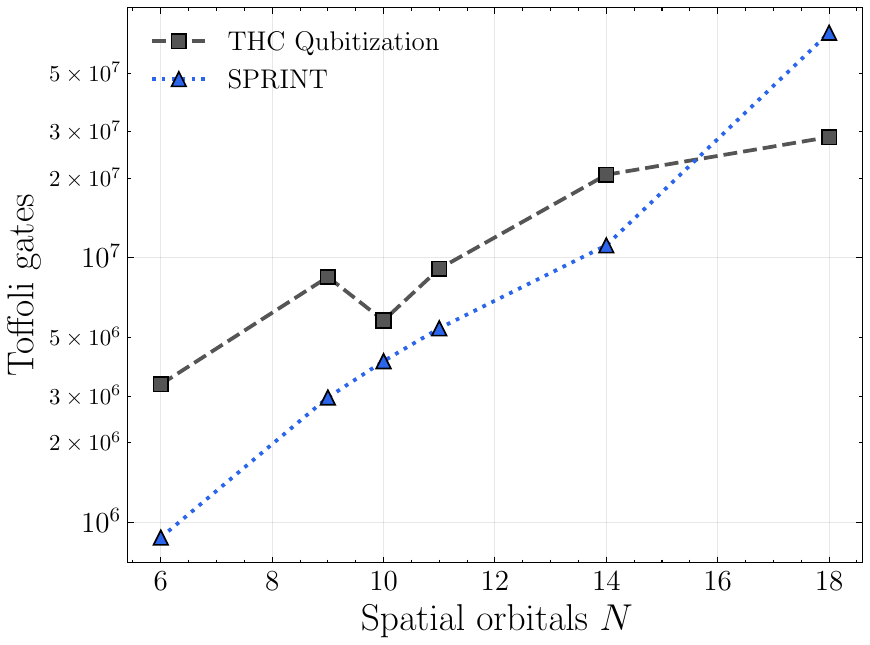}
    \includegraphics[width=0.45\linewidth]{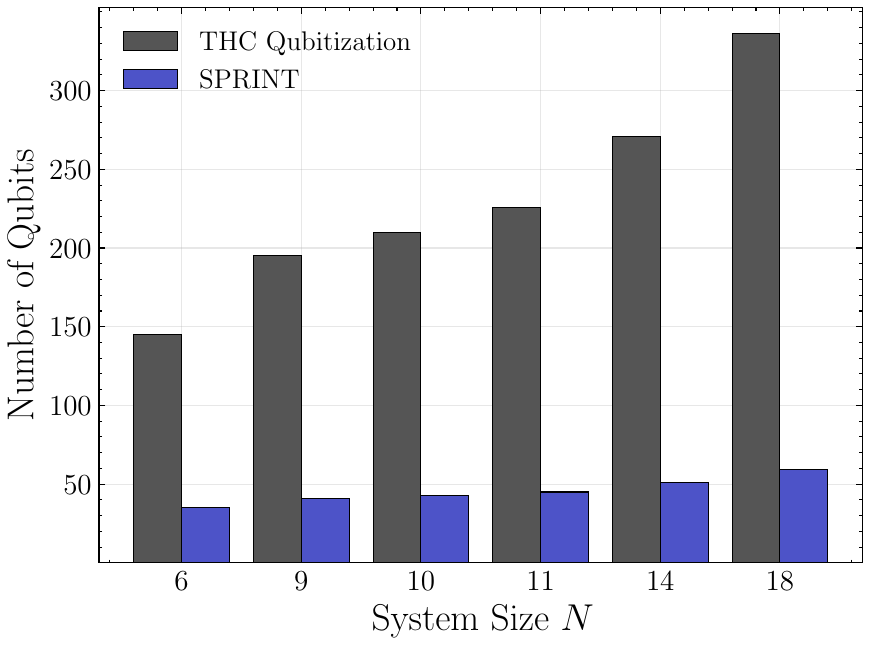}
    \caption{Toffoli gate and qubit comparison for running the spectroscopy algorithm with Trotter and qubitization. The 2nd order randomized near-integrable formula  combined with a CDF decomposition of the Hamiltonian. This leads to $\times 2.5$ more Toffoli gates in Trotter than in qubitization, but $\times 5.5$ fewer qubits.}
    \label{fig:trotter_vs_qubitization}
\end{figure*}

Having examined the gains that SPRINT brings to simulating XAS for the Li-excess cluster, we now compare it against symmetry-shifted THC qubitization, the most competitive general-purpose alternative for implementing time evolution~\cite{lee2021even,caesura2025faster,loaiza2023block,low2025fast}. We focus on THC-based qubitization because while even faster methods exist (see the spectrum amplification method of Ref.~\cite{low2025fast}), they have mostly been investigated for ground state energy estimation and it is not yet clear how they could be applied to spectroscopy-like tasks. From SPRINT, we consider a CDF-based factorization with a randomized, second-order near-integrable product formula, as evaluated in the previous section. Our main finding is that SPRINT remains within $\times 2.5$ in Toffoli count of the qubitized approach, while using a dramatic $\times 5.5$ fewer qubits -- though qubitization scales better asymptotically due to its $O(N)$ per-step cost enabled by QROM. 

To see this, we perform resource estimation for the problem of XAS on a Li-excess cluster using qubitization (see \cref{app:qubitized_xas}), and compare those estimates with the SPRINT results from \cref{fig:CDF_Trotter_error_pfs}: the comparison is shown in \cref{fig:trotter_vs_qubitization}. The qubitization resource estimate is carried out as in the literature \cite{caesura2025faster}: the only caveat is that rather than using quantum signal processing to convert the quantum walk $e^{\pm i t\arccos(H/\lambda)}$ into $e^{i t H}$, we use a Chebyshev transform of the time signal in place of the Fourier transform (see \cref{app:chebyshev_transform}). In \cref{fig:trotter_vs_qubitization} we see that SPRINT remains competitive with qubitization. In particular, for $N = 18$, the Toffoli cost of SPRINT is approximately a factor of $\times 2.5$ larger than that of qubitization.
Conversely, the qubit costs of qubitization are significantly larger, requiring $\times 5.5$ more qubits than SPRINT for the $N = 18$ system. 

Having obtained concrete estimates, we now analyze the origin of the relative advantages of the two methods. For qubitization within the THC framework, the total complexity scales as $\tilde{O}(\lambda \cdot C_{\text{block}})$, where $\lambda$ is the Hamiltonian $1$-norm~\cite{low2024trading} and $C_\text{block}$ is the cost of the block-encoding like that reported in \cref{fig:GRADE_per_step_cost_1eV}. By leveraging Quantum Read-Only Memory (QROM) to implement the PREPARE and SELECT oracles, THC qubitization applies only a single partial basis transformation at a time, reducing the block-encoding cost to $C_{\text{block}} \sim O(N)$~\cite{lee2021even,caesura2025faster}. Since $\lambda$ scales as $O(N^{1.1\text{--}1.2})$ in the thermodynamic limit, the aggregate gate complexity is $\tilde{O}(N^{2.1\text{--}2.2})$~\cite{lee2021even}. 

By contrast, although QROM does yield constant-factor savings (roughly $2$--$4\times$) for $\sigma_z \otimes \sigma_z$ rotations in Trotter methods, analogous asymptotic improvements appear fundamentally impossible: Trotter rotations act on distinct qubits rather than coherently in superposition, precluding the optimization that QROM affords to qubitization. Fundamentally, Trotterization handles basis rotations differently: it diagonalizes individual fragments via \textit{global} basis changes $\mathcal{U}$, requiring a dense mesh of interacting Givens rotations at a cost of $O(N^2)$ per step. The total gate complexity therefore scales as $O(N^2 \lambda) \sim O(N^3)$, exceeding that of qubitization. This complexity gap between the two techniques has a structural origin: qubitization (via LCU) pays a cost proportional to the maximum single-term cost, whereas a product formula must execute every term in sequence. This is consistent with the $\Omega(N^2)$ per-step lower bound for generic two-local Hamiltonians, as proved in Ref.~\cite{low2023complexity}.

Architecture considerations can also have a large impact on this cost complexity gap. For example, in the regime of abundant gate parallelism -- potentially accessible via magic state cultivation~\cite{gidney2024magic} -- the asymptotic disparity in circuit depth completely vanishes: the Givens rotations and fast-forwardable cores of each fragment can be parallelized to $O(N)$ depth per step, matching qubitization, so both algorithms exhibit a total circuit depth of $O(N^2)$. Conversely, for reaction-limited architectures, the total gate count determines the active volume and execution time~\cite{litinski2022active}, and under such constraints the $O(N^2)$ gate cost of qubitization retains an asymptotic advantage over the $O(N^3)$ cost of molecular-orbital Trotterization. This advantage can be traded between space and time, limited only by the reaction time. Qubitization therefore enjoys a fundamental asymptotic advantage over molecular-orbital Trotterization in such architectures -- not due to error complexity, but due to the block-encoding gate cost.

At the same time, the much lower constant prefactors of Trotter methods can partially counteract this scaling discrepancy even in reaction-limited architectures, as we find here and as the quantum literature has long emphasized, even if explicit empirical comparisons have been scarce. One example of this is Ref.~\cite[Fig.~1]{gunther2025phase}, which finds Trotter less competitive for second-quantized systems at large sizes; conversely, Ref.~\cite{rubin2024quantum} finds that Trotter outperforms qubitization in first quantization, even without incorporating all techniques from the geometric integration literature~\cite{blanes2000processing,blanes2024splitting}. For instance, for the plane-wave $H = T + V$ decomposition, the quantum literature has overlooked Runge--Kutta--Nystr\"{o}m methods that exploit $[V,[V,[V,T]]]=0$~\cite{mclachlan2019lie} (see~\cref{prop:VVT_vanishing} in~\cref{app:RKN}). This relative paucity of constant-factor comparisons together with favourable comparisons obtained here and previously in the literature implies there are likely other, yet-to-be-identified circumstances where Trotterized methods may be preferable to qubitization, and underscores the need for continued development of state-of-the-art Trotter methods.

In conclusion, despite asymptotics favouring qubitization, the results of this section show that well-designed Trotter product formulas can perform competitively for simulating relatively small electronic Hamiltonians in second quantization on problems of industrial relevance -- especially those hard enough for classical methods, such as excited states and dynamics. Future work should extend these comparisons to other families of Hamiltonians and simulation tasks.

\section{Conclusion}

We have introduced SPRINT -- Symmetry-Protected Randomized Near-Integrable Trotter formulas -- a family of Trotter methods tailored to electronic Hamiltonians. SPRINT unifies near-integrable product formulas with processing, symmetry protection, randomization and QROM-based compilation, exploiting the norm structure that arises naturally in rank-factorized Hamiltonians. On the problem of simulating X-ray absorption spectra of Li$_4$Mn$_2$O, SPRINT reduces the Toffoli gate cost by a factor of $\times 4.5$ relative to the previous state of the art. Moreover, for the target system size $N = 18$ our methods achieve $\times 5.5$ fewer logical qubits than qubitization, at only a $\times 2.5$ gate-count premium.

We also introduced GRADE, a generalized rank factorization that smoothly interpolates between Compressed Double Factorization and isometric THC. GRADE consistently lowers per-step gate counts, but for the systems and spectroscopy-focused workflows studied here, the growth in Trotter and leakage errors neutralizes these savings. Because GRADE strictly generalizes CDF, targeted hyperparameter optimization may yet close the gap -- a question we leave to future work. Importantly, SPRINT delivers state-of-the-art performance regardless of the underlying factorization, requiring only that fragment-by-fragment fitting be used to create the hierarchical decaying norm structure.

Beyond the numerical results, the Trotter error analysis developed here -- evaluating leading-order BCH commutators via matrix-product operators -- yields constant-factor error estimates that place Trotter and qubitization methods on an equal analytical footing. Extending this analysis to a broader class of industrially relevant Hamiltonians, and pinning down how the Trotter error scales with system size, are natural next steps toward reliable resource estimation in the fault-tolerant era.

Our numerical findings challenge the widespread assumption that qubitization is the default method for fault-tolerant molecular simulation. On early fault-tolerant hardware, where every logical qubit counts, SPRINT makes Trotter product formulas an attractive choice. As both Trotter and qubitization techniques continue to improve, the resource crossover between them will shift -- but the tools developed here ensure that product formulas remain a first-class contender, closing a methodological gap that has persisted since the early days of quantum simulation.

\section{Acknowledgements}
Pablo A M Casares thanks Fernando Casas and Sergio Blanes for early conversations on the geometric integration methods we used in this manuscript. We also thank Maxine Luo for clarifications on the isometric THC method, and code to implement the isometry decomposition into Givens rotations. This research used resources of the National Energy Research Scientific Computing Center, a DOE Office of Science User Facility supported by the Office of Science of the U.S. Department of Energy under Contract No. DE-AC02-05CH11231 using NERSC award
NERSC DDR-ERCAP0036805.

\bibliography{main}
\clearpage
\newpage
\onecolumngrid

\appendix
\section{Hamiltonian construction for experimental results\label{app:hamiltonian_construction}}

The starting point for the simulation is a molecular cluster representing the local environment of the absorbing atom. In our case, following Ref. \cite{fomichev2025fast}, we focus on the oxygen atom.
The ultralocal nature of X-ray absorption \cite{fomichev2024simulating} implies that in many cases, only the nearest-neighbor shell around the absorber needs to be included in the model.
Following the procedure established in Refs. \cite{fomichev2024simulating,fomichev2025fast}, we extract the oxygen-centered cluster Li$_4$Mn$_2$O from the crystal structure of the Li-excess cathode material Li$_2$MnO$_3$. The cluster consists of a central oxygen atom surrounded by its first coordination shell of four lithium and two manganese atoms, whose coordinates can be found in~\cref{tab:LiMnO_geometry}. This specific cluster is an example system: in principle, different stages of delithiation, i.e. of removal of lithium atoms during battery charging, yield a family of clusters with varying local coordination, each potentially corresponding to a different oxidation state of the absorbing atom.

Next, we employ the cc-pVDZ basis set for all atoms in the cluster. Starting from this basis, we execute a restricted Hartree-Fock calculation using PySCF \cite{sun2015libcint,sun2018pyscf,sun2020recent}. The resulting molecular orbitals serve as the reference single-particle basis from which the active space is subsequently constructed.  We used the automated valence active space selection method \cite{sayfutyarova2017automated} to generate a sequence of active spaces of increasing size. The choices of orbitals we retain are chemically motivated: in particular, we focus on the strongly correlated $3d$ and $4d$ orbitals of Mn; the ligand bonding orbitals $2p$ and $3p$ of O and Mn, respectively; the valence $2s$ of Li; and of course the core orbital $1s$ of O, which is key to the $K$-edge X-ray response we are interested in. Different combinations of these orbitals allow us to build Hamiltonians with as few as $N = 6$ spatial orbitals to as many as $N = 28$. Where necessary, we lower the default AVAS threshold until the desired number of orbitals is included in the active space. This threshold controls the minimum overlap between the pre-specified atomic valence orbitals (e.g. O~$1s$, Mn~$3d$) and the molecular orbitals; reducing it admits orbitals with weaker atomic character. The detailed configurations of AVAS and their associated orbital counts are shown in~\cref{tab:avas_configs}. 

\begin{table}[htbp]
  \centering
  \begin{tabular}{lccc}
    \toprule
    \textbf{Atom} & {\textbf{X}} & {\textbf{Y}} & {\textbf{Z}} \\
    \midrule
    O      &  0.00  &  0.00  &  0.00  \\
    Mn$_1$ & -0.03 &  0.05 &  1.70  \\
    Mn$_2$ & -0.01 &  1.60  &  0.10  \\
    Li$_1$ &  1.90  &  0.01 &  0.20  \\
    Li$_2$ & -1.80  & -0.07 &  0.02 \\
    Li$_3$ &  0.04 & -1.75 & -0.04 \\
    Li$_4$ &  0.06 &  0.02 & -1.60  \\
    \bottomrule
  \end{tabular}
    \caption{Atomic coordinates (in \AA) for the Li$_4$Mn$_2$O cluster.}
  \label{tab:LiMnO_geometry}
\end{table}

\begin{table}[htbp]
\centering
\begin{tabular}{clc}
\toprule
\textbf{Size ($N$)} & \textbf{Included Atomic Orbitals} & \textbf{AVAS Threshold} \\
\midrule
6  & O 1$s$, Mn$_1$ 3$d$ & 0.5000 \\
9  & O 1$s$, O 2$p$, Mn$_1$ 3$d$ & 0.5000 \\
10 & O 1$s$, Li 2$s$, Mn$_1$ 3$d$ & 0.5000 \\
11 & O 1$s$, Mn$_1$ 3$d$, Mn$_2$ 3$d$ & 0.5000 \\
14 & O 1$s$, O 2$p$, Mn$_1$ 3$d$, Mn$_2$ 3$d$ & 0.5000 \\
18 & O 1$s$, O 2$p$, Li 2$s$, Mn$_1$ 3$d$, Mn$_2$ 3$d$ & 0.5000 \\
24 & O 1$s$, O 2$p$, Li 2$s$, Mn$_1$ 3$d$, Mn$_2$ 3$d$, Mn$_1$ 4$d$, Li 3$p$ & 0.0300 \\
28 & O 1$s$, O 2$p$, Li 2$s$, Mn$_1$ 3$d$, Mn$_2$ 3$d$, Mn$_1$ 4$d$, Mn$_2$ 4$d$ & 0.0095 \\
\bottomrule
\end{tabular}
\caption{Active Space Configurations for the Li$_4$Mn$_2$O cluster used to generate the Hamiltonians.}
\label{tab:avas_configs}
\end{table}

The final modification to the Hamiltonian obtained with AVAS is the application of the core-valence separation approximation (CVS), a common method in classical XAS simulations  \cite{cederbaum1980many,barth1981many,norman2018simulating,herbst2020quantifying}. CVS allows the quantum algorithm to directly compute the spectra of core-excited states, namely those that have a core hole -- in this case, in the O $1s$ orbital -- while bypassing all the other excited states in the so-called valence-excited manifold that are formally much lower in energy. This is accomplished by exploiting the observation that the Hamiltonian matrix elements coupling core-excited and valence-excited determinants are typically small -- the very reason that the core and valence distinction makes sense \cite{cederbaum1980many}.
Setting these matrix elements to exactly zero decouples the two manifolds at negligible cost to the accuracy of the simulation \cite{norman2018simulating,herbst2020quantifying}.
Operationally, CVS is implemented in two steps:
\begin{enumerate}
    \item All two-electron integrals $(pq|rs)$ involving at least one core orbital index (here, the O~$1s$ orbital) with mixed core-valence character are set to zero \cite{norman2018simulating}. This removes the off-diagonal blocks connecting the core-excited and valence-excited sectors.
    \item Only terms in the dipole operator $m_\rho$ that involve excitations \emph{from} the core orbital are retained in the initial state $m_\rho$. This ensures that the initial state $m_\rho \ket{I}$ is placed entirely within the core-excited subspace. Since the modified Hamiltonian preserves this subspace, the subsequent time evolution remains confined to it.
\end{enumerate}

With the active space and CVS in hand, the electronic Hamiltonian and the dipole operator $m_\rho$ are constructed within the active space using PySCF \cite{sun2018pyscf,sun2020recent} and PennyLane \cite{bergholm2018pennylane}. To determine the initial state $m_\rho \ket{I}$, the ground state $\ket{I}$ is obtained by using either the complete active space (CAS) method or density matrix renormalization group (DMRG) for the larger systems, and the dipole operator is applied using the associated one-body creation-annihilation operator pairs \cite{fomichev2025fast}. This procedure yields all the ingredients needed to run the time-domain XAS algorithm: the system Hamiltonian $H$ in~\cref{eq:G(t)}, the initial state $m_\rho \ket{I}$, and the norm $\| m_\rho \ket{I} \|$ required for normalization.

\section{Improving GRADE Trotter error\label{app:GRADE_error}}

As we saw in~\cref{fig:CDF_vs_GRADE} and~\cref{ssec:grade_results}, the optimized GRADE schedule found in~\cref{fig:GRADE_per_step_cost_1eV} displayed a large Trotter error, that makes GRADE unattractive compared to CDF. However, the results in that section depended on the GRADE schedules found by an optimizer, which often tended to favor either CDF-like or most often THC-like schedules. This made it difficult to assess whether the large Trotter error was a result of the optimizer search or we were not properly exploring the space of schedules. 

Here we instead fix the schedules $M_\ell$ by hand. We use $L = N$ fragments in CDF and a single fragment of size $M = 3N$ in isometric THC. For simplicity, GRADE will use fragment sizes ranging from $M_1 = N$ to $M_{\ell = \lceil\log_2 N\rceil} = 2N$ linearly interpolated for simplicity, favoring shorter steps first if possible. What this means in practice is indicated in~\cref{tab:grade_schedules}. Since there are only $\lceil\log_2 N \rceil$ fragments, the scaling of GRADE will asymptotically remain the same as isometric THC, $O(N^2)$ up to polylogarithmic factors. 

We also explored the option of regularizing the GRADE fragments with a penalty term computed as $\zeta\|Z^{(\ell)}\|$, using the Frobenius norm. Our goal is to understand if we can reduce the Trotter and leakage errors to make these schedules more attractive than CDF. In~\cref{fig:Trotter_error_GRADE} we show that the Trotter error can be tamed with the choice of schedule. 

The conclusion is that while GRADE seems an attractive theoretical generalization to explore, more work is needed to make sure it can be competitive with Compressed Double Factorization. In this work we have hinted at some techniques future work might want to explore, including quantum singular value transform and error mitigation strategies.

\begin{figure*}
  \centering
  \includegraphics[width=0.45\linewidth]{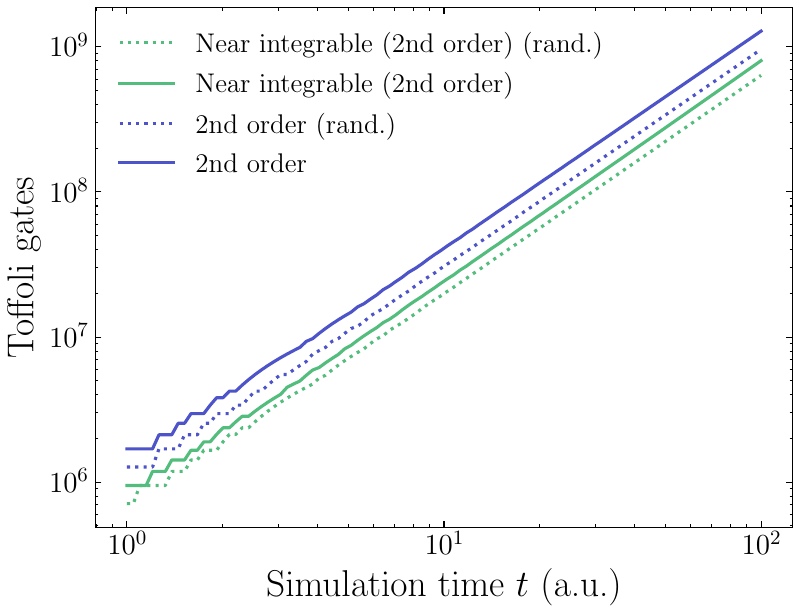}
  \includegraphics[width=0.45\linewidth]{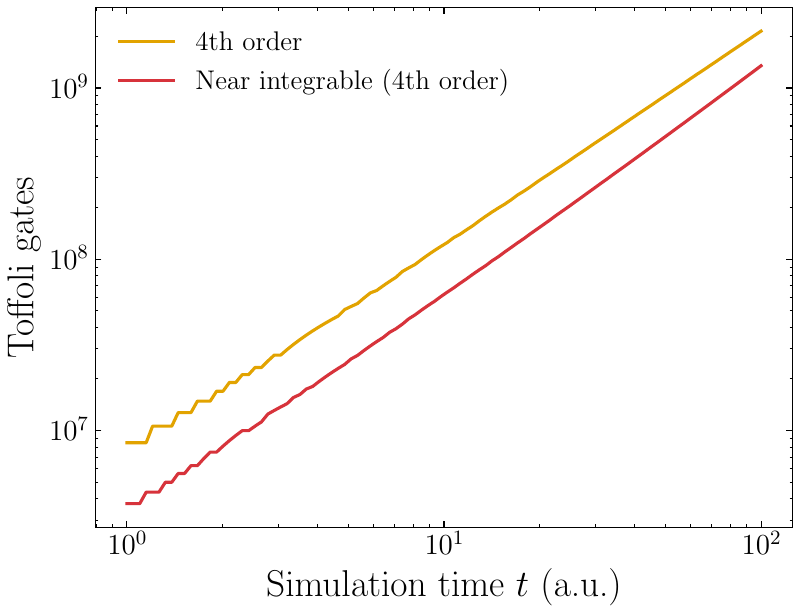}
  \caption{Toffoli gate cost comparison for near-integrable formulas of order 2 and 4, vs Strang or Suzuki respectively, for time $t$ and total error $\epsilon = 1$ eV, and Li$_4$Mn$_2$O with active space size $N=18$. Over long evolution times, the $O(\alpha^2 \tau^3)$ error term starts to dominate over $O(\tau^5)$, making the near-integrable formula less competitive than Suzuki for time evolution applications.}
  \label{fig:cost_vs_time}
\end{figure*}

\begin{figure*}
  \centering
  \begin{overpic}[width=0.47\linewidth, percent]{figures/paper/zeta_race_comparison_Strang_bd100.pdf}
    \put(0,70){(a)} 
  \end{overpic}
  \hfill
  \begin{overpic}[width=0.47\linewidth, percent]{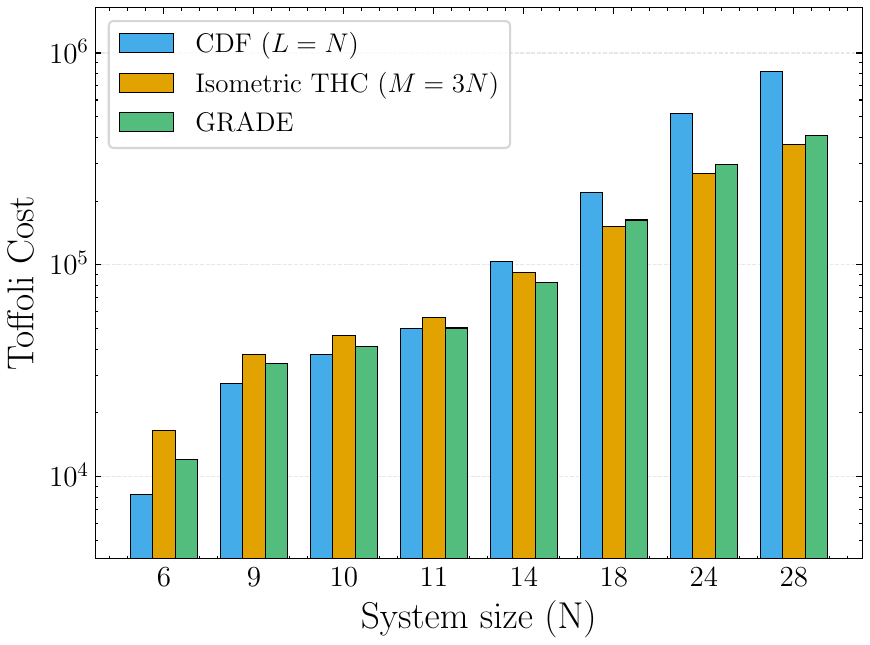}
    \put(0,70){(b)} 
  \end{overpic}
  \caption{(a) Trotter and leakage error of GRADE and a second order formula with manually selected schedule according to~\cref{tab:grade_schedules}. We used bond dimension of 100 for all points except for the leakage of $N = 10$ and $N = 11$, which use $\max M_\ell = 2N$. Since this is a large system, we used bond dimension 25 in those cases. $\zeta$ indicates the strength of the regularization loss of $\|Z^{(\ell)}\|$ and the leakage in~\cref{eq:cheap_leak}, during the optimization of $Z^{(\ell)}$ and $U^{(\ell)}$. (b) Per-step cost comparison of CDF, isometric THC and the GRADE schedule described in~\cref{app:GRADE_error} and~\cref{tab:grade_schedules}, assuming all three methods use the same rotation precision.}
  \label{fig:Trotter_error_GRADE}
\end{figure*}

\begin{table}[ht]
\centering
\begin{tabular}{cc}
\hline
$N$ & Schedule $(M_1, M_2, \ldots)$ \\
\hline
6  & $(6, 9, 12)$ \\
9  & $(9, 12, 15, 18)$ \\
10 & $(10, 13, 16, 20)$ \\
11 & $(11, 14, 18, 22)$ \\
14 & $(14, 18, 23, 28)$ \\
18 & $(18, 22, 26, 31, 36)$ \\
24 & $(24, 30, 36, 42, 48)$ \\
28 & $(28, 35, 42, 49, 56)$ \\
\hline
\end{tabular}
\caption{GRADE schedules for LiMnO systems at different active space sizes.}
\label{tab:grade_schedules}
\end{table}

\section{Convergence analysis for the Trotter error evaluation\label{app:Convergence}}

Here we do a brief summary of convergence with the relevant parameters: bond dimension, number of commutators evaluated and number of eigenstates. We start with the bond dimension. In~\cref{fig:bond_dimension_convergence} we show the result obtained for the different bond dimension sizes. 

We also depict the results for convergence with the number of commutators evaluated (prioritizing those with a larger norm) in~\cref{fig:convergence_commutators} and with the number of eigenstates in~\cref{fig:convergence_eigenstates}.

\begin{figure}
  \centering
  \begin{overpic}[width=0.47\linewidth, percent]{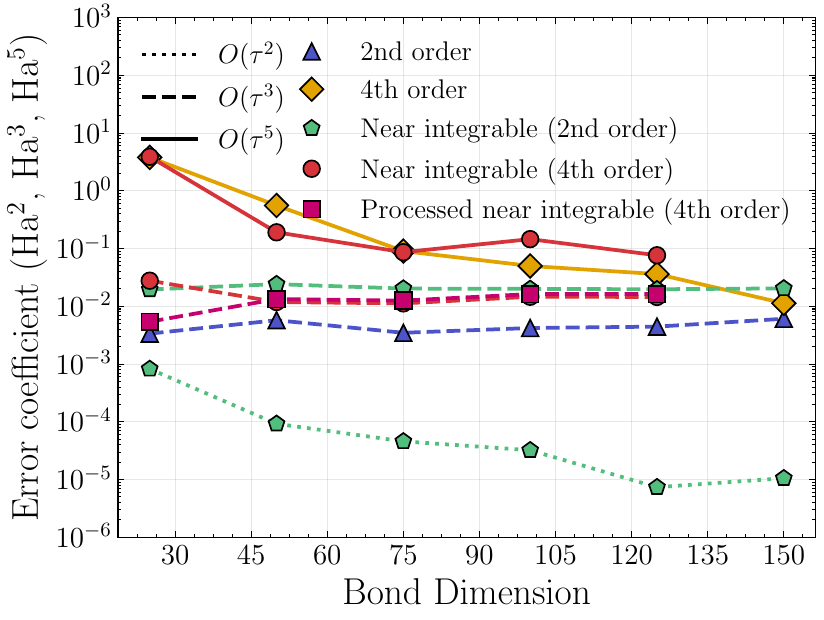}
    \put(0,75){(a)} 
  \end{overpic}
  \hfill
  \begin{overpic}[width=0.47\linewidth, percent]{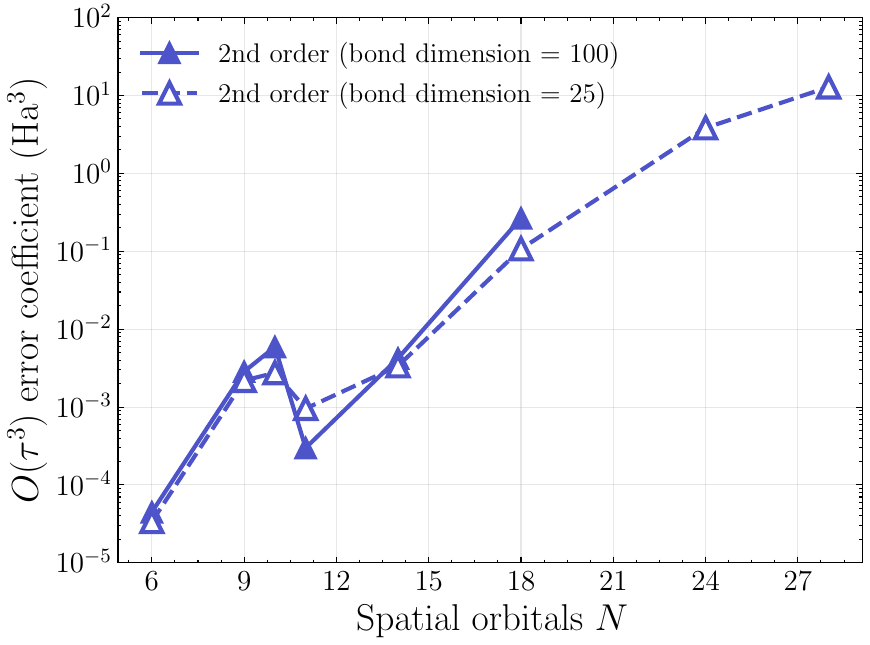}
    \put(0,75){(b)} 
  \end{overpic}
  \caption{(a) Bond dimension convergence for $O(\tau^p)$ error coefficient in Compressed Double Factorized Li$_4$Mn$_2$O Hamiltonian with $N = 14$. It seems bond dimension 100 represents a fair compromise between cost and accuracy. For second order formulas even a low bond dimension estimate can be sufficient to estimate the Trotter error. Note that in the limit of high accuracy, the $O(\tau^2)$ vanishes, see discussion in caption of~\cref{fig:CDF_Trotter_error_pfs}. (b) Error coefficient for the second order formula as a function of system size $N$, estimated with bond dimension 25 and 100. The low bond dimension is sufficiently accurate for some reasonable estimates.}
  \label{fig:bond_dimension_convergence}
\end{figure}

The observation that the importance of Trotter error commutators frequently exhibits a log-normal distribution, see~\cref{fig:convergence_commutators}, can be rationalized through the multiplicative central limit theorem, even in the presence of deterministic physical structure. In Hamiltonian simulation, the magnitude of a first-order commutator is classically bounded by $2 \|H_i\| \|H_j\|$. For higher-order Trotter-Suzuki decompositions, the error is dominated by nested commutators of depth $k$, whose importance is bounded by the product of their respective fragment norms:
\begin{equation}
  \mathcal{E}_k \leq 2^k \prod_{m=1}^{k+1} \|H_{i_m}\|.
\end{equation}
To analyze the statistical distribution of these bounds over the combinatorially large set of non-vanishing commutators, one can consider the natural logarithm of the importance metric. This transformation maps the multiplicative bound into a linear sum of random variables:
\begin{equation}
  \ln(\mathcal{E}_k) \leq k \ln(2) + \sum_{m=1}^{k+1} \ln(\|H_{i_m}\|).
\end{equation}
In practical quantum chemistry applications, the Hamiltonian is often represented using rank-reduced tensor factorizations, such as the Cholesky decomposition or double factorization. These techniques produce fragment operators whose norms span several orders of magnitude, decaying rapidly in accordance with the hierarchical energy scales of the system. Consequently, the logarithmic norms $\ln(\|H_{i_m}\|)$ can be treated as variables drawn from a distribution with a well-defined mean and variance.

A natural theoretical objection to the application of the central limit theorem in this context is that Hamiltonian coefficients are inherently correlated by physical constraints, such as spatial locality, point-group symmetries, and selection rules. Such constraints ostensibly violate the independence assumption required for the theorem. However, the matrix decomposition process intrinsically scrambles these local features, mapping the physical structure into global, highly oscillatory tensor fragments. While the principal fragments with the largest norms remain highly structured and deterministic, the vast majority of fragments residing in the tail of the decomposition exhibit weak mutual correlations. 

When sampling pairs or higher-order tuples from this bulk, the selection rules act as a pseudo-random filter. Provided that these structural correlations are sufficiently weak across the bulk ensemble, a generalized central limit theorem for weakly dependent variables applies. Under these conditions, the sum of the logarithmic norms converges to a normal distribution. Upon exponentiation, the commutator importance $\mathcal{E}_k$ necessarily approaches a log-normal distribution. This highly right-skewed statistical behavior concentrates the macroscopic Trotter error into a vanishingly small fraction of dominant terms, providing a rigorous explanation for why standard analytic bounds, which assume a uniform accumulation of error, are routinely overly pessimistic in empirical simulations.

\begin{figure*}[t]
  \centering
  \begin{overpic}[width=0.48\linewidth, percent]{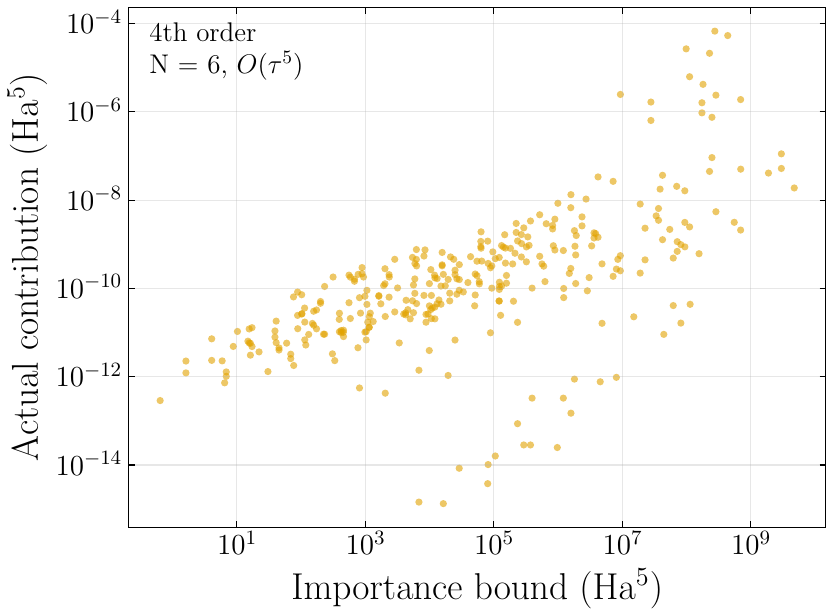}
    \put(1.,75){(a)} 
  \end{overpic}
  \hfill
  \begin{overpic}[width=0.48\linewidth, percent]{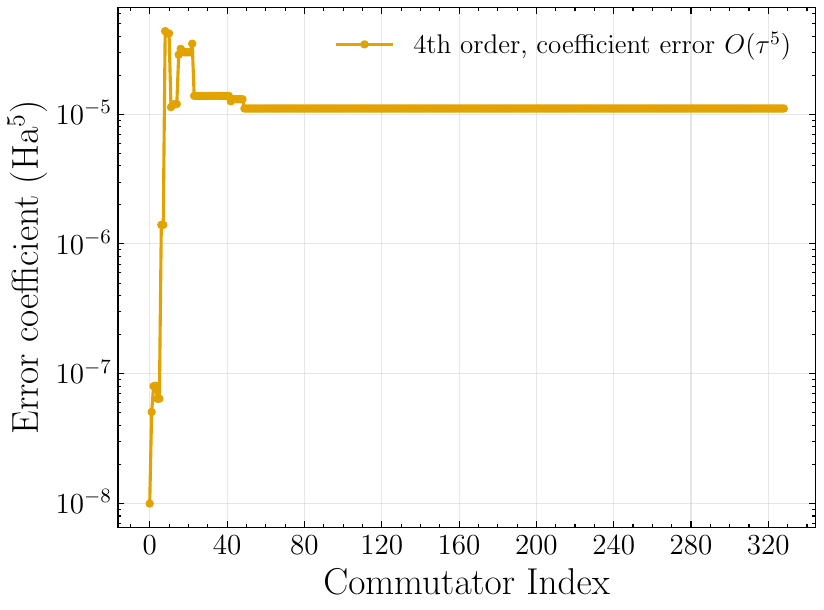}
    \put(1.,75){(b)}
  \end{overpic}
  \vspace{0.5cm}
  \begin{overpic}[width=0.48\linewidth, percent]{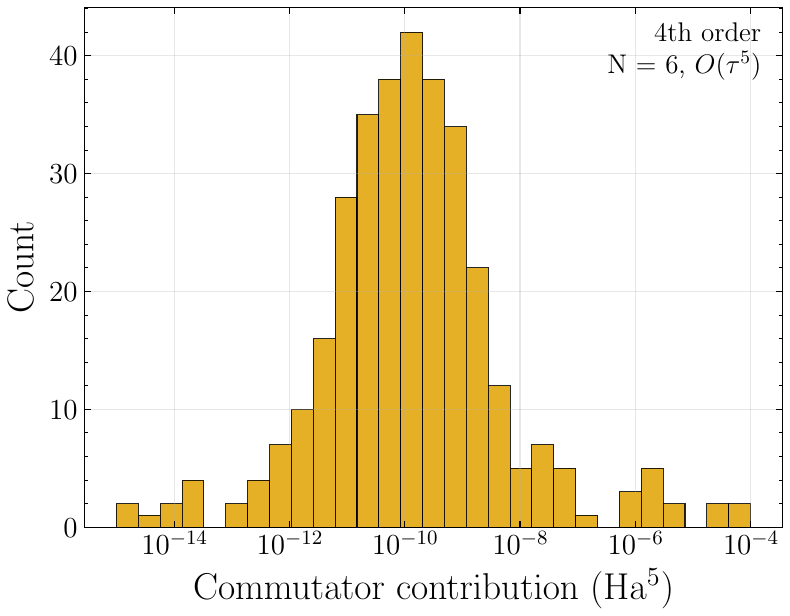}
    \put(1.,75){(c)}
  \end{overpic}
  \hfill
  \begin{overpic}[width=0.48\linewidth, percent]{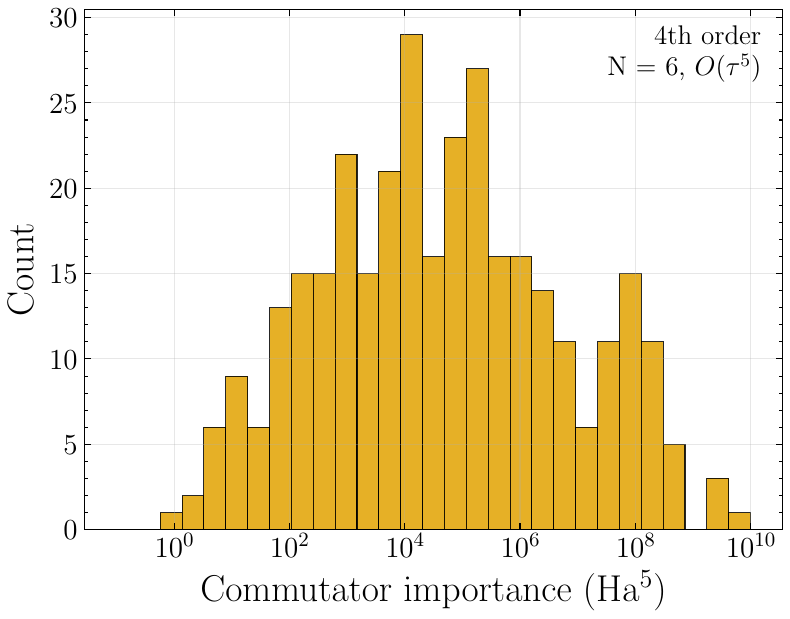}
    \put(1.,75){(d)}
  \end{overpic}
  
  \caption{Convergence of the Trotter error estimation with the number of commutators evaluated for a Suzuki fourth order formula with $N = 6$. This plot aims to understand to what degree we can use the evaluated importance -- defined as bound on the norm -- to identify the commutators to evaluate. In (a) we observe that while the correlation is weak, all the high contribution commutators are also fairly high in importance. (b) depicts how the estimated error accumulates with the importance-ordered accumulated estimate. (c) and (d) depict the histograms for the commutator contribution and importance.}
  \label{fig:convergence_commutators}
\end{figure*}

\begin{figure}
  \centering
  \includegraphics[width=0.5\linewidth]{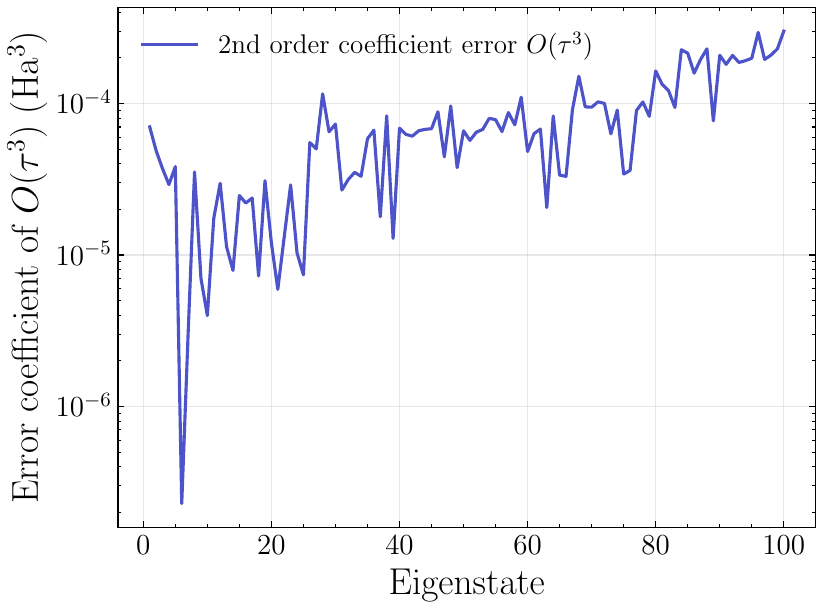}
  \caption{Raw Trotter error estimated for the first 100 eigenstates evaluated, for Li$_4$Mn$_2$O Compressed Double Factorized Hamiltonian with $N = 6$.}
  \label{fig:convergence_eigenstates}
\end{figure}

\section{Resource estimation procedure for the XAS algorithm}
\label{app:resource_estimation}

To estimate the impact of each technique, we perform resource estimation for carrying out a spectroscopy simulation. Specifically, we count the logical qubits and Toffoli gates needed to carry out both the deepest circuit from among all the Hadamard test calculations we need to perform, and also the combination of all circuits of length $t = \delta, 2\delta, ... j_\text{max} \delta$ needed to build the spectrum in \cref{eq:green-via-discretetime-fourier}. For a given active space size $N$, target error $\epsilon$ for the peak (eigenvalue) shift, and maximum evolution time $t_j = \delta j$, we evaluate the following:

\begin{enumerate}
  \item Per-step cost: We count the Toffoli gates in one product formula step, summing the Givens rotation cost (using the fused circuits of Ref. \cite{caesura2025faster}, see~\cref{fig:givens_fused_adder}) and the $\sigma_z \otimes \sigma_z$ rotation cost (using QROM where beneficial, see~\cref{tab:QROM_savings}): these are the values already reported in \cref{fig:GRADE_per_step_cost_1eV}. As mentioned before, we use CDF throughout this section, given the large leakage errors obtained with GRADE and isometric THC.
  \item Trotter error: Following the perturbative approach to estimating peak position shift that we just described, we evaluate the expectation value of the leading-order BCH error coefficient for the chosen product formula, as per \cref{eq:perturbation_eigenvalues}. The expectation value is computed with respect to approximate eigenstates prepared with DMRG: we use bond dimension $100$ for $N \leq 18$; bond dimension $25$ for $N = 24$ and $N = 28$. Additional information on these calculations, including convergence studies on the bond dimension used and other parameters, is presented in~\cref{app:Convergence}.
  \item Number of steps:  From the error coefficient and the target $\epsilon$ (which we take to be $\epsilon = 1$ eV throughout), we determine the maximum allowable time step $\tau$. That is, we choose time step $\tau$ such that the expected error $\tau^p|\braket{Y_p}|$ is under the target error $\epsilon$ on average for the evaluated eigenstates. Then, we compute $r = \lceil t_j/\tau \rceil$.
  \item Total cost:  We multiply the number of steps $r$ by the per-step cost $C_{\text{step}}$ to get the overall resource estimate $C_\text{total}$.
\end{enumerate}
In the case of the Li-excess cluster XAS calculation, the parameters for the algorithm are as follows \cite{fomichev2025fast}: we use the maximum allowed peak position error of $\epsilon = 1$ eV; broadening of $\eta = 0.05$ Ha, discrete time signal time step $\delta = \pi / 4$ and $j_{\max} = 200$, giving a maximal evolution time for $t_{j_\text{max}} = \delta j_\text{max} = 50\pi$ a.u. Each technique described below improves $C_{\text{total}}$ in one of two ways: it either reduces the per-step cost $C_{\text{step}}$, or suppresses the Trotter error so that fewer steps $r$ are needed at the same accuracy $\epsilon$.

\subsection{Qubitization resource estimate\label{app:qubitized_xas}}

\begin{table}[t]
\centering
\begin{tabular}{|c||cccc||cccc|}
\hline
$N$ & \multicolumn{4}{c||}{$M=2N$} & \multicolumn{4}{c|}{$M=3N$} \\
\hline
 & Qubits & $\lambda$ & Toffoli & $\lambda \times$ Toffoli & Qubits & $\lambda$ & Toffoli & $\lambda \times$ Toffoli \\
\hline
6  & 145 & 38.96  & 853   & 33,231  & 150 & 38.80  & 1,021 & 39,619  \\
9  & 195 & 64.88  & 1,305 & 84,670  & 196 & 61.81  & 1,407 & 86,972  \\
10 & 210 & 40.04  & 1,447 & 57,945  & 211 & 39.43  & 1,557 & 61,388  \\
11 & 226 & 57.14  & 1,593 & 91,019  & 231 & 57.17  & 1,769 & 101,142 \\
14 & 271 & 108.76 & 1,897 & 206,316 & 276 & 124.96 & 2,261 & 282,528 \\
18 & 336 & 114.84 & 2,495 & 286,523 & 337 & 117.02 & 2,713 & 317,464 \\
24 & 427 & 149.36 & 3,365 & 502,600 & 432 & 151.56 & 3,707 & 561,815 \\
28 & 487 & 153.73 & 3,691 & 567,408 & 492 & 154.82 & 4,367 & 676,080 \\
\hline
\end{tabular}
\caption{Number of qubits and Toffoli gates needed to implement time evolution for unit time $t = 1$ a.u. via THC-based qubitization for the Li$_4$Mn$_2$O cluster with active space sizes $N$, for THC ranks $M = 2N$ and $M = 3N$, and one-norm $\lambda$. These results were generated with OpenFermion~\cite{mcclean2020openfermion} and PennyLane~\cite{bergholm2018pennylane}.}
\label{tab:thc_qubitization_results}
\end{table}

For the qubitization comparison in \cref{fig:trotter_vs_qubitization} we follow the symmetry-shifted THC block-encoding of Refs.~\cite{lee2021even,caesura2025faster}, using the same Li$_4$Mn$_2$O active spaces and Hamiltonians from \cref{ssec:xas_intro} with THC rank $M = 2N$. The per-walk-step Toffoli count and logical-qubit count are obtained with the PennyLane estimator~\cite{bergholm2018pennylane} at coefficient precision $\aleph = 13$ and rotation precision $\beth = 13$ bits, matching the $\epsilon = 1$\,eV peak-shift target used throughout this appendix. The total Toffoli count reported in \cref{fig:trotter_vs_qubitization} is then $\lceil \lambda\, t_{j_\text{max}} \rceil$ walk-operator applications times this per-step cost. Here, $\lambda$ is the THC 1-norm of the factorized Hamiltonian and $t_{j_\text{max}} = 50\pi$\,a.u. is selected as in the Trotter simulation, with the Chebyshev transform of \cref{app:chebyshev_transform} used in place of quantum signal processing to recover the spectrum from the walk signal.

\section{Rigorous statements of the main text}\label{app:proofs}

\subsection{The Suzuki hierarchy}

\begin{prop}[Suzuki hierarchy]\label{prop:Suzuki}

A product formula is said to achieve order $k$ if the leading order contribution in the effective Hamiltonian is $O(\tau^k)$. We will denote it by $U_k(\tau)$. The Suzuki hierarchy provides a systematic way of constructing product formulas of arbitrary even order \cite{suzuki1990fractal,suzuki1991general}
\begin{equation}
    U_{2k}(\tau) = U_{2k-2}^2(u_k\tau) U_{2k-2}((1-4u_k)\tau) U^2_{2k-2}(u_k\tau).
\end{equation}
where $u_k = 1/(4-4^{1/(2k-1)})$. 
\end{prop}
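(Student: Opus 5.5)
The claim to establish is that if $U_{2k-2}$ has order $2k-2$ and is time-symmetric, $U_{2k-2}(\tau)U_{2k-2}(-\tau)=1$, then the five-stage composition $U_{2k}$ has order $2k$ and is again time-symmetric. The result then follows by induction on $k$. The base case is the Strang formula $U_2$ of \cref{eq:U2}, which is symmetric and, by the symmetric BCH expansion \cref{eq:symmBCH}, has effective Hamiltonian $-i\tau H + (-i\tau)^3 Y^{(2)}_3+\ldots$ containing only odd powers of $\tau$.

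\textbf{Step 1 (odd expansion).} I would first show that time-symmetry forces an odd expansion. Write $U_{2k-2}(\tau)=\exp(Z(\tau))$ with $Z$ a formal power series in $\tau$, as given by the BCH expansion \cref{eq:BCH}. Symmetry gives $\exp(Z(\tau))\exp(Z(-\tau))=1$, hence $Z(-\tau)=-Z(\tau)$. So $Z$ is odd in $\tau$, and the order hypothesis gives
\[
Z(\tau)=-i\tau H+\tau^{2k-1}E_{2k-1}+\tau^{2k+1}E_{2k+1}+\ldots,
\]
for some operators $E_j$ (nested commutators of the fragments).

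\textbf{Step 2 (combine the stages).} Compose the five factors with scalings $u_k\tau,u_k\tau,(1-4u_k)\tau,u_k\tau,u_k\tau$. The composition is palindromic, and each factor is symmetric, so the product is symmetric. By Step 1 its logarithm $W(\tau)$ is therefore odd in $\tau$.

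Expand $W$ with the BCH formula \cref{eq:BCH}, grouping terms by degree in $\tau$.
\begin{itemize}
\item The degree-one part is $-i\tau H(4u_k+1-4u_k)=-i\tau H$.
\item Any commutator arising in the expansion involves at least two of the $Z$'s. If every entry is the leading term $-i c\tau H$, all such entries are proportional to $H$ and the commutator vanishes. So every nonvanishing commutator contains at least one error term $E_j$, which has degree $\ge 2k-1$, together with at least one further factor of degree $\ge1$. Its total degree is therefore $\ge 2k$.
\item By the oddness of $W$, no degree-$2k$ terms can survive, so every commutator contributes only at degree $\ge 2k+1$.
\end{itemize}
Consequently the coefficient of $\tau^{2k-1}$ in $W$ comes only from the linear terms:
\[
\bigl(4u_k^{2k-1}+(1-4u_k)^{2k-1}\bigr)E_{2k-1}.
\]

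\textbf{Step 3 (choice of $u_k$).} This coefficient vanishes iff $(1-4u_k)^{2k-1}=-4u_k^{2k-1}$. Because $2k-1$ is odd, the real root is $1-4u_k=-4^{1/(2k-1)}u_k$, i.e. $u_k=1/(4-4^{1/(2k-1)})$. Then $W(\tau)=-i\tau H+O(\tau^{2k+1})$, so $U_{2k}$ is symmetric and of order $2k$, which closes the induction.

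The main obstacle is making the degree-counting in Step 2 rigorous: one must argue that all BCH commutators either vanish or have degree at least $2k$, and that even degrees are excluded by symmetry. I would handle this within the formal power series setting, where for small $\tau$ all BCH series converge.
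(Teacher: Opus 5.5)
Your proposal is correct and follows the same route as the paper. You compose the five stages via BCH, identify the $\tau^{2k-1}$ coefficient as $\bigl(4u_k^{2k-1}+(1-4u_k)^{2k-1}\bigr)$ times the leading error operator, and solve for $u_k$ while keeping the linear term intact. Your extra steps fill in what the paper's sketch leaves implicit: oddness of the logarithm from time-symmetry, the fact that BCH commutators only enter at degree $\ge 2k$, and oddness again to eliminate the $\tau^{2k}$ term. In particular, you make explicit that $U_{2k-2}$ must be symmetric for the composite to reach order $2k$ rather than $2k-1$.
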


By definition of the order of a product formula,
\begin{equation}
    U_{2k-2}(\tau) = \exp(-i \tau H + (-i\tau)^{2k-1}Y^{(2k-2)}_{2k-1} +\ldots),
\end{equation}
Combining multiple $U_{2k-2}(\tau)$ steps in~\cref{eq:U_2k} using the BCH formula,
\begin{equation}\label{eq:U_n_exp}
    U_{2k}(\tau) = \exp(-i [4 u_k + (1-4u_k)]\tau H 
    + (-i\tau)^{2k-1} [4 u_k^{2k-1}+ (1-4u_k)^{2k-1}]Y^{(2k-2)}_{2k-1}+ \ldots),
\end{equation}
The scalar
$u_k$ represents the solution of the equation that cancels the $\tau^{2k-1}$ coefficient in the exponent of~\cref{eq:U_n_exp}:
\begin{equation}
    4 (u_k \tau)^{2k-1}+ [(1-4u_k)\tau]^{2k-1} = 0
\end{equation}
while respecting the correctness of the linear evolution term.
\begin{equation}
    4 u_k + (1-4u_k) = 1.
\end{equation}

Using the BCH expansion and the coefficients, we can write
\begin{align}\label{Y^{4}_5}
    Y^{(4)}_{5} &= \frac{\left(- 7094 \cdot 2^{\frac{2}{3}} - 1531 \cdot \sqrt[3]{2} + 13189\right) }{108 \left(- 31222 \cdot \sqrt[3]{2} - 39275 + 49524 \cdot 2^{\frac{2}{3}}\right)}\left[\left[H,Y^{(2)}_{3}\right],H\right] + \frac{\left(- 4061 \cdot 2^{\frac{2}{3}} - 5060 + 9128 \cdot \sqrt[3]{2}\right)}{36 \left(- 31222 \cdot \sqrt[3]{2} - 39275 + 49524 \cdot 2^{\frac{2}{3}}\right)}Y^{(2)}_{5}\nonumber\\
    &= -0.00405944185443219\left[\left[H,Y^{(2)}_{3}\right],H\right] - 0.074375995396295 Y^{(2)}_{5}.
\end{align}

\subsection{Order conditions.\label{app:order_conditions}}

\begin{prop}[Order conditions for a five-exponential symmetric formula with two fragments]\label{prop:order_conditions}
Consider the symmetric product formula
\begin{equation}\label{eq:5exp_formula}
  S(\tau) = e^{-ia_1\tau H_0}\, e^{-ib_1\tau H_1}\,
  e^{-ia_2\tau H_0}\, e^{-ib_1\tau H_1}\, e^{-ia_1\tau H_0},
\end{equation}
where $H = H_0 + H_1$, and $a_1, a_2, b_1 \in \mathbb{R}$ are free parameters. Then:
\begin{enumerate}
  \item \textbf{First-order (consistency) conditions.}
  $S(\tau)$ approximates $e^{-i\tau H}$ to first order if and only if
  \begin{equation}\label{eq:first_order_conds}
    2a_1 + a_2 = 1, \qquad 2b_1 = 1.
  \end{equation}

  \item \textbf{Second-order conditions.}
  Because $S(\tau)$ is manifestly palindromic, $S(\tau) = S^\dagger(-\tau)$,
  all even-order error terms in the effective Hamiltonian vanish
  automatically. The formula is therefore at least second order
  whenever~\cref{eq:first_order_conds} holds.

  \item \textbf{Third-order conditions.}
  Assuming~\cref{eq:first_order_conds}, the effective Hamiltonian is
  \begin{equation}\label{eq:5exp_eff_ham}
    S(\tau) = \exp\!\Big(-i\tau H
    + i\tau^3\big[c_{001}\,[H_0,[H_0,H_1]]
    + c_{101}\,[H_1,[H_0,H_1]]\big]
    + O(\tau^5)\Big),
  \end{equation}
  where
  \begin{align}
    c_{001} &= \frac{a_2^2\, b_1 - 2\,a_1\, b_1\,(a_1 + a_2)}{6},
    \label{eq:c001}\\[4pt]
    c_{101} &= \frac{a_2\, b_1^2 - 4\,a_1\, b_1^2}{6}.
    \label{eq:c101}
  \end{align}
  Setting $c_{001} = 0$ and $c_{101} = 0$
  (together with~\cref{eq:first_order_conds})
  yields the third-order conditions.
  Using $b_1 = 1/2$ and $a_2 = 1 - 2a_1$, these reduce to:
  \begin{align}
    c_{001} = 0 &\;\Longleftrightarrow\;
    6a_1^2 - 6a_1 + 1 = 0
    \quad\Rightarrow\quad
    a_1 = \frac{3 \pm \sqrt{3}}{6},
    \label{eq:cond1_reduced}\\[4pt]
    c_{101} = 0 &\;\Longleftrightarrow\;
    a_2 - 4a_1 = 0
    \quad\Rightarrow\quad
    a_1 = \tfrac{1}{6},\;a_2 = \tfrac{2}{3}.
    \label{eq:cond2_reduced}
  \end{align}
  Since $a_1 = 1/6$ does not satisfy~\cref{eq:cond1_reduced},
  the two conditions are incompatible:
  this five-exponential symmetric ansatz cannot achieve
  fourth order for generic $H_0$, $H_1$.
\end{enumerate}
\end{prop}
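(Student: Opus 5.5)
The plan is to compute the effective Hamiltonian of $S(\tau)$ directly, working from the middle outwards with the symmetric BCH formula \cref{eq:symmBCH}. Write $A=-ia_1\tau H_0$, $B=-ib_1\tau H_1$ and $C=-ia_2\tau H_0$, so that $S(\tau)=e^{A}e^{B}e^{C}e^{B}e^{A}$.

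\textbf{Inner factor.} Apply \cref{eq:symmBCH} to $e^{B}e^{C}e^{B}$, taking $X=2B$ and $Y=C$. This gives
\begin{equation*}
Z_1=2B+C-\tfrac{1}{24}[2B,[2B,C]]-\tfrac{1}{12}[C,[2B,C]]+O(\tau^5).
\end{equation*}
There is no $O(\tau^4)$ term because the inner product is itself palindromic.

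\textbf{Outer factor.} Apply \cref{eq:symmBCH} again to $e^{A}e^{Z_1}e^{A}$, taking $X=2A$ and $Y=Z_1$. Only the linear part $2B+C$ of $Z_1$ has to be kept inside the nested commutators: the cubic part of $Z_1$ would only produce terms of order $\tau^5$. The result is
\begin{equation*}
\log S = 2A+Z_1-\tfrac{1}{24}[2A,[2A,2B+C]]-\tfrac{1}{12}[2B+C,[2A,2B+C]]+O(\tau^5).
\end{equation*}

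\textbf{Items 1 and 2.} The linear term is $-i\tau\big((2a_1+a_2)H_0+2b_1H_1\big)$. Matching it to $-i\tau H$ for generic, linearly independent $H_0,H_1$ gives exactly \cref{eq:first_order_conds}; that settles item 1. For item 2, observe that the palindromic structure gives $S(\tau)S(-\tau)=\bm{1}$. Hence $\log S(-\tau)=-\log S(\tau)$, so the BCH series of $S$ is odd in $\tau$. Consistency is therefore the only second-order condition, and the formula is automatically second order once \cref{eq:first_order_conds} holds.

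\textbf{Item 3.} First reduce everything to the two-element Hall basis $[H_0,[H_0,H_1]]$ and $[H_1,[H_0,H_1]]$, using $[H_0,H_0]=0$ and antisymmetry. Terms such as $[C,[2A,C]]$ vanish because $A$ and $C$ commute, and $[2B,[2B,C]]$ becomes $+[H_1,[H_0,H_1]]$ up to scalars. Next factor out $(-i\tau)^3=i\tau^3$ and read off the two coefficients $c_{001}$ and $c_{101}$ as polynomials in $a_1,a_2,b_1$. Finally substitute $b_1=1/2$ and $a_2=1-2a_1$.
\begin{itemize}
\item The coefficient of $[H_1,[H_0,H_1]]$ is linear in $a_1$. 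It vanishes only at $a_1=1/6$, $a_2=2/3$.
\item The coefficient of $[H_0,[H_0,H_1]]$ reduces to a quadratic in $a_1$, with roots $(3\pm\sqrt3)/6$.
\end{itemize}
Since $1/6$ is not a root of that quadratic, both coefficients cannot vanish simultaneously. Because the two nested commutators are linearly independent for generic $H_0,H_1$ (the free Lie algebra), fourth order is impossible for this ansatz.

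\textbf{Expected obstacle.} The hard step is the bookkeeping of the cubic terms. Contributions arise both from the inner $Z_1$ and from the outer cross-commutators $[2A,[2A,C]]$, $[2A,[2A,2B]]$, $[2B,[2A,2B]]$, $[C,[2A,2B]]$ and $[2B,[2A,C]]$. Sign conventions are easy to slip on: for example $[H_1,[H_1,H_0]]=-[H_1,[H_0,H_1]]$, as is the overall $(-i)^3$. I would cross-check the final coefficients numerically, comparing $\log S(\tau)$ against random small matrices for a few values of $a_1$. In particular, if the prefactors printed in \cref{eq:c001,eq:c101} differ by normalization, I would use this check to confirm that the roots, and hence the incompatibility conclusion, are unchanged.
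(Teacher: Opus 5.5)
Your proposal is correct and follows the paper's proof essentially step for step: symmetric BCH on the inner triple $e^{B}e^{C}e^{B}$ with $X=2B$, then on the outer triple with $X=2A$, $Y=Z_1$, keeping only the linear part of $Z_1$ inside the nested commutators, and finally reading off and comparing the two Hall-basis coefficients. Carrying out your bookkeeping reproduces exactly the paper's $c_{001}$ and $c_{101}$, with no normalization discrepancy, so the roots $a_1=(3\pm\sqrt{3})/6$ versus $a_1=1/6$ and the incompatibility conclusion stand as stated.
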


\begin{proof}
\textbf{First-order conditions.}
Summing the exponents at linear order in $\tau$:
\begin{align}
  (a_1 + a_2 + a_1)\,\tau H_0 &= \tau H_0 \;\Longrightarrow\; 2a_1 + a_2 = 1,\\
  (b_1 + b_1)\,\tau H_1 &= \tau H_1 \;\Longrightarrow\; 2b_1 = 1.
\end{align}

\textbf{Second-order conditions.}
The palindromic symmetry
$S(\tau) = S^\dagger(-\tau)$
forces the exponent to be an odd function of $\tau$
(up to the leading $-i\tau H$ term), so all
even-power corrections vanish identically.

\textbf{Third-order conditions.}
We evaluate the effective Hamiltonian by applying
the symmetric BCH expansion~\cref{eq:symmBCH} twice, working from the
inside out.

\emph{Step 1 (inner triple).}
Group the three inner exponentials as
$e^{B}\,e^{A_2}\,e^{B}$, where
$B = -ib_1\tau H_1$ and $A_2 = -ia_2\tau H_0$.
Identifying $X = 2B = -i(2b_1)\tau H_1$ and $Y = A_2 = -ia_2\tau H_0$,
the symmetric BCH gives
\begin{equation}
  Z_1 = X + Y
  - \tfrac{1}{24}[X,[X,Y]]
  - \tfrac{1}{12}[Y,[X,Y]] + O(\tau^5).
\end{equation}
Computing the required commutators:
\begin{align}
  [X, Y] &= -2a_2 b_1\,\tau^2\,[H_1, H_0],\\
  [X,[X,Y]] &= 4i\,a_2 b_1^2\,\tau^3\,[H_1,[H_1,H_0]],\\
  [Y,[X,Y]] &= 2i\,a_2^2 b_1\,\tau^3\,[H_0,[H_1,H_0]].
\end{align}
Using $[A,[B,C]] = -[A,[C,B]]$, we obtain
\begin{equation}\label{eq:Z1_result}
  Z_1 = -i\tau(a_2 H_0 + 2b_1 H_1)
  + \frac{i\,a_2^2 b_1\,\tau^3}{6}\,[H_0,[H_0,H_1]]
  + \frac{i\,a_2 b_1^2\,\tau^3}{6}\,[H_1,[H_0,H_1]]
  + O(\tau^5).
\end{equation}

\emph{Step 2 (outer triple).}
The full formula is
$e^{A_1}\,e^{Z_1}\,e^{A_1}$, where $A_1 = -ia_1\tau H_0$.
Setting $X' = 2A_1 = -i(2a_1)\tau H_0$ and $Y' = Z_1$:
\begin{equation}
  Z_{\mathrm{tot}} = X' + Y'
  - \tfrac{1}{24}[X',[X',Y']]
  - \tfrac{1}{12}[Y',[X',Y']] + O(\tau^5).
\end{equation}
For the commutator terms, only the leading part
of $Y'$ contributes at third order:
$Y'_{\mathrm{lead}} = -i\tau(a_2 H_0 + 2b_1 H_1)$.
The relevant commutators are:
\begin{align}
  [X', Y'] &= -4a_1 b_1\,\tau^2\,[H_0, H_1] + O(\tau^4),\\
  [X',[X',Y']] &= 8i\,a_1^2 b_1\,\tau^3\,[H_0,[H_0,H_1]],\\
  [Y',[X',Y']] &= 4i\,a_1 a_2 b_1\,\tau^3\,[H_0,[H_0,H_1]] + 8i\,a_1 b_1^2\,\tau^3\,[H_1,[H_0,H_1]].
\end{align}
The outer-layer third-order contributions are therefore
\begin{equation}\label{eq:outer_contrib}
  -\frac{i\,a_1 b_1(a_1{+}a_2)\,\tau^3}{3}\,[H_0,[H_0,H_1]]
  - \frac{2i\,a_1 b_1^2\,\tau^3}{3}\,[H_1,[H_0,H_1]].
\end{equation}

\emph{Step 3 (combining both layers).}
Adding the inner-layer contributions from~\cref{eq:Z1_result} to the outer-layer contributions~\cref{eq:outer_contrib}:
\begin{align}
  \text{Coeff.\ of } [H_0,[H_0,H_1]]:&\quad i\tau^3\!\left(
  \frac{a_2^2 b_1}{6} - \frac{a_1 b_1(a_1{+}a_2)}{3}\right) = i\tau^3 c_{001},\\[4pt]
  \text{Coeff.\ of } [H_1,[H_0,H_1]]:&\quad i\tau^3\!\left( \frac{a_2 b_1^2}{6} - \frac{2a_1 b_1^2}{3}\right) = i\tau^3 c_{101},
\end{align}
with $c_{001}$ and $c_{101}$ as in~\cref{eq:c001,eq:c101}. This establishes~\cref{eq:5exp_eff_ham}.

\emph{Incompatibility.}
Using $b_1 = 1/2$ and $a_2 = 1 - 2a_1$:
\begin{itemize}
  \item $c_{101} = 0$ requires $a_2 = 4a_1$, combined with $2a_1 + a_2 = 1$ gives $a_1 = 1/6$, $a_2 = 2/3$.
  \item $c_{001} = 0$ reduces to $6a_1^2 - 6a_1 + 1 = 0$, with roots $a_1 = (3 \pm \sqrt{3})/6 \approx 0.789$ or $0.211$.
\end{itemize}
Since $a_1 = 1/6$ does not satisfy $6(1/6)^2 - 6(1/6) + 1 = 1/6 \neq 0$, the two conditions cannot be simultaneously satisfied.
\end{proof}

\textbf{Remark.}
Despite this incompatibility, one may still choose parameters that cancel \emph{one} of the two third-order commutators. For instance, choosing $c_{101} = 0$ ($a_1 = 1/6$, $a_2 = 2/3$, $b_1 = 1/2$) eliminates the $[H_1,[H_0,H_1]]$ term, leaving a residual $c_{001} = 1/72$ on $[H_0,[H_0,H_1]]$. This selective cancellation is precisely the strategy exploited by the near-integrable formulas in~\cref{ssec:Near_integrability}: one prioritizes canceling the nested commutators whose fragments have the largest norms.

\subsection{Near-integrability and processing}

\begin{prop}[Error of the near-integrable formula]\label{prop:near_integrable_error}
Let $H = H_A + \alpha H_B$ with $\alpha \ll 1$. Define the near-integrable product formula
\begin{equation}
  \tilde{V}_{4,2}(\tau) = U_{4,A}^{1/2}(\tau)\, U_{2,B}(\tau)\, U_{4,A}^{1/2}(\tau),
\end{equation}
where $U_{4,A}(\tau)$ is a fourth-order formula for $H_A$ and $U_{2,B}(\tau)$ is a second-order formula for $\alpha H_B$. Then the effective Hamiltonian satisfies
\begin{equation}
  \tilde{V}_{4,2}(\tau) = \exp\!\left(-i\tau H + O(\tau^5) + O(\alpha\tau^3)\right).
\end{equation}
Moreover, using $U_{4,A}^{1/2}(\tau)$ instead of $U_{4,A}(\tau/2)$ preserves the correct linear term and does not change the product formula order.
\end{prop}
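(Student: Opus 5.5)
The plan is to write each block as an exact exponential of an effective generator and then combine the blocks with the symmetric BCH expansion \cref{eq:symmBCH}.

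\textbf{Step 1: effective generators of the blocks.} By the definition of a fourth-order formula, $U_{4,A}(\tau)=\exp(-i\tau H_A+(-i\tau)^5 Y^{(4)}_{5,A}+O(\tau^7))$. We define the half power through this generator, $U_{4,A}^{1/2}(\tau)=\exp\big(\tfrac12[-i\tau H_A+(-i\tau)^5Y^{(4)}_{5,A}+\ldots]\big)$. Its linear term is exactly $-i\tau H_A/2$, and its error stays $O(\tau^5)$; it is in fact half of the full-step error.

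This settles the ``moreover'' clause. $U_{4,A}(\tau/2)$ has the same linear term $-i\tau H_A/2$ and error $2^{-5}(-i\tau)^5Y^{(4)}_{5,A}$, so it also differs from $\exp(-i\tau H_A/2)$ only at $O(\tau^5)$. The two choices therefore agree through order four, and the order of the formula is unchanged.

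For the middle block we use \cref{eq:Y_3^2} applied to the fragments of $\alpha H_B$: $U_{2,B}(\tau)=\exp(-i\tau\alpha H_B+(-i\tau)^3\alpha^3Y^{(2)}_{3,B}+O(\alpha^5\tau^5))$. Every nested commutator in this error contains only $B$-fragments, each carrying a factor $\alpha$, so the error is $O(\alpha^3\tau^3)\subset O(\alpha\tau^3)$.

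\textbf{Step 2: symmetric BCH on the palindrome.} The formula $\tilde V_{4,2}=e^{X/2}e^{Y}e^{X/2}$ has $X=-i\tau H_A+O(\tau^5)$ and $Y=-i\tau\alpha H_B+O(\alpha^3\tau^3)$. Applying \cref{eq:symmBCH} gives
\[
Z=X+Y-\tfrac1{24}[X,[X,Y]]-\tfrac1{12}[Y,[X,Y]]+\ldots
\]
The linear part $X+Y$ equals $-i\tau(H_A+\alpha H_B)=-i\tau H$, up to the $O(\tau^5)$ and $O(\alpha^3\tau^3)$ block errors from Step 1.

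The cubic terms are $-\tfrac{(-i\tau)^3\alpha}{24}[H_A,[H_A,H_B]]$ and $-\tfrac{(-i\tau)^3\alpha^2}{12}[H_B,[H_A,H_B]]$. Each is $O(\alpha\tau^3)$, which reproduces \cref{eq:V_42_Hamiltonian}.

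The formula is palindromic, since each block is symmetric and so $\tilde V_{4,2}(\tau)=\tilde V_{4,2}^\dagger(-\tau)$. Hence all even powers of $\tau$ vanish, as in the discussion after \cref{eq:U2}.

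\textbf{Step 3: bounding the remainder.} This is the main obstacle. We must show that every remaining term of order $\tau^5$ and higher is $O(\tau^5)$, and that every term of order $\tau^3$ carries at least one factor of $\alpha$.

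The key observation concerns a term of the BCH series that is a nested commutator involving $Y$ at least once. If it has total degree $\tau^3$, it must involve $Y$ or the block error of $Y$, because the pure-$X$ nested commutators vanish: $[X,X]=0$. Every such term carries a factor of $\alpha$.

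All remaining contributions come from degree five or more in $\tau$, or from the inner block errors. The inner block errors are $O(\tau^5)$ for $A$ and $O(\alpha^3\tau^3)$ for $B$.

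To make this rigorous, I would treat $\tau$ and $\alpha$ as formal parameters and note that the exponent is an analytic function of both for small $\tau$. Setting $\alpha=0$ makes the formula equal to $U_{4,A}^{1/2}U_{4,A}^{1/2}=U_{4,A}(\tau)$, whose generator is $-i\tau H_A+O(\tau^5)$. It follows that all $\alpha$-independent error is $O(\tau^5)$, and every other error term is divisible by $\alpha$. Combined with the oddness in $\tau$, this forces the $\alpha$-dependent error to start at $\tau^3$, which gives $-i\tau H+O(\tau^5)+O(\alpha\tau^3)$.
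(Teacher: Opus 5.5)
Your proposal is correct and follows the same route as the paper. Like the paper, you express each block through its effective generator and combine the blocks with the symmetric BCH expansion, which yields exactly the cubic terms $-\tfrac{(-i\tau)^3\alpha}{24}[H_A,[H_A,H_B]]$ and $-\tfrac{(-i\tau)^3\alpha^2}{12}[H_B,[H_A,H_B]]$ that the paper reads off. Your Step 3 goes beyond the paper: setting $\alpha=0$ recovers $U_{4,A}(\tau)$, so every $\alpha$-independent correction is $O(\tau^5)$, and time-symmetry removes the even orders. Your generator-level comparison of $U_{4,A}^{1/2}(\tau)$ with $U_{4,A}(\tau/2)$ likewise supplies a justification the paper leaves implicit, since it truncates the BCH series with ``$+\ldots$'' and gives no argument for the ``moreover'' clause.
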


We analyze the components of $\tilde{V}_{4,2}(\tau)$ by writing the
effective Hamiltonians of the individual formulas. A fourth-order
formula for $H_A$ satisfies
\begin{equation}
 U_{4,A}(\tau) = \exp(-i\tau H_A + i\tau^5 Y_{5,A} + O(\tau^7)),
\end{equation}
and a second-order formula for $\alpha H_B$ satisfies
\begin{equation}
 U_{2,B}(\tau) = \exp(-i\alpha \tau H_B + i\alpha^3\tau^3 Y_{3,B}
 + O(\alpha^5\tau^5)),
\end{equation}
where $Y_{5,A}$ and $Y_{3,B}$ denote the linear combinations of nested
commutators forming the leading-order error in the respective product
formulas.

Now, applying the Baker-Campbell-Hausdorff (BCH) formula to the
composition $\tilde{V}_{4,2}(\tau) = U_{4,A}^{1/2}(\tau)\,
U_{2,B}(\tau)\, U_{4,A}^{1/2}(\tau)$, we obtain
\begin{equation}
  \tilde{V}_{4,2}(\tau) = \exp\bigg(-i \tau H + (-i\tau)^5 Y_{5,A}
  + (-i\tau)^3\alpha^3 Y_{3,B} 
  - (-i\tau)^3 \alpha^2 \frac{[H_B, [H_A, H_B]]}{12}
  - (-i\tau)^3 \alpha \frac{[H_A, [H_A, H_B]]}{24} + \ldots\bigg).
\end{equation}
The leading error is therefore $O(\tau^5 + \alpha\tau^3)$, as
claimed.

\begin{prop}[Processed near-integrable formula]\label{prop:processing}
Let $V_{4,2}(\tau)$ be the near-integrable formula of~\cref{prop:near_integrable_error}, with kernel effective Hamiltonian $K_\tau$. Define the processor
\begin{equation}
 P = \alpha \tau^2  \frac{[H_A, H_B]}{24}.
\end{equation}
Then the processed formula $U = e^{-P}e^{K}e^{P}$ generates an effective Hamiltonian satisfying
\begin{equation}
 H_{\mathrm{eff},\tau} = \tau(H_A + \alpha H_B) + O(\tau^5) + O(\alpha^2 \tau^3).
\end{equation}
The processor eliminates the $O(\alpha\tau^3)$ error term and halves the coefficient of the $O(\alpha^2\tau^3)$ term. In the regime $\alpha \sim \tau$, the processed formula achieves pseudo-fourth-order accuracy at a cost comparable to a second-order formula.
\end{prop}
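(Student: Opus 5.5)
The plan is a direct computation. Conjugate the kernel by the processor and expand with the Hadamard lemma (\cref{lemma:Hadamard}). Since $e^{-P}e^{K}e^{P}=\exp\!\left(e^{-P}Ke^{P}\right)$, the processed effective Hamiltonian is
\[
H_{\mathrm{eff},\tau}=K-[P,K]+\tfrac{1}{2!}[P,[P,K]]-\ldots,
\]
as in \cref{eq:processed_Hamiltonian}. I will write $K$ in the Hermitian normalisation used in \cref{ssec:Processing}:
\[
K=\tau(H_A+\alpha H_B)+c_1\,\alpha\tau^3[H_A,[H_A,H_B]]+c_2\,\alpha^2\tau^3[H_B,[H_A,H_B]]+O(\alpha^3\tau^3)+O(\tau^5),
\]
with $c_1,c_2$ read off from \cref{prop:near_integrable_error}. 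First I will observe that $[H_A,H_B]$ is anti-Hermitian. Hence $P$ is anti-Hermitian, $e^{P}$ is unitary, and the conjugated generator is again a legitimate Hermitian effective Hamiltonian.

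The key step is the first commutator. Only the leading part $\tau(H_A+\alpha H_B)$ of $K$ contributes at order $\tau^3$. Using $[[H_A,H_B],H_A]=-[H_A,[H_A,H_B]]$ and the analogous identity for $H_B$, I get
\[
-[P,K]=\frac{\alpha\tau^3}{24}[H_A,[H_A,H_B]]+\frac{\alpha^2\tau^3}{24}[H_B,[H_A,H_B]]+\ldots
\]
Adding this to $K$ cancels the $O(\alpha\tau^3)$ term exactly provided the kernel coefficient is $c_1=-1/24$. It also changes the $\alpha^2\tau^3$ coefficient from $c_2=-1/12$ to $-1/12+1/24=-1/24$, i.e.\ halves it, which is the claimed statement.

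The remaining terms I expect to bound by simple power counting in $\tau$ and $\alpha$:
\begin{itemize}
\item $[P,\,\cdot\,]$ applied to the $\tau^3$ and $\tau^5$ error parts of $K$ gives $O(\alpha^2\tau^5)+O(\alpha\tau^7)$.
\item $\tfrac12[P,[P,K]]$ is $O(\alpha^2\tau^5)$.
\item All higher nested terms are smaller still.
\end{itemize}
Every remainder is therefore absorbed into $O(\tau^5)+O(\alpha^2\tau^3)$. The final regime claim follows by substitution: for $\alpha\sim\tau$, the surviving $\alpha^2\tau^3$ term is $O(\tau^5)$, giving pseudo-fourth order. The implementation cost is that of $V_{4,2}$, a second-order cost, plus a one-off processor.

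The main obstacle is the sign and normalisation bookkeeping. The paper writes $K$ both as the anti-Hermitian exponent $-i\tau H+(-i\tau)^3(\cdots)$ and as a Hermitian ``$H+\tau^2(\cdots)$''. I must fix one convention and check that the sign of $c_1$ in $V_{4,2}$ matches the chosen sign of $P$ in $e^{-P}e^{K}e^{P}$; otherwise the processor doubles the term instead of cancelling it. I will do this by recomputing the $[H_A,[H_A,H_B]]$ coefficient of the symmetric composition $U_{4,A}^{1/2}U_{2,B}U_{4,A}^{1/2}$ directly from the symmetric BCH formula \cref{eq:symmBCH}, with $X=-i\tau H_A$ and $Y=-i\alpha\tau H_B$. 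If needed, I will flip the sign of $P$, which costs nothing.
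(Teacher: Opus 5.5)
Your route is the paper's route: Hadamard-expand $e^{-P}Ke^{P}$, keep only the leading part $\tau(H_A+\alpha H_B)$ of $K$ inside $-[P,K]$, and power-count the rest. Your expression for $-[P,K]$ coincides with the paper's $-\tau[P_{\tau,2},H_A+\alpha H_B]$. The sign check you deferred, however, decides the matter, and it goes against the statement as written.

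Carry it out with \cref{eq:symmBCH}, taking $X=-i\tau H_A$ and $Y=-i\alpha\tau H_B$. Since $(-i)^3=i$, you get $-\tfrac{1}{24}[X,[X,Y]]=-i\,\tfrac{\alpha\tau^3}{24}[H_A,[H_A,H_B]]$. So in your Hermitian normalisation $V_{4,2}=e^{-iK}$,
\begin{equation*}
K=\tau(H_A+\alpha H_B)+\tfrac{\alpha\tau^3}{24}[H_A,[H_A,H_B]]+\tfrac{\alpha^2\tau^3}{12}[H_B,[H_A,H_B]]+\ldots,
\end{equation*}
that is, $c_1=+\tfrac{1}{24}$ and $c_2=+\tfrac{1}{12}$. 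These are exactly the signs of the kernel displayed in \cref{ssec:Processing}. Adding your (correct) $-[P,K]$ then gives $\tfrac{\alpha\tau^3}{12}[H_A,[H_A,H_B]]+\tfrac{\alpha^2\tau^3}{8}[H_B,[H_A,H_B]]$. With $P=+\alpha\tau^2[H_A,H_B]/24$ and $U=e^{-P}e^{K}e^{P}$, the processor therefore doubles the $O(\alpha\tau^3)$ term instead of removing it.

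The paper's own proof reaches cancellation only because it writes $H_{\mathrm{eff},\tau,3}$ with coefficients $-\tfrac{1}{24},-\tfrac{1}{12}$ (and $+\tau^3\alpha^3Y_{B,3}$). In other words, it treats the coefficients of $(-i\tau)^3$ in \cref{prop:near_integrable_error} as if they were coefficients of $\tau^3$ in the Hermitian generator. That is precisely the conflation you warned about, and your proviso $c_1=-\tfrac{1}{24}$ inherits it.

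So the fallback you mention is required, not optional. The result holds with $P=-\alpha\tau^2[H_A,H_B]/24=\alpha\tau^2[H_B,H_A]/24$, or equivalently with the stated $P$ and $U=e^{P}e^{K}e^{-P}$. With that sign, $-[P,K]$ contributes $-\tfrac{\alpha\tau^3}{24}[H_A,[H_A,H_B]]-\tfrac{\alpha^2\tau^3}{24}[H_B,[H_A,H_B]]$. This cancels the first term exactly and takes $\tfrac{1}{12}$ to $\tfrac{1}{24}$.

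Your power counting of the remaining terms then goes through unchanged: $O(\alpha^2\tau^5)$ and $O(\alpha\tau^7)$ from $[P,\cdot\,]$ on the error parts, and $\tfrac12[P,[P,K]]=O(\alpha^2\tau^5)$. The $\alpha\sim\tau$ argument also survives. You should state the corrected sign of $P$ explicitly in the proof rather than leave it as a contingency, since the proposition as literally stated is false.
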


The effective Hamiltonian induced by conjugation with the processor $e^P$ is \cite{blanes2006composition}
\begin{equation}
  H_{\mathrm{eff},\tau} = \sum_{k=0}^\infty
  \frac{(-1)^k}{k!}\mathrm{ad}_{P_\tau}^k K_\tau
  = K_\tau - [P_\tau, K_\tau]
  + \frac{1}{2!}[P_\tau, [P_\tau, K_\tau]] + \ldots
\end{equation}
Adding a subindex to denote the $\tau$-order, the order conditions give
\begin{align}
  H_{\mathrm{eff},\tau,1} &= \tau(H_A + \alpha H_B),
  \quad \Rightarrow \quad P_{\tau,1} = 0, \\
  H_{\mathrm{eff},\tau,2} &= 0, \\
  H_{\mathrm{eff},\tau,3} &= -\tau^3 \left(\alpha^2
  \frac{[H_B, [H_A, H_B]]}{12} + \alpha
  \frac{[H_A, [H_A, H_B]]}{24}\right)  + \tau^3\alpha^3 Y_{B,3}
  - \tau [P_{\tau,2}, H_A + \alpha H_B].
\end{align}

We require $P_{\tau,2}$ to cancel the $\tau^3\alpha$ term. Choosing
$P = \alpha \tau^2  \frac{[H_A, H_B]}{24}$, we compute
\begin{equation}
  -\tau [P_{\tau,2}, (H_A + \alpha H_B)] = \tau [(H_A + \alpha H_B), P_{\tau,2}] 
  = \alpha \tau^3 \frac{[H_A, [H_A, H_B]]}{24}
  + \alpha^2 \tau^3 \frac{[H_B, [H_A, H_B]]}{24}.
\end{equation}
Adding this to $H_{\mathrm{eff},\tau,3}$ cancels the $O(\alpha \tau^3)$ error term exactly, and the $O(\alpha^2 \tau^3)$ coefficient is reduced from $\frac{1}{12}$ to $\frac{1}{12} - \frac{1}{24} = \frac{1}{24}$, i.e., halved. The processed formula therefore has error $O(\tau^5 + \alpha^2\tau^3)$, as claimed.

\begin{prop}[QSVT-stabilized leakage rate]\label{prop:qsvt_rate}
Let $P_{\mathrm{stab}}$ be a degree-$d$ (odd) stabilizing polynomial applied to the Trotter step~$U(\tau)$ via QSVT. The resulting enhanced step uses~$d$ queries to~$U$ and advances the simulation time by~$\tau$ (the stabilizing polynomial preserves the eigenphases of the physical-subspace block).

A degree-$d$ odd polynomial satisfying $P_{\mathrm{stab}}(1)=1$ has $\frac{d+1}{2}$ free coefficients. Imposing the flatness conditions $P_{\mathrm{stab}}^{(k)}(1) = 0$ for $k = 1,\ldots,\frac{d-1}{2}$ forces the deviation from unity to satisfy $1 - P_{\mathrm{stab}}(1-\delta) = O(\delta^{(d+1)/2})$. Since each singular-value deficit is $\delta_j = O(\tau^2)$, the per-step leakage probability is
\begin{equation}
  \tilde{p}_k \;=\; O\!\bigl(\tau^{2\cdot(d+1)/2}\bigr) \;=\; O\!\bigl(\tau^{d+1}\bigr),
\end{equation}
and the spectral-weight transfer rate is
\begin{equation}
  \Gamma_k' \;=\; \frac{\tilde{p}_k}{2\,\tau} \;=\; O\!\bigl(\tau^{d}\bigr). \label{eq:gamma_qsvt}
\end{equation}
For the minimal non-trivial choice $d = 3$ this gives $\tilde{p}_k = O(\tau^{4})$ and $\Gamma_k' = O(\tau^{3})$, an improvement of two orders in~$\tau$ over the unprotected rate $\Gamma_k = O(\tau)$, at a multiplicative query overhead of~$\times 3$ (three queries per~$\tau$ of simulation time, compared with one query per~$\tau$ for the unprotected formula).
\end{prop}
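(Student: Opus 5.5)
The plan is to treat the symmetry-protected Trotter step as a block unitary with respect to the splitting of Hilbert space into the physical subspace $\Pi_{\mathrm{ph}} = 1-Q_{\text{ph}}$ and the auxiliary-occupied subspace $Q_{\text{ph}}$. I would then apply QSVT to the physical block and track how the polynomial acts on its singular values. There are three steps: locate the singular values of the physical block, bound the polynomial's deviation near $1$, and convert the per-step loss into a rate.

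\textbf{Step 1: singular values of the physical block.} Write $A(\tau) = \Pi_{\mathrm{ph}} U(\tau) \Pi_{\mathrm{ph}}$. Unitarity gives $A^\dagger A = \Pi_{\mathrm{ph}} - \Pi_{\mathrm{ph}} U^\dagger Q_{\text{ph}} U \Pi_{\mathrm{ph}}$. After symmetry protection (\cref{prop:symmetry_protection}), the off-diagonal block $Q_{\text{ph}} U \Pi_{\mathrm{ph}}$ is $O(\tau)$, because the $O(1)$ leakage terms $Y_{\Delta m\neq 0}$ have been cancelled. Hence $\sigma_j^2 = 1 - O(\tau^2)$, so each deficit $\delta_j := 1-\sigma_j$ is $O(\tau^2)$, which is the input assumed in the statement. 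I would also note that QSVT with an odd polynomial $P$ maps $A = W\Sigma V^\dagger$ to $W P(\Sigma) V^\dagger$. The singular vectors, and therefore the phases carried by $W V^\dagger$, are untouched. This is what justifies the claim that the enhanced step still advances time by $\tau$, and that $d$ uses of $U$, $U^\dagger$ are needed.

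\textbf{Step 2: polynomial flatness.} An odd polynomial of degree $d$ has $(d+1)/2$ coefficients. The normalization $P(1)=1$ and the $(d-1)/2$ flatness conditions $P^{(k)}(1)=0$ give $(d+1)/2$ linear equations. I would show this system is solvable by exhibiting the explicit solution
\[
P(x) = \int_0^x (1-t^2)^{(d-1)/2}\,dt \Big/ \int_0^1 (1-t^2)^{(d-1)/2}\,dt .
\]
This is odd, has degree $d$, and vanishes to order $(d-1)/2$ in its derivative at $x=1$. Taylor expanding about $1$ then gives $1-P(1-\delta) = O(\delta^{(d+1)/2})$. I also need $|P|\le 1$ on $[-1,1]$, which holds because the integrand is nonnegative; this is the condition for QSVT to be implementable. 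For $d=3$ the solution is $P=(3x-x^3)/2$.

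\textbf{Step 3: leakage probability and rate.} The probability lost from the physical block on singular vector $j$ is $1-P(\sigma_j)^2 \approx 2(1-P(\sigma_j)) = O(\delta_j^{(d+1)/2}) = O(\tau^{d+1})$. Dividing by the time advanced per step gives the rate; I would follow the paper's convention $\Gamma = \tilde p/(2\tau)$, which yields $O(\tau^d)$. The unprotected comparison also comes from Step 1: there $\delta_j = O(\tau^2)$ gives $p = O(\tau^2)$ and $\Gamma = O(\tau)$. Specializing to $d=3$ gives the stated figures, with $\times 3$ queries per step.

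\textbf{Main obstacle.} The hard part is the claim that the eigenphases are preserved and that the time advanced is exactly $\tau$. QSVT acts on singular values, and the physical block is not normal, so $W V^\dagger$ is not literally the Trotter step restricted to the physical subspace. I would argue that $W V^\dagger$ is the polar unitary of $A$, and that it differs from $\Pi_{\mathrm{ph}} U \Pi_{\mathrm{ph}}$ only at $O(\tau^2)$. That difference is absorbed into the existing $O(\tau^2)$ effective-Hamiltonian error. The rate claims themselves only need Steps 1 and 2.
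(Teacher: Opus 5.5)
Your core argument is the paper's, and it is correct: the identity $D^\dagger D+G^\dagger G=\Pi_{\text{ph}}$ gives $\delta_j=O(\tau^2)$; QSVT acts only on singular values; flatness at $x=1$ controls the deficit; $1-P(\sigma_j)^2\approx 2\bigl(1-P(\sigma_j)\bigr)$; and $\Gamma_k'=\tilde p_k/(2\tau)$. The paper also writes out $\tilde p_k=\sum_j|c_{kj}|^2\bigl(1-P(\sigma_j)^2\bigr)$ for an eigenstate expanded in right singular vectors, which you leave implicit.

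Your Step~2 is stronger than the paper's, which writes down only the $d=3$ polynomial and asserts the rest.
\begin{itemize}
\item $P(x)\propto\int_0^x(1-t^2)^{(d-1)/2}\,dt$ shows the linear system is solvable for every odd $d$.
\item The flatness order follows directly from $P'(x)\propto(1-x)^{(d-1)/2}(1+x)^{(d-1)/2}$.
\item $|P|\le 1$ follows by monotonicity; the paper only asserts this, in its $d=3$ circuit remark.
\item Your count of $(d+1)/2$ unknowns against $(d+1)/2$ equations is the consistent one. The paper's count ``after imposing $P(1)=1$'' is off by one.
\end{itemize}

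Two of your side arguments need repair, although the rate claims survive.

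\textbf{Step~1.} The bound $\|Q_{\text{ph}}U\Pi_{\text{ph}}\|=O(\tau)$ does not come from symmetry protection. For any single step with bounded generator, $Q_{\text{ph}}\Pi_{\text{ph}}=0$ gives $Q_{\text{ph}}U\Pi_{\text{ph}}=-i\tau\,Q_{\text{ph}}H_{\text{eff}}\Pi_{\text{ph}}+O(\tau^2)$. An $O(1)$ leakage component of $H_{\text{eff}}$ is exactly what produces an $O(\tau)$ block. This is the paper's reasoning (``each factor contributes $O(\tau)$ leakage amplitude''), and your own unprotected baseline at the end of Step~3 silently relies on it. If the $O(1)$ leakage terms really had been cancelled in the generator, the block would be parametrically smaller, and $\Gamma_k=O(\tau)$ would not be the right comparison.

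\textbf{Main-obstacle paragraph.} Write $V_p=WV^\dagger$ for the polar unitary. An $O(\tau^2)$ per-step discrepancy between $\Pi_{\text{ph}}U\Pi_{\text{ph}}$ and $V_p$ corresponds to an $O(\tau)$ shift of the effective Hamiltonian, since per-step error $O(\tau^{p+1})\leftrightarrow$ effective-Hamiltonian error $O(\tau^{p})$. So it cannot be absorbed into the existing $O(\tau^2)$ error by order counting. What saves the idea is that $A=V_p|A|$ with $|A|=\Pi_{\text{ph}}-O(\tau^2)$ Hermitian. To first order the discrepancy only rescales eigenvalue moduli, and the phases move only at second order.

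The paper does not prove phase preservation either. It builds it into the statement and supports it only with heuristic remarks: near-normality of $D$, which concerns $|\lambda_j|$ rather than phases, and L\"owdin downfolding. Your observation $P^{(SV)}(A)=V_pP(|A|)=V_p+O(\tau^{d+1})$ describes what the QSVT step does more cleanly than those remarks.
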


Write $D = \Pi_{\text{ph}} U \Pi_{\text{ph}}$ and $G = Q_{\text{ph}} U \Pi_{\text{ph}}$ for the physical and leakage blocks of a single Trotter step.

\paragraph{Step 1: The operator identity.}
Since $U$ is unitary and $\Pi_{\text{ph}} + Q_{\text{ph}} = 1$,
\begin{equation}\label{eq:block_identity}
  D^\dagger D + G^\dagger G = \Pi_{\text{ph}} U^\dagger \Pi_{\text{ph}}\cdot \Pi_{\text{ph}} U \Pi_{\text{ph}} + \Pi_{\text{ph}} U^\dagger Q_{\text{ph}}\cdot Q_{\text{ph}} U \Pi_{\text{ph}} = \Pi_{\text{ph}} U^\dagger\!(\Pi_{\text{ph}}+Q_{\text{ph}})\,U\Pi_{\text{ph}} = \Pi_{\text{ph}}.
\end{equation}
In words: for any physical state, the probability of staying plus the probability of leaking equals one.

\paragraph{Step 2: Singular values of $D$.}
Since $G^\dagger G = \Pi_{\text{ph}} - D^\dagger D$, the two operators commute and share an eigenbasis $\{|w_j^R\rangle\}$ on the physical subspace:
\begin{equation}
  D^\dagger D\,|w_j^R\rangle = \sigma_j^2\,|w_j^R\rangle, \qquad G^\dagger G\,|w_j^R\rangle = (1-\sigma_j^2)\,|w_j^R\rangle.
\end{equation}
Here $\sigma_j$ are the singular values of $D$ and the $|w_j^R\rangle$ are its right singular vectors. Since each factor of $U$ contributes $O(\tau)$ leakage amplitude, $\|G\| = O(\tau)$, so every eigenvalue of $G^\dagger G$ is at most $O(\tau^2)$:
\begin{equation}\label{eq:sigma_near_one}
  1 - \sigma_j^2 = O(\tau^2) \qquad\Longrightarrow\qquad \delta_j \coloneqq 1 - \sigma_j = O(\tau^2) \quad\text{for all } j.
\end{equation}

\paragraph{Step 3: The polynomial suppresses the deviation.}
QSVT replaces each singular value $\sigma_j$ with $P_{\mathrm{stab}}(\sigma_j)$, while preserving the right singular vectors $|w_j^R\rangle$. For general odd degree~$d$, the flatness conditions $P_{\mathrm{stab}}^{(k)}(1)=0$ for $k=1,\ldots,\tfrac{d-1}{2}$ guarantee that
\begin{equation}\label{eq:new_delta_general}
  \tilde{\delta}_j \coloneqq 1 - P_{\mathrm{stab}}(\sigma_j) = O\!\bigl(\delta_j^{(d+1)/2}\bigr) = O\!\bigl(\tau^{d+1}\bigr).
\end{equation}
For the minimal case $d=3$, $P_{\mathrm{stab}}$ has two free coefficients (after imposing oddness and $P(1)=1$), leaving one flatness condition $P'(1)=0$. The unique solution is $P_{\mathrm{stab}}(x) = \tfrac{3}{2}\,x - \tfrac{1}{2}\,x^3$. Writing $\sigma_j = 1 - \delta_j$:
\begin{equation}
  P_{\mathrm{stab}}(1-\delta_j) = \tfrac{3}{2}(1-\delta_j) - \tfrac{1}{2}(1-\delta_j)^3 = 1 - \tfrac{3}{2}\,\delta_j^2 + \tfrac{1}{2}\,\delta_j^3,
\end{equation}
so the new deviation from~$1$ is
\begin{equation}\label{eq:new_delta}
  \tilde{\delta}_j = \tfrac{3}{2}\,\delta_j^2 + O(\delta_j^3) = O(\tau^4),
\end{equation}
consistent with~\cref{eq:new_delta_general} at $d=3$, since $(d+1)/2 = 2$.

\paragraph{Step 4: New leakage probability.}
Let $\tilde{D}$ denote the QSVT-transformed contraction. It has the same right singular vectors $|w_j^R\rangle$ and new singular values $P_{\mathrm{stab}}(\sigma_j)$, so $\tilde{D}^\dagger\tilde{D}\,|w_j^R\rangle = P_{\mathrm{stab}}(\sigma_j)^2\,|w_j^R\rangle$. Expanding $|E_k\rangle = \sum_j c_{kj}\,|w_j^R\rangle$ with $\sum_j|c_{kj}|^2 = 1$:
\begin{equation}
  \tilde{p}_k = 1 - \|\tilde{D}|E_k\rangle\|^2 = \sum_j |c_{kj}|^2 \bigl(1 - P_{\mathrm{stab}}(\sigma_j)^2\bigr).
\end{equation}
Each term in the sum satisfies
\begin{equation}
  1 - P_{\mathrm{stab}}(\sigma_j)^2 = \underbrace{(1 - P_{\mathrm{stab}}(\sigma_j))}_{\tilde{\delta}_j} \;\underbrace{(1 + P_{\mathrm{stab}}(\sigma_j))}_{\approx\;2} = 2\tilde{\delta}_j + O(\tilde{\delta}_j^2) = O\!\bigl(\tau^{d+1}\bigr).
\end{equation}
Since every term is $O(\tau^{d+1})$ and the weights sum to~$1$:
\begin{equation}
  \tilde{p}_k = O\!\bigl(\tau^{d+1}\bigr).
\end{equation}
For $d=3$: $\tilde{p}_k \leq 3\delta_j^2 + O(\delta_j^3) = O(\tau^4)$.

\paragraph{Step 5: Spectral-weight transfer rate.}
Each enhanced step uses $d$ queries to~$U$, but advances the simulation time by only~$\tau$ (the stabilizing polynomial preserves the eigenphases of the physical-subspace contraction). After $n' = t/\tau$ enhanced steps, the physical-subspace spectral weight of eigenstate~$k$ decays as
\begin{equation}
  \bigl(1 - \tilde{p}_k\bigr)^{n'} \;\approx\; e^{-2\,\Gamma_k'\,t}, \qquad \Gamma_k' \;=\; \frac{\tilde{p}_k}{2\,\tau} \;=\; \frac{O(\tau^{d+1})}{2\,\tau} \;=\; O\!\bigl(\tau^{d}\bigr). \label{eq:gamma_qsvt_appendix}
\end{equation}
For $d=3$: $\Gamma_k' = O(\tau^3)$. The total query cost is $d\,n' = d\,t/\tau$; for $d=3$ this is a $\times 3$ multiplicative overhead relative to the unprotected formula.

\paragraph{Remark: robustness to non-normality of $D$.}
QSVT transforms the singular values $\sigma_j$ of $D$, but the spectroscopy signal ultimately depends on its eigenvalue magnitudes~$|\lambda_j|$. For a normal matrix these coincide; for $D = \Pi_{\text{ph}}U\Pi_{\text{ph}}$ they differ. However, $D$ is close to normal: from~\cref{eq:block_identity}, $\|[D,D^\dagger]\| = \|G^\dagger G - (\Pi_{\text{ph}} U Q_{\text{ph}})(\Pi_{\text{ph}} U Q_{\text{ph}})^\dagger\| \leq \|G\|^2 + \|\Pi_{\text{ph}} U Q_{\text{ph}}\|^2 = O(\tau^2)$, which implies $|\sigma_j - |\lambda_j|| = O(\tau^2)$. The flatness condition $P_{\mathrm{stab}}'(1) = 0$ protects against this: by the mean value theorem,
\begin{equation}
  |P_{\mathrm{stab}}(\sigma_j) - P_{\mathrm{stab}}(|\lambda_j|)| = \underbrace{|P_{\mathrm{stab}}'(\xi)|}_{O(\delta_j)\,=\,O(\tau^2)} \;\cdot\; \underbrace{|\sigma_j - |\lambda_j||}_{O(\tau^2)} = O(\tau^4) = O(\tilde{\delta}_j).
\end{equation}
The correction from non-normality is the same order as the stabilized deviation itself, so the $O(\tau^{d})$ linewidth is not degraded.

\paragraph{Remark: QSVT circuit.}
The polynomial $P_{\mathrm{stab}}$ is real, odd, and satisfies $|P_{\mathrm{stab}}(x)| \le 1$ for $x \in [-1, 1]$. By the QSVT existence theorem \cite{gilyen2019quantum}, there exist phase angles $\phi_0, \ldots, \phi_d$ such that the circuit
\begin{equation}
  U_{\vec{\phi}} = R_0\, U\, R_1\, U^\dagger\, R_2\, U\, R_3, \qquad R_k = e^{i\phi_k(2\Pi_{\text{ph}} - I)},
\end{equation}
implements $P_{\mathrm{stab}}$ on the singular values of $D = \Pi_{\text{ph}}U\Pi_{\text{ph}}$, where $\Pi_{\text{ph}} = \ket{0}\!\bra{0}_{\mathrm{aux}}\otimes 1_{\text{ph}}$. This uses $d = 3$ queries to $U$ and no ancilla: the phase gates act directly on the auxiliary-mode qubits, and are of the same form as the symmetry-protection conjugations. Amplitude that leaks to the auxiliary subspace during any enhanced step does not contribute to the physical-subspace spectroscopy signal; this is the sole source of the signal decay quantified above.

\paragraph{Remark: eigenphase fidelity under projection.}
The preceding remark bounds how non-normality affects the eigenvalue \emph{magnitudes} $|\lambda_j|$. A complementary question is whether the eigenvalue \emph{phases} $\theta_k$ of $D = \Pi_{\text{ph}} U \Pi_{\text{ph}}$ coincide with those of the full-space effective Hamiltonian $H_{\mathrm{eff}}$. Writing $H_{\mathrm{eff}}$ in block form with respect to $\Pi_{\text{ph}}$ and $Q_{\text{ph}} = 1 - \Pi_{\text{ph}}$, the standard Schur complement (or L\"owdin partitioning) gives the effective physical-subspace Hamiltonian \cite{lowdin1951note}
\begin{equation}\label{eq:downfolding}
  H_{\mathrm{eff,phys}} = P H_{\mathrm{eff}} P + P H_{\mathrm{eff}} Q\, \frac{1}{E - Q H_{\mathrm{eff}} Q}\, Q H_{\mathrm{eff}} P + \cdots,
\end{equation}
whose eigenvalues determine the eigenphases $\theta_k/\tau$ of $D$.

\subsection{Derivation of the Hadamard lemma}\label{app:hadamard}
\begin{lem}[Hadamard lemma]\label{lemma:Hadamard}
For any two operators $X$ and $Y$, we have
\begin{equation}\label{eq:hadamard_lemma_app}
 e^{X} Y e^{-X} = \sum_{j=0}^{\infty} \frac{1}{j!}\,\mathrm{ad}_{X}^{j}(Y),
\end{equation}
where the nested commutator (adjoint action) is defined recursively as
\begin{equation}
 \mathrm{ad}_{X}^{0}(Y) = Y, \qquad
 \mathrm{ad}_{X}^{j}(Y) = [X,\,\mathrm{ad}_{X}^{j-1}(Y)].
\end{equation}
\end{lem}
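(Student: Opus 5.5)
The plan is to define $F(s) = e^{sX} Y e^{-sX}$ for a real (or complex) parameter $s$, and to show that its Taylor series at $s=0$ is exactly the right-hand side of \cref{eq:hadamard_lemma_app} evaluated at $s=1$. We assume $X$ and $Y$ are bounded operators, e.g.\ matrices on the finite-dimensional qubit Hilbert space, so that $e^{sX}$ is defined by its norm-convergent power series and $F$ is an entire function of $s$.

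The first step is to differentiate $F$. Since $\frac{d}{ds}e^{\pm sX} = \pm X e^{\pm sX} = \pm e^{\pm sX} X$, the product rule gives
\begin{equation}
F'(s) = e^{sX}(XY - YX)e^{-sX} = e^{sX}\,\mathrm{ad}_X(Y)\,e^{-sX}.
\end{equation}
This has the same form as $F$, with $Y$ replaced by $\mathrm{ad}_X(Y)$. Iterating by induction on $j$ then gives
\begin{equation}
F^{(j)}(s) = e^{sX}\,\mathrm{ad}_X^j(Y)\,e^{-sX},
\end{equation}
and in particular $F^{(j)}(0) = \mathrm{ad}_X^j(Y)$.

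The second step is to justify that Taylor's expansion of $F$ around $0$ converges to $F(1)$. I will bound the nested commutators by $\|\mathrm{ad}_X^j(Y)\| \le (2\|X\|)^j\|Y\|$. With this bound, the series $\sum_j \mathrm{ad}_X^j(Y)/j!$ converges absolutely, with norm at most $e^{2\|X\|}\|Y\|$. Analyticity of $F$, being a product of entire operator-valued functions, then shows that $F(1) = \sum_j F^{(j)}(0)/j!$.

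Alternatively, to avoid invoking analyticity, I could expand both exponentials directly and use the Cauchy product:
\begin{equation}
e^{X}Ye^{-X} = \sum_{n}\frac{1}{n!}\sum_{k=0}^{n}\binom{n}{k}X^kY(-X)^{n-k}.
\end{equation}
The inner sum is exactly $\mathrm{ad}_X^n(Y)$. This follows because $\mathrm{ad}_X = L_X - R_X$ is a difference of the commuting left- and right-multiplication maps, so the binomial theorem applies to it.

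The only real subtlety is this convergence and rearrangement step. In finite dimensions it is immediate from absolute convergence. For unbounded operators, the statement would need domain assumptions, which I will not pursue since every operator in the paper acts on a finite qubit register.
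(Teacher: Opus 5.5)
Your proposal is correct and follows the same route as the paper. Both define $f(t)=e^{tX}Ye^{-tX}$, obtain $f^{(j)}(0)=\mathrm{ad}_X^j(Y)$ by differentiation and induction, and Taylor-expand at $t=1$; your form $F^{(j)}(s)=e^{sX}\mathrm{ad}_X^j(Y)e^{-sX}$ is equivalent to the paper's $f^{(j)}(t)=\mathrm{ad}_X^j(f(t))$. Your bound $\|\mathrm{ad}_X^j(Y)\|\le(2\|X\|)^j\|Y\|$, together with the analyticity argument or the Cauchy-product alternative, makes explicit the convergence of the Taylor series at $t=1$, which the paper takes for granted.
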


\begin{proof}
Define the one-parameter family
\begin{equation}\label{eq:f_def}
 f(t) = e^{tX}\, Y\, e^{-tX}.
\end{equation}
We proceed in three steps.

\paragraph{Step 1: Differential equation.}
Differentiating~\cref{eq:f_def} with respect to $t$:
\begin{equation}
 f'(t) = X\, e^{tX}\, Y\, e^{-tX} - e^{tX}\, Y\, e^{-tX}\, X = [X,\, f(t)] = \mathrm{ad}_{X}\!\bigl(f(t)\bigr).
\end{equation}

\paragraph{Step 2: Higher derivatives.}
Iterating, the $j$-th derivative satisfies
\begin{equation}
 f^{(j)}(t) = \mathrm{ad}_{X}^{j}\!\bigl(f(t)\bigr).
\end{equation}
This follows by induction: assuming $f^{(j)}(t) = \mathrm{ad}_{X}^{j}(f(t))$, we have
\begin{equation}
  f^{(j+1)}(t) = \frac{d}{dt}\,\mathrm{ad}_{X}^{j}\!\bigl(f(t)\bigr) = \mathrm{ad}_{X}^{j}\!\bigl(f'(t)\bigr) = \mathrm{ad}_{X}^{j}\!\bigl([X, f(t)]\bigr) = \mathrm{ad}_{X}^{j+1}\!\bigl(f(t)\bigr),
\end{equation}
where the second equality uses the linearity of the commutator and the fact that $X$ is $t$-independent. Evaluating at $t=0$ (where $f(0) = Y$):
\begin{equation}\label{eq:jth_deriv}
 f^{(j)}(0) = \mathrm{ad}_{X}^{j}(Y).
\end{equation}

\paragraph{Step 3: Taylor series.}
Taylor-expanding $f(t)$ around $t=0$ and setting $t=1$:
\begin{equation}
 f(1) = e^{X}\, Y\, e^{-X} = \sum_{j=0}^{\infty} \frac{f^{(j)}(0)}{j!} = \sum_{j=0}^{\infty} \frac{1}{j!}\,\mathrm{ad}_{X}^{j}(Y).
\end{equation}
This completes the proof.
\end{proof}

\subsection{Symmetry protection}

\begin{prop}[Symmetry-protected effective Hamiltonian]\label{prop:symmetry_protection}
Let $U_2(\tau)$ be a second-order Trotter step implementing an effective Hamiltonian $H_{\mathrm{eff}} = \sum_{\Delta m} Y_{\Delta m}$, decomposed by the number of electrons excited to or from the $N$ physical orbitals. Define
\begin{equation}
  U_2(\tau, \phi) = e^{-i\phi Q_{\text{ph}}}\, U_2(\tau)\, e^{+i\phi Q_{\text{ph}}},
\end{equation}
where $\Pi_{\text{ph}} = \ket{0}\bra{0}_{\text{aux}} $ is the binary projector distinguishing the physical from auxiliary subspace, and $Q_{\text{ph}} = 1- \Pi_{\text{ph}}$. Then:
\begin{enumerate}
  \item A forward cycle 
  \begin{equation}
  U_F(\tau) = \prod_{k=0}^{1} U_2(\tau, \pi k)
  \end{equation}
  achieves effective Hamiltonian
  \begin{equation}
  H_{\mathrm{eff}}^{(1)} = H + O(\tau).
  \end{equation}
  \item The symmetric super-cycle $U_{\mathrm{symm}} = U_B(\tau)\, U_F(\tau)$, where $U_B(\tau) = U_F(-\tau)^\dagger$, achieves
  \begin{equation}
    H_{\mathrm{eff}}^{(2)} = H + \underbrace{\tau^2 Y_{\mathrm{leak}}^{(2)}}_{O(\tau^2)} + \underbrace{\tau^2 Y_{\mathrm{Trotter}}^{(2)}}_{O(\tau^2)} + O(\tau^3),
  \end{equation}
  which is on par with the intrinsic Trotter error of the second-order formula.
\end{enumerate}
\end{prop}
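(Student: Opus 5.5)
The plan is to work entirely at the level of effective Hamiltonians: first move the phase conjugation into the exponent, then compose steps with the BCH expansions \cref{eq:BCH,eq:symmBCH}. Since $U_2(\tau)$ is symmetric, I write $U_2(\tau)=\exp(-i\tau H_{\mathrm{eff}})$ with $H_{\mathrm{eff}}=\sum_{\Delta m}Y_{\Delta m}+O(\tau^2)$. Here the $O(1)$ part is the factorized Hamiltonian, which still contains the leakage blocks $Y_{\Delta m}$ with $\Delta m$ odd.

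\textbf{Step 1: conjugation.} As in \cref{ssec:Symmetry_protection}, $[Q_{\text{ph}},Y_{\Delta m}]=\Delta m\,Y_{\Delta m}$, and \cref{lemma:Hadamard} then gives
\begin{equation}
U_2(\tau,\pi)=\exp(-i\tau H'),\qquad H'=\sum_{\Delta m}e^{-i\pi\Delta m}Y_{\Delta m}.
\end{equation}
So $H'$ is $H_{\mathrm{eff}}$ with the sign of every odd-$\Delta m$ block flipped.

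\textbf{Step 2: the forward cycle (item 1).} Apply \cref{eq:BCH} to $U_F(\tau)=e^{-i\tau H'}e^{-i\tau H_{\mathrm{eff}}}$:
\begin{equation}
U_F(\tau)=\exp\!\Big(-i\tau(H_{\mathrm{eff}}+H')+\tfrac{(-i\tau)^2}{2}[H',H_{\mathrm{eff}}]+O(\tau^3)\Big).
\end{equation}
The sum $H_{\mathrm{eff}}+H'=2\sum_{\Delta m\ \mathrm{even}}Y_{\Delta m}$, so the odd, leaking part cancels at zeroth order. After dividing by the elapsed time $2\tau$, the cycle therefore implements $H+O(\tau)$. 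Here $H$ is identified with the $\Delta m$-even, physical part of the factorized Hamiltonian, which is the Hamiltonian the simulation targets.

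\textbf{Step 3: the symmetric super-cycle (item 2).} Because $U_2$ is symmetric, $U_2(-\tau,\phi)^\dagger=U_2(\tau,\phi)$. Hence $U_B(\tau)=U_F(-\tau)^\dagger$ is the same two factors in reversed order, and $U_{\mathrm{symm}}$ is palindromic, of the form $e^{-i\tau H_{\mathrm{eff}}}e^{-2i\tau H'}e^{-i\tau H_{\mathrm{eff}}}$ (or with the roles of $H'$ and $H_{\mathrm{eff}}$ exchanged, depending on the product convention). Palindromic symmetry $U_{\mathrm{symm}}(-\tau)=U_{\mathrm{symm}}(\tau)^\dagger$ makes the logarithm odd in $\tau$. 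This removes the $O(\tau)$ relative correction from Step 2: the $\tau^2[H',H_{\mathrm{eff}}]$ term is exactly the kind of even-order term that is cancelled.

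To identify the surviving $\tau^3$ term I apply \cref{eq:symmBCH} with $X=-2i\tau H_{\mathrm{eff}}$ and $Y=-2i\tau H'$. The nested commutators $-\frac{1}{24}[X,[X,Y]]-\frac{1}{12}[Y,[X,Y]]$ produce, after normalizing by the total time $4\tau$ and up to the $H'\leftrightarrow H_{\mathrm{eff}}$ labeling fixed by the ordering, the stated $Y^{(2)}_{\mathrm{leak}}$ built from $[H',[H',H_{\mathrm{eff}}]]$ and $[H_{\mathrm{eff}},[H',H_{\mathrm{eff}}]]$. The intrinsic $O(\tau^2)$ pieces of $H_{\mathrm{eff}}$ and $H'$, which come from $Y^{(2)}_3$ of each $U_2$, enter only linearly at this order. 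Their even-$\Delta m$ combination gives $Y^{(2)}_{\mathrm{Trotter}}$.

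\textbf{Main obstacle.} The delicate point is bookkeeping, because $H_{\mathrm{eff}}$ and $H'$ themselves depend on $\tau$. I must check that cross terms between their $O(\tau^2)$ corrections and the commutator terms start at $O(\tau^3)$ or higher after normalization. I must also check that the odd-$\Delta m$ parts of the $\tau^2$ Trotter corrections still cancel, which the sign-flip structure gives at linear order. My first attempt will be to expand everything as a formal power series in $\tau$ and track the leading power of each term.
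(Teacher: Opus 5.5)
Your proposal follows the paper's proof: move the phase conjugation into the exponent, cancel the zeroth-order leakage by averaging over the phases, then compose with the time-reversed cycle so that the logarithm is odd in $\tau$. The paper argues with the number operator and a general $C$-step roots-of-unity cycle; you work directly with $C=2$, and you make explicit, via the palindromic form $e^{-i\tau H_{\mathrm{eff}}}e^{-2i\tau H'}e^{-i\tau H_{\mathrm{eff}}}$ and the symmetric BCH, the symmetrization step that the paper only asserts.

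One precision point: since $Q_{\text{ph}}$ is a projector, $e^{-i\pi Q_{\text{ph}}}=\Pi_{\text{ph}}-Q_{\text{ph}}$, so $H'$ flips the sign of exactly the $\Pi_{\text{ph}}$--$Q_{\text{ph}}$ off-diagonal blocks, and that is what your ``odd $\Delta m$'' has to mean. If you read $\Delta m$ as an electron-number change, your bookkeeping would wrongly keep two-electron leakage out of the auxiliary vacuum ($\Delta m=\pm 2$) inside $H$.
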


We prove the result for the general number operator $N_{\text{ph}}$ with a $C$-step cycle; the simplified binary operator $\Pi_{\text{ph}}$ with $C$=2 then follows as described after the proposition statement in the main text.

The first step is to conjugate Trotter steps with different angles:
\begin{equation}
  U_2(\tau, \phi) = e^{-i \phi N_{\text{aux}}} U_2(\tau)
  e^{+i \phi N_{\text{aux}}}.
\end{equation}
Decomposing $H_{\mathrm{eff}} = \sum_{\Delta m} Y_{\Delta m}$ by the number of electrons exchanged with the auxiliary orbitals, we use the fundamental identity $[N_e-N_{\text{ph}}, Y_{\Delta m}] = (\Delta m) Y_{\Delta m}$ to obtain
\begin{equation}
  e^{-i \phi N_{\text{aux}}} Y_{\Delta m} e^{+i \phi N_{\text{aux}}}
  = Y_{\Delta m} \sum_{j=0}^{\infty}
  \frac{(-i\phi \Delta m)^j}{j!}
  = Y_{\Delta m} e^{-i \Delta m \phi}.
\end{equation}

\emph{Forward cycle.}
We define a $C$-step forward cycle
\begin{equation}
  U_F(\tau) = \prod_{k=0}^{C-1} U_2\!\left(\tau, \frac{2\pi k}{C}\right).
\end{equation}
Since the roots of unity satisfy $\sum_{k=0}^{C-1} e^{-i \Delta m (2\pi k/C)} = 0$ for $0 < |\Delta m| < C$, the zeroth-order leakage contribution is cancelled, yielding
\begin{equation}
  H_{\mathrm{eff}}^{(1)} = H
  + \underbrace{\tau Y_{\mathrm{leak}}^{(1)}}_{O(\tau)}
  + \underbrace{\tau^2 Y_{\mathrm{Trotter}}^{(2)}}_{O(\tau^2)}
  + O(\tau^3).
\end{equation}

\emph{Symmetric super-cycle.}
Defining the backward cycle $U_B(\tau) = U_F^\dagger(-\tau)$, whose effective Hamiltonian has odd-powered error terms with flipped sign, the composition $U_{\mathrm{symm}} = U_B(\tau)\, U_F(\tau)$ cancels the $O(\tau)$ term:
\begin{equation}
  H_{\mathrm{eff}}^{(2)} = H
  + \tau^2 (Y_{\mathrm{leak}}^{(2)} + Y_{\mathrm{Trotter}}^{(2)})
  + O(\tau^3). 
\end{equation}

\subsection{Runge-Kutta-Nystr\"{o}m conditions for kinetic and potential splittings of the Hamiltonian\label{app:RKN}}

\begin{prop}[Vanishing nested commutators for kinetic-potential splittings]\label{prop:VVT_vanishing}
Let $H = T + V$ act on $L^2(\mathbb{R}^d)$, where $T = \sum_{j=1}^{d} \frac{p_j^2}{2m_j} +f(\mathbf{q})$ is the kinetic energy operator and $V = V(\mathbf{q})$ is a smooth, real-valued multiplicative potential. Assign to each operator a \emph{differential degree}: $\mathrm{deg}(T) = 2$ and $\mathrm{deg}(V) = 0$. Then:
\begin{enumerate}
  \item $[V, T]$ is a first-order differential operator, i.e.\ $\mathrm{deg}([V,T]) = 1$.
  \item $[V, [V, T]] = -\sum_{j=1}^{d} \frac{\hbar^2}{m_j} \left(\frac{\partial V}{\partial q_j}\right)^{\!2}$ is a multiplication operator, i.e.\ $\mathrm{deg}([V,[V,T]]) = 0$.
  \item The triple commutator vanishes identically: $[V, [V, [V, T]]] = 0$.
\end{enumerate}

This algebraic constraint reduces the number of independent error terms in the Baker--Campbell--Hausdorff expansion of any product formula built from $e^{-i\alpha T}$ and $e^{-i\beta V}$.
\end{prop}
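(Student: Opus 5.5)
The plan is a direct computation on smooth test functions $\psi$ in the canonical representation $p_j = -i\hbar\,\partial_{q_j}$, using only the canonical commutator $[q_k,p_j]=i\hbar\delta_{jk}$. The multiplicative part $f(\mathbf{q})$ of $T$ commutes with $V(\mathbf{q})$, since both are multiplication operators. So it drops out of every commutator with $V$, and we may take $T=\sum_j p_j^2/(2m_j)$ throughout. The three claims are proved in order, each feeding the next.

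\textbf{Step 1.} First compute $[V,p_j]=i\hbar\,\partial_j V$, which holds because $p_j(V\psi)=V p_j\psi - i\hbar(\partial_jV)\psi$. Next apply the Leibniz rule $[V,p_j^2]=[V,p_j]p_j+p_j[V,p_j]$. This gives
\begin{equation}
[V,T]=\sum_j \frac{i\hbar}{2m_j}\bigl((\partial_jV)p_j+p_j(\partial_jV)\bigr).
\end{equation}
The right-hand side is a first-order differential operator: its principal part is $\frac{i\hbar}{m_j}(\partial_jV)p_j$, and the remainder is a multiplication operator. Hence $\deg([V,T])=1$, provided $\nabla V\not\equiv 0$; otherwise the commutator simply vanishes, and all three claims hold trivially.

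\textbf{Step 2.} Commute $V$ with the expression from Step 1. Since $[V,\partial_jV]=0$, we get $[V,(\partial_jV)p_j]=(\partial_jV)[V,p_j]=i\hbar(\partial_jV)^2$, and likewise $[V,p_j(\partial_jV)]=i\hbar(\partial_jV)^2$. Summing,
\begin{equation}
[V,[V,T]]=\sum_j \frac{i\hbar}{2m_j}\cdot 2i\hbar(\partial_jV)^2=-\sum_j\frac{\hbar^2}{m_j}(\partial_jV)^2.
\end{equation}
This is exactly the stated multiplication operator, so $\deg([V,[V,T]])=0$.

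\textbf{Step 3.} Claim 3 follows immediately from Step 2: two multiplication operators commute, so $[V,[V,[V,T]]]=0$.

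The remark about the BCH expansion is then a consequence of the free Lie algebra structure. Every word in the BCH series containing the nested commutator $[V,[V,[V,T]]]$ as a sub-bracket is zero. This removes those Hall-basis elements, and with them their order conditions, from the RKN analysis, in the spirit of \cref{prop:order_conditions}.

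The only delicate point is domain and ordering: $p_j$ and $\partial_jV$ do not commute, so the symmetrized form in Step 1 must be kept until Step 2. I would handle domains by working on $C_c^\infty(\mathbb{R}^d)$, where all operators act classically.
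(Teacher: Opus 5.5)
Your proof is correct and follows the same direct computation as the paper. You obtain $[V,T]$ as a first-order operator, then $[V,[V,T]]=-\sum_j \frac{\hbar^2}{m_j}(\partial_j V)^2$, then use that multiplication operators commute; you phrase this through $[V,p_j]=i\hbar\,\partial_j V$ and the Leibniz rule instead of the paper's explicit expansion of $\nabla^2(V\psi)$, and you additionally treat the anisotropic masses and the $f(\mathbf{q})$ term explicitly, which the paper's proof glosses over.
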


\begin{proof}
We work in $d$ spatial dimensions with coordinates $\mathbf{q} = (q_1, \ldots, q_d)$ and conjugate momenta $p_j = -i\hbar\, \partial / \partial q_j$.

\paragraph{Part~1: $[V, T]$ is a first-order differential operator.}
For any smooth $\psi \in L^2(\mathbb{R}^d)$ and with the isotropic kinetic energy $T = -\frac{\hbar^2}{2m}\nabla^2$ (the anisotropic case is analogous),
\begin{equation}\label{eq:VT_commutator}
  [V, T]\,\psi = V\!\left(-\tfrac{\hbar^2}{2m}\nabla^2\psi\right) - \left(-\tfrac{\hbar^2}{2m}\right)\nabla^2(V\psi).
\end{equation}
Expanding the Laplacian of a product, $\nabla^2(V\psi) = (\nabla^2 V)\psi + 2(\nabla V)\cdot(\nabla\psi) + V\nabla^2\psi$, and substituting yields
\begin{equation}
  [V, T]\,\psi
  = \frac{\hbar^2}{2m}
    \bigl[(\nabla^2 V)\,\psi + 2\,(\nabla V)\cdot(\nabla\psi)\bigr].
\end{equation}
The right-hand side contains $\nabla\psi$ (first order) and a zeroth-order multiplication term, but no second derivatives of $\psi$. Hence $[V, T]$ is a first-order differential operator.

\paragraph{Part~2: $[V, [V, T]]$ is a multiplication operator.}
Define $D_1 \coloneqq [V, T] = \frac{\hbar^2}{2m}[(\nabla^2 V) + 2(\nabla V)\cdot\nabla]$. Computing $[V, D_1]$:
\begin{align}
  [V, D_1]\,\psi
  &= V\cdot D_1(\psi) - D_1(V\psi) \nonumber \\
  &= \frac{\hbar^2}{2m}\Big\{ V\bigl[(\nabla^2 V)\psi + 2(\nabla V)\cdot\nabla\psi\bigr] - \bigl[(\nabla^2 V)(V\psi) + 2(\nabla V)\cdot\nabla(V\psi)\bigr] \Big\} \nonumber \\
  &= \frac{\hbar^2}{2m}\Big\{2V(\nabla V)\cdot\nabla\psi - 2(\nabla V)\cdot\bigl[(\nabla V)\psi + V\nabla\psi\bigr] \Big\} \nonumber \\
  &= -\frac{\hbar^2}{m}\,|\nabla V|^2\;\psi.
\end{align}
Therefore
\begin{equation}\label{eq:VVT_result}
  [V, [V, T]] = -\frac{\hbar^2}{m}\,|\nabla V|^2,
\end{equation}
which is a multiplication operator -- a function of $\mathbf{q}$ alone with no derivatives acting on $\psi$. In the anisotropic case $T = \sum_j p_j^2/(2m_j)$, the same calculation gives $[V, [V, T]] = -\sum_{j=1}^{d} \frac{\hbar^2}{m_j}
\bigl(\partial V / \partial q_j\bigr)^2$.

\paragraph{Part~3: $[V, [V, [V, T]]] = 0$.}
Both $V$ and $[V,[V,T]] = -(\hbar^2/m)\,|\nabla V|^2$ are multiplication operators -- smooth functions of $\mathbf{q}$. Since multiplication operators commute,
\begin{equation}
  [V, [V, [V, T]]]
  = \bigl[V,\; -\tfrac{\hbar^2}{m}\,|\nabla V|^2\bigr]
  = 0. \qedhere
\end{equation}
\end{proof}

\paragraph{Remark.}
This result is the quantum-mechanical counterpart of the classical identity $\{V, \{V, \{V, T\}\}\} = 0$ for Poisson brackets, proved in \cite{mclachlan2019lie}. The underlying mechanism is identical: the kinetic energy is a second-order object (quadratic in momenta classically, second-order differential operator quantum-mechanically), while the potential is zeroth-order. Each commutation with $V$ reduces the differential degree by one, so after two commutations the result is a multiplication operator, and one further commutation with $V$ necessarily vanishes.

McLachlan and Murua show \cite[Section~4]{mclachlan2019lie} that the Lie algebra generated by the Laplacian and a multiplicative potential under the commutator bracket is isomorphic to the Lie algebra of classical mechanics (class~$\mathcal{P}$). Consequently, all structural identities of that algebra -- including the vanishing of all brackets between degree-zero elements -- transfer directly to the quantum setting.

\paragraph{Plane-wave basis.}
In a plane-wave basis $\{|\mathbf{G}\rangle\}$, the kinetic energy $T$ is diagonal with entries $t_{\mathbf{G}} = \hbar^2|\mathbf{G}+\mathbf{k}|^2/(2m)$, and $V$ is a dense matrix with entries $V_{\mathbf{G},\mathbf{G}'} = V(\mathbf{G}-\mathbf{G}')$. In the complete (untruncated) basis, the identity $[V,[V,[V,T]]] = 0$ holds as a matrix equation. Explicitly, defining $f(\mathbf{q}) \coloneqq [V,[V,T]] = -(\hbar^2/m)\,|\nabla V|^2$, the matrix elements read
\begin{equation}
  [V,[V,[V,T]]]_{\mathbf{G},\mathbf{G}'} = \sum_{\mathbf{K}} V(\mathbf{G}{-}\mathbf{K})\, f(\mathbf{K}{-}\mathbf{G}') - f(\mathbf{G}{-}\mathbf{K})\,V(\mathbf{K}{-}\mathbf{G}'),
\end{equation}
which vanishes because $f$ is a multiplication operator: both sums reduce to discrete convolutions representing $V \cdot f$ and $f \cdot V$, which are equal.

\paragraph{Discretization and the failure of the identity.}
In a \emph{truncated} plane-wave basis ($|\mathbf{G}| \le G_{\max}$, dimension~$N$), the identity $[V,[V,[V,T]]] = 0$ does \emph{not} hold -- under either the Galerkin projection ($PVP$, $PTP$) or the split-operator / dual-basis implementation ($\tilde{V}$~diagonal in position space, $T$~diagonal in momentum space, connected by the QFT). The reason is structural: the mechanism underlying the continuum proof breaks down at finite~$N$.

In the continuum, $[V,[V,T]] = -|\nabla V|^2$ is a multiplication operator -- diagonal in position space -- so it commutes with $V$, and the triple commutator vanishes. In the dual-basis discretization, the position-space matrix
elements of $[V,[V,T]]$ are
\begin{equation}\label{eq:VVT_pos_space}
  [\tilde{V},[\tilde{V},T]]_{jk} = \bigl(V(r_j) - V(r_k)\bigr)^{\!2}\, T_{jk},
\end{equation}
where $T_{jk} = N^{-1}\sum_{\mathbf{G}} (|\mathbf{G}|^2/2m)\, e^{i\mathbf{G}\cdot(r_j - r_k)}$ is the spectral kinetic-energy kernel. Crucially, the diagonal ($j=k$) entries of Eq.~\eqref{eq:VVT_pos_space} are identically zero (since $(V_j - V_j)^2 = 0$), so $[\tilde{V},[\tilde{V},T]]$ is an entirely off-diagonal operator at finite~$N$ -- it is not a multiplication operator. The continuum cancellation rests on the distributional identity $(x{-}y)^3\,\delta''(x{-}y) = 0$; in the discrete case, $T_{jk}$ is the Fourier transform of $|\mathbf{G}|^2$ and has support on \emph{all} grid points, so no analogous cancellation occurs, and
\begin{equation}
  [\tilde{V},[\tilde{V},[\tilde{V},T]]]_{jk}
  = \bigl(V(r_j) - V(r_k)\bigr)^{\!3}\, T_{jk}
  \;\neq\; 0.
\end{equation}

\paragraph{Practical impact.}
Despite this formal breakdown, the identity remains useful in two important respects.
\begin{enumerate}
\item \textbf{Formula design.}
  Product formula coefficients $\{a_k, b_k\}$ are determined algebraically to cancel all BCH error terms through order~$p$. The identity $[V,[V,[V,T]]] = 0$ reduces the number of independent order conditions at each order~$n$. The resulting formulas require fewer stages and are valid regardless of the discretization: the order-$p$ cancellations are enforced by the choice of coefficients, not by the identity itself.

\item \textbf{Subdominance of the residual error.}
  At order~4, the three generic commutators are
  $[V,[V,[V,T]]]$, $[T,[V,[V,T]]]$, and $[T,[T,[T,V]]]$. In a plane-wave basis, $\|T\| \sim G_{\max}^2 \gg \|V\|$, so $\|[T,[T,[T,V]]]\|$ dominates the error. Numerically, $\|[V,[V,[V,T]]]\|$ contributes less than~$0.1\%$ of the total order-4 norm sum even at modest grid sizes, and this fraction decreases with~$N$. Thus, error bounds that ignore the $[V,[V,[V,T]]]$ term (as justified in the continuum) overestimate the discrete error by a negligible amount.
\end{enumerate}

\paragraph{QFT-defined momentum on a finite grid.}
A natural question is whether the identity is restored when the momentum operator $p$ is defined \emph{exactly} as the quantum Fourier transform (QFT) conjugate of the position operator $q$ on an $N$-point grid  --  the construction used in split-operator quantum simulations. In this setting, $V = \mathrm{diag}\bigl(V(x_0),\dots,V(x_{N-1})\bigr)$ is diagonal in the position basis, and the kinetic operator $T = F^\dagger \mathrm{diag}(k_n^2/2m)\, F$ is diagonal in the momentum basis, where $F$ is the $N$-point DFT matrix and
$k_n = 2\pi n/L$ are the momentum eigenvalues.

The answer is \emph{no}: the identity is violated at every finite grid size $N$. Because $V$ is diagonal, the nested commutator has matrix elements
\begin{equation}
  \bigl(\mathrm{ad}_V^{\,3}(T)\bigr)_{jk}
  = (V_j - V_k)^3 \, T_{jk}\,,
  \label{eq:ad3_grid}
\end{equation} where $T_{jk} = N^{-1}\sum_n (k_n^2/2m)\,e^{2\pi i\, n(j-k)/N}$ is a circulant matrix with \emph{global support}: $T_{jk} \neq 0$ for every $j \neq k$. For any smooth, non-constant potential, $V_j \neq V_k$ for most pairs $(j,k)$, so the product $(V_j - V_k)^3 \, T_{jk} \neq 0$ and the triple commutator does not vanish. The practical consequence, however, remains unchanged: because $\lVert T \rVert \sim G_{\max}^2 \gg \lVert V \rVert$ in plane-wave calculations, the $[V,[V,[V,T]]]$ term constitutes a negligible fraction of the total order-$4$ Trotter error regardless of how $T$ is discretized.

\section{Effect of different randomization procedures in spectroscopy problems\label{app:randomized_spectra}}

\subsection{Randomized product formulas\label{sec:randomized}}

As we discussed above, the effect of certain forms of randomization will depend on the application. Here we focus on spectroscopy \cite{fomichev2024simulating,fomichev2025fast}. Our goal in spectroscopy is to evaluate
\begin{equation}
  - \eta \operatorname{Im} G_\rho(\omega) = \frac{\eta\tau}{2\pi} \sum_{j=-\infty}^{\infty} e^{-\eta\tau|j|} \tilde{G}(\tau j) e^{ij\tau\omega},
\end{equation}
where $\eta$ is some dampening factor, $\tau$ is the time step and 
\begin{equation}
  \tilde{G}_\rho(\tau j) = \frac{\bra{I} m_\rho e^{-i H \tau j}  m_{\rho} \ket{I}}{\| m_\rho \ket{I} \|^2}
\end{equation}
for a given Hamiltonian $H$.
In \cite{fomichev2025fast} the authors explain that deterministic $p$-order product formulas implement exact Hamiltonian simulation of an approximate Hamiltonian
\begin{equation}
  H_d = H + \tau^p Y_{p+1}  + O(\tau^{p+1}) 
\end{equation}
where $\tau$ is the product formula time step. When using them in computational spectroscopy, this leads to a coherent error in the eigenvalues and eigenstates, which respectively affect the position of the peaks and their brightness. Using perturbation theory this results in 
\begin{align}
  \ket{E'_l} &= \ket{E_l} + \tau^{p} \sum_{k\neq l} \frac{\braket{E_k|Y_{p+1}|E_l}}{E_k-E_l}
  \ket{E_k}+O(\tau^{p+1}),\\
  E'_l &= E_l + \tau^{p}\braket{E_l|Y_{p+1}|E_l} + O(\tau^{p+1}).
\end{align}

Our goal here is to extend this analysis to randomized product formulas, product formulas where the ordering of the fragments change from step to step. This leads to the on-average cancellation of the non-degenerate nested commutators in the Trotter error expression \cite{childs2019faster}: if the Trotter error on some ordering contains a nested commutator $[\ldots[H_i, H_j]]$ and $H_i$ and $H_j$ only appear once in the nested commutator, then swapping the ordering of $H_i$ and $H_j$ leads to $[\ldots[H_j, H_i]]$, which will cancel out with the above. Let us define such average Hamiltonian as
\begin{equation}
  \bar{H}_r = H + \tau^p \bar{Y}_{p+1}  + O(\tau^{p+1}) 
\end{equation}
However, on any given shot, we do not implement the simulation under the average Hamiltonian, but rather under some perturbed Hamiltonian,
\begin{equation}
  H_r = H + \tau^p \bar{Y}_{p+1}  + \tau^{p}\Delta Y_{p+1}(\sigma_j)  + O(\tau^{p+1}).
\end{equation}
where $\Delta Y_{p+1}$ is the error due to one of the random ordering $(\sigma_j)$ chosen in a given step $j$. Let us denote
$\bar{E}_k:= \braket{E_k|\bar{Y}_{p+1}|E_k}$ for eigenstates $E_k$; and $\Delta E_k(\sigma_j) = \braket{E_k|\Delta Y_{p+1}(\sigma_j)|E_k}$. Then, using only the perturbation of the eigenvalue~\cref{eq:perturbation_eigenvalues}, the time signal is
\begin{equation}
  \tilde{G}_r(t) \approx \sum_k |\braket{E_k|m_{\rho}|I}|^2 e^{-i (E_k + \tau^p\bar{E}_k)t}
  \times\mathbb{E}_{\{\sigma_j\}}(e^{-i \tau^{p+1} \sum_{j=1}^J \Delta E_k(\sigma_j)}).
\end{equation}
Using the second-order cumulant expansion $\mathbb{E}[e^{ix}] = \exp\!\bigl(-\mathrm{Var}(x)/2 + O(\kappa_3)\bigr)$, which holds for any random variable $x$ with finite third cumulant $\kappa_3$,
\begin{equation}
    \mathbb{E}_{\{\sigma_j\}}(e^{-i \tau^{p+1} \sum_{j=1}^{J} \Delta E_k(\sigma_j)}) = 
    \exp\left(-\frac{\tau^{2p+2} J \text{Var}(\Delta E_k)}{2}\right) = e^{-\Gamma_k t}
\end{equation}
where $J\tau = t$. Then
\begin{equation}
  \Gamma_k = \frac{\tau^{2p+1} \text{Var}(\Delta E_k)}{2}.
\end{equation}
As a consequence we have
\begin{equation}
  \tilde{G}_r(t) \approx \sum_k |\braket{E_k|m_{\rho}|I}|^2 e^{-i (E_k + \tau^p\bar{E}_k)t} e^{-\Gamma_k t}
\end{equation}
The factor $\Gamma_k$ is therefore an eigenstate-dependent dampening similar to $\eta$. If we take $\Gamma := \max_k \Gamma_k$, we can bound the broadening of the peaks produced by the dephasing of the randomized product formula.

Now, let us also account for the effect of the perturbation in the eigenstate. We want to compute
\begin{equation}
  \tilde{G}_r(t) \approx \sum_k |\braket{E'_k|m_{\rho}|I}|^2 e^{-i (E_k + \tau^p\bar{E}_k)t} e^{-\Gamma_k t}
\end{equation}
The amplitude changes to
\begin{multline}
  \braket{E_k'|m_\rho|I} = \braket{E_k|m_\rho|I} 
  - \tau^{p} \sum_{l\neq k} \frac{\braket{E_l|Y_{p+1}|E_k}}{E_l-E_k}
  \braket{E_l|m_\rho|I}+O(\tau^{p+1})\\
  = \braket{E_k|m_\rho|I} 
  - \tau^{p} \sum_{l\neq k} \frac{\braket{E_l|\bar{Y}_{p+1} |E_k}}{E_l-E_k}
  \braket{E_l|m_\rho|I}
  - \tau^{p} \sum_{l\neq k} \frac{\braket{E_l| \Delta Y_{p+1}(\sigma_j)|E_k}}{E_l-E_k}
  \braket{E_l|m_\rho|I} + O(\tau^{p+1}).
\end{multline}
Squaring, we get
\begin{multline}
|\braket{E_k'|m_\rho|I}|^2 = |\braket{E_k|m_\rho|I}|^2 
+ \tau^{2p}\mathbb{E}_{\{\sigma_j\}}\left| \sum_{l\neq k} \frac{\braket{E_l|\bar{Y}_{p+1} + \Delta Y_{p+1}(\sigma_j) |E_k}}{E_l-E_k}
  \braket{E_l|m_\rho|I}\right|^2\\
  +2 \tau^p\Re \left( \braket{E_k|m_\rho|I} \sum_{l\neq k} \frac{\braket{E_l|\bar{Y}_{p+1} |E_k}}{E_l-E_k}
  \braket{E_l|m_\rho|I}+\braket{E_k|m_\rho|I} \sum_{l\neq k} \frac{\braket{E_l| \mathbb{E}_{\{\sigma_j\}} \Delta Y_{p+1}(\sigma_j) |E_k}}{E_l-E_k}
  \braket{E_l|m_\rho|I}\right).
\end{multline}
But we know that
\begin{equation}
  \mathbb{E}_{\{\sigma_j\}} \Delta Y_{p+1}(\sigma_j) = 0.
\end{equation}
Therefore, overall, we have the time signal
\begin{equation}
  \tilde{G}_r(t) \approx \sum_k |\braket{E_k'|m_\rho|I}|^2 e^{-i (E_k + \tau^p\bar{E}_k)t} e^{-\Gamma_k t}
\end{equation}
where
\begin{equation}
  |\braket{E_k'|m_\rho|I}|^2\approx  |\braket{E_k|m_\rho|I}|^2  
  +2 \tau^p\Re \left( \braket{E_k|m_\rho|I} \sum_{l\neq k} \frac{\braket{E_l|\bar{Y}_{p+1} |E_k}}{E_l-E_k}
  \braket{E_l|m_\rho|I}\right) + O(\tau^{p+1}).
\end{equation}

\subsection{Multiproduct formulas\label{app:multiproduct_formulas}}

\subsubsection{Multiproduct formulas and extrapolation methods}
We also considered multi-product formulas. There are two variations. Incoherent multi-product formulas -- also known as extrapolation methods --, approximate the time evolution as a linear combination of product formulas, 
\begin{equation}
    e^{-i t H}\approx \sum_j \alpha_j V_j(t).
\end{equation}
Each product formula in the linear combination generates its own effective Hamiltonian $H_{\text{eff},j}$, but this does not translate into a global effective Hamiltonian. For this reason, their use in spectroscopy produces spectral linewidth broadening, and are best suited for short-time evolutions instead (though `symplectic' product formulas designed for longer time evolution problems exist too \cite{blanes2024generalized,rendon2024improved,zhuk2024trotter}).

Coherent formulas differ from incoherent multi-product formulas in their use of oblivious amplitude amplification to avoid the dephasing between individual product formulas. They use amplitude amplification to reduce the error. They generally excel in long-time or high-precision regimes, but empirically appear less practical for low-precision regimes \cite{low2019well}.

While the absence of a global Hamiltonian makes this method less practical to evaluate spectral properties of a Hamiltonian, incoherent multiproduct formulas might be useful if our goal is to measure the matrix element of an observable $O$ with respect to time-evolved states, $\braket{\varphi(t)|O|\psi(t)}$. Utilizing the incoherent approach allows the matrix elements to be decomposed as \cite{blanes2024generalized,rendon2024improved,watson2025exponentially}:
 \begin{equation}\label{eq:dynamic_problem}
\braket{\varphi(t)|O|\psi(t)} =  \sum_{j,k} \alpha_j^* \alpha_k\braket{\varphi(0)| V^\dagger _j(\tau)O V_k(\tau)|\psi(0)}
= \sum_{j,k} \alpha_j^* \alpha_k \braket{\varphi(0)|e^{+it H_{\text{eff},j}}Oe^{-it H_{\text{eff},k}}|\psi(0)}.
\end{equation}
The individual matrix elements may be evaluated via a modified Hadamard test \cite{parrish2019quantum}. It is worth noting that since $\sum_j |\alpha_j|\geq 1$ and we are estimating normalized matrix elements, we may need to increase the accuracy of the Hadamard tests.

\subsubsection{Spectra of incoherent multiproduct formulas}

Another Hamiltonian simulation option is to use multiproduct product formulas. While we have the option to use coherent product formulas with amplitude amplification \cite{low2019well}, here we analyze incoherent product formulas \cite{blanes2024generalized}
\begin{equation}
  \tilde{G}_{m}(t) = \sum_i b_i \left(\prod_j U_q(a_{ij} \tau)\right)^{t/\tau}.
\end{equation}
where $U_q$ represents a $q$-order product formula, and $\sum_i b_i = \sum_j a_{ij} = 1$. We assume that the multiproduct formula achieves order $p$, which means that the leading order error of Taylor expansion of $\tilde{G}_{m}(t)$ is $O(\tau^{p})$. Note however that in this case the effective Hamiltonian will no longer be $t$-independent.
On the other hand, the advantage of this is that we may evaluate~\cref{eq:G(t)} for each product formula independently,
\begin{equation}
  \left\langle m_\rho \left|\sum_i b_i e^{-i H_i t}\right|  m_{\rho}\right\rangle = \sum_i b_i\left\langle m_\rho \left| e^{-i H_i t}\right|  m_{\rho}\right\rangle,
\end{equation}
where $H_i = H + \Delta H_i$ is the effective Hamiltonian implemented by each product formula.

As in the previous section, let us start with the perturbation of the eigenvalues only. Approximating $e^{-iH_i}\approx e^{-iH}e^{-i\Delta H_i}$, the time signal is approximated as
\begin{equation}
  \tilde{G}_m(t) \approx  \sum_k|\braket{E_k|m_\rho|I}|^2 e^{-i tE_k }\sum_i b_i\braket{E_k|e^{-it\Delta H_i}|E_k}
\end{equation}
Then, Taylor expanding,
\begin{equation}
  \braket{E_k|e^{-it\Delta H_i}|E_k}\approx  1 - it \langle E_k | \Delta H_i | E_k \rangle- \frac{t^2}{2} \langle E_k | (\Delta H_i)^2 | E_k \rangle+\ldots
\end{equation}
Remember that $\sum_i b_i = 1$. Second,
\begin{equation}
    - it \sum_i b_i  \langle E_k | \Delta H_i | E_k \rangle  - \frac{t^2}{2} \sum_i b_i \langle E_k | (\Delta H_i)^2 | E_k \rangle 
    = O(t \tau^p) 
\end{equation}
depending on the specifics of the multiproduct formula. Some specific choices to cancel different terms can be found in \cite{blanes2024generalized}. Some of these multiproduct formulas aim to remain coherent for longer time, suppressing more heavily terms that display $O(t^2)$ and higher order dependences on the order, called non-symplectic terms.

We can approximate
\begin{equation}
  \tilde{G}_m(t) \approx  \sum_k|\braket{E_k|m_\rho|I}|^2 e^{-i E_k t}\sum_i b_i e^{-\sum_n\Lambda_{i,k,n}t^n}
\end{equation}
where 
\begin{equation}
  \sum_i b_i \Lambda_{i,k,1} = O(\tau^p), \quad
\Lambda_{i,k,n} = O(\tau^{q \cdot n}),
\end{equation}
where $q$ is the order of the base product formula. The multiproduct coefficients {$b_i$} are chosen so that the leading-order terms cancel through order p in the linear combination. Note that in contrast to the randomized product formula above, the factors $e^{-\Lambda_{i,k,2} t^2}$ and above can no longer be interpreted as simple dampening of the time signal. For this reason, it may be hard to use these multiproduct formulas for spectroscopy unless we only need to evolve for short amounts of time and we heavily suppress the higher-than linear terms in $t$.

On the other hand, following the argument on the previous section, the amplitudes of the signal become
\begin{equation}
  |A_k|^2\approx  |\braket{E_k|m_\rho|I}|^2  + O(\tau^{p+1})
  +2 \tau^p\Re \left( \braket{E_k|m_\rho|I} \sum_{l\neq k} \frac{\braket{E_l|\bar{Y}_{p+1}(t) |E_k}}{E_l-E_k}
  \braket{E_l|m_\rho|I}\right).
\end{equation}

\subsection{Rotation Synthesis}
The implementation of single-qubit rotations introduces a further layer of approximation, the nature of which depends on the synthesis method. We analyze two primary approaches: deterministic unitary compilation and probabilistic quantum channels, and their distinct effects on the computed spectrum.

\subsubsection{Unitary Rotation Synthesis}
Unitary methods compile a target rotation into a fixed gate sequence that approximates it. This process introduces a coherent error, which is equivalent to simulating a perturbed Hamiltonian \cite{fomichev2025fast}. If the ideal Hamiltonian is $H$, the implemented Hamiltonian is $H_{eff} = H + \delta H$, where the perturbation $\delta H$ arises from the finite precision of the rotation angles.

The total effective Hamiltonian, including the p-order Trotter error, is:
\begin{equation}
  H' = H + \tau^p Y_{p+1} + \delta H + O(\tau^{p+1}).
\end{equation}
This perturbation adds directly to the Trotter error term. Consequently, applying perturbation theory as in Eqs.~\eqref{eq:perturbation_eigenstate} and \eqref{eq:perturbation_eigenvalues} shows that this method results in a coherent shift of the spectral peak positions and a complex, state-dependent modification of their amplitudes (brightness).

\subsubsection{Quantum Channel Rotation Synthesis}
Alternatively, methods like mixed fallback synthesis implement a rotation as a quantum channel $\mathcal{E}$ \cite{fomichev2025fast}. This introduces an incoherent error that causes decoherence. The accumulation of small, independent errors from each rotation leads to an exponential decay in the fidelity of the evolved state \cite{fomichev2025fast}.

This decoherence manifests as a dampening factor on the time signal:
\begin{equation}
  \tilde{G}_{channel}(t) \approx \tilde{G}_{coherent}(t) \times e^{-\Gamma t},
\end{equation}
where $\tilde{G}_{coherent}(t)$ contains only coherent errors and $\Gamma$ is the decoherence rate. For a Trotter step size $\tau$ and $n_{step}$ rotations per step, each with diamond norm error $\epsilon_{rot}$, the rate is $\Gamma = \frac{n_{step}\epsilon_{rot}}{2\tau}$ \cite{fomichev2025fast}. 
The time signal thus becomes:
\begin{equation}
  \tilde{G}_{channel}(t) \approx \sum_k |\braket{E_k'|m_\rho|I}|^2 e^{-i E_k' t} e^{-\Gamma t}.
\end{equation}
In the frequency domain, this exponential decay corresponds to a convolution with a Lorentzian. The effect is not a peak shift, but rather an additional broadening of all spectral peaks, resulting in an effective line broadening of $\eta_{\text{eff}} = \eta + \Gamma$ \cite{fomichev2025fast}. The amplitudes are uniformly dampened, reducing spectral resolution rather than redistributing spectral weight.

\subsection{The symmetry protected effective Hamiltonian~\label{app:symmetry_protection}}

We analyze the simulation of a quantum chemistry Hamiltonian $H=h+V$ using the method proposed in Ref. \cite{luo2025efficient}, where the interaction term $V$ is approximated by the projection of a diagonal operator $\tilde{V}$ acting on an enlarged basis of $M$ modes. The total Hamiltonian in the extended space is $\tilde{H} = h + \tilde{V}$. The physical subspace corresponds to the vacuum state of the $M-N$ ancillary modes, denoted by $\ket{0}_{\text{aux}}$. A single Trotter step with timestep $\tau$ is implemented via the unitary $U_{step} = \mathcal{U}^\dagger e^{-i\tilde{V}\tau} \mathcal{U} e^{-ih\tau}$, where $\mathcal{U}$ is a basis change.

The evolution under $\tilde{V}$ can populate the ancillary modes, causing leakage from the physical subspace. We analyze two methods to suppress this leakage: an incoherent method based on resetting the ancillary modes, and a coherent method based on random phase kicks.

\subsubsection{Incoherent Suppression via Reset}

The ``basic algorithm" described in Ref. \cite{luo2025efficient} suppresses leakage by resetting the ancillary modes to their vacuum state after each evolution step. This procedure corresponds to applying a quantum channel -- a non-unitary, trace-preserving map -- to the system's state at every step $\tau$:
\begin{equation}
  \rho \to \mathcal{E}(\rho) = \text{Tr}_{\text{aux}} \left[ U_{step} (\rho \otimes \ket{0}\bra{0}_{\text{aux}}) U_{step}^\dagger \right].
\end{equation}
Here the subindex ``aux'' indicates the $M_\ell-N$ ``auxiliary'' orbitals.
This method is a practical implementation of Quantum Zeno Dynamics (QZD), where the repeated projection onto the physical subspace (ancillary vacuum) constrains the evolution.

The error introduced by this projection, $\epsilon_{Pr}$, is the probability of the state leaking to the ancillary modes and being discarded during the reset. This process is inherently incoherent; it leads to a loss of quantum information and decoherence. For the basic algorithm, this error scales as $\epsilon_{Pr} = O(\tau^2)$ per step.

Over a total evolution time $t = J\tau$, the repeated application of this channel causes an exponential decay of the signal's coherence. The probability of the system remaining in the physical subspace after one step is $p_{stay} = 1 - \epsilon_{Pr} = 1 - O(\tau^2)$. The amplitude of the coherent part of the wavefunction after $J$ steps is attenuated by $(p_{stay})^{J/2} \approx (1 - O(\tau^2))^{t/(2\tau)} \approx e^{-t O(\tau)}$. This introduces an eigenstate-dependent dampening factor $e^{-\Gamma_k t}$ to the time signal, where the decoherence rate $\Gamma_k$ is the leakage probability per unit time:
\begin{equation}
  \Gamma_k = \frac{\epsilon_{Pr}}{\tau} \approx O(\tau^2)/\tau = O(\tau).
\end{equation}
The eigenvalues $E_k'$ and eigenstates $|E_k'\rangle$ are perturbed only by the standard Trotter error between the kinetic ($h$) and potential ($V$) terms of the physical Hamiltonian. The resulting time signal is therefore:
\begin{equation}
  \tilde{G}_{inc}(t) \approx \sum_k |\braket{E_k'|m_{\rho}|I}|^2 e^{-i E_k' t} e^{-\Gamma_k t}.
\end{equation}
The primary effect of this suppression method is thus an additional broadening of the spectral peaks, controlled by the decoherence rate $\Gamma_k$. We note that the ``improved algorithm" of Ref. \cite{luo2025efficient} is designed to achieve $\epsilon_{Pr} = O(\tau^3)$, yielding a more favorable decoherence rate of $\Gamma_k = O(\tau^2)$ and thus less broadening.

\subsubsection{Coherent Suppression via Random Phase Kicks}

As an alternative to measurement, we can apply a unitary symmetry protection inspired by Ref. \cite{tran2021faster}. At each step $j$, we introduce a random phase kick that depends on the number of particles in the ancillary modes, $N_{\text{aux}} = \sum_{m=N+1}^{M_\ell} b_m^\dagger b_m$. The one-step unitary is
\begin{equation}
  U'_{step}(\phi_j) = e^{-i\phi_j N_{\text{aux}}} U_{step},
\end{equation}
where $\phi_j$ is chosen randomly from $[0, 2\pi)$. While this is a coherent process for any single shot, averaging over many shots with different random phases effectively projects the dynamics onto the $n_{\text{aux}}=0$ Zeno subspace \cite{tran2021faster}.

The random phase kick $C_j = e^{-i\phi_j N_{\text{aux}}}$ averages the leakage error to zero over many shots ($\mathbb{E}[C_j Y_{leak} C_j^\dagger] = 0$). However, the shot-to-shot fluctuation $\tau \ \Delta Y(\phi_j) = \tau \ C_j Y_{leak} C_j^\dagger$ introduces dephasing. Following the logic for randomized product formulas, this results in an eigenstate-dependent dampening $e^{-\Gamma_k t}$, where the rate $\Gamma_k$ is given by the variance of the error operator. Crucially, since the time step error $\tau \ Y_{leak}$ is already $O(\tau)$, we must be careful with the scaling. The time signal accumulates phase error at each step, and the variance of this phase error determines the decay. The phase error per step for a state $|E_k\rangle$ is $\tau \braket{E_k| \Delta Y(\phi) |E_k}$, which is an operator of $O(\tau)$.

Let's apply the formalism of~\cref{sec:randomized}. The total accumulated phase error after $J=t/\tau$ steps has a variance 
$$J \times \text{Var}_{\phi}(\braket{E_k|\Delta Y(\phi)|E_k}).$$ 
The dampening factor is $e^{-\sigma^2/2}$. Therefore, the rate $\Gamma_k$ is:
\begin{equation}
  \Gamma_k = \frac{J \times \text{Var}_{\phi}(\tau \braket{E_k|\Delta Y(\phi)|E_k})}{2t} = \frac{\text{Var}_{\phi}(\tau\braket{E_k|\Delta Y(\phi)|E_k})}{2\tau}.
\end{equation}
Since $\tau \  \Delta Y(\phi)$ is of order $O(\tau)$, its matrix elements are also $O(\tau)$. The variance of an $O(\tau)$ quantity is $O(\tau^2)$. This leads to a decoherence rate $\Gamma_k = O(\tau)$. This scaling is the same as the basic incoherent method, not superior. The final time signal is:
\begin{equation}
  \tilde{G}_{coh}(t) \approx \sum_k |\braket{E_k'|m_{\rho}|I}|^2 e^{-i E_k' t} e^{-\Gamma_k t},
\end{equation}
where $|E_k'\rangle$ and $E_k'$ are eigenstates and eigenvalues of the Hamiltonian perturbed only by the standard Trotter error. The randomization transforms the coherent leakage error into an incoherent dephasing error, resulting in spectral broadening. While it does not offer a scaling advantage over the basic reset scheme, it avoids measurements, which can be a practical benefit.

\subsubsection{Coherent Suppression via a Symmetric C-Design Cycle}

As an alternative to both incoherent resets and simple randomization, we propose a deterministic, coherent leakage suppression scheme. This method avoids the decoherence of random methods by using a structured sequence of phase kicks, arranged in a time-symmetric "super-cycle". The result is a purely unitary evolution where the leakage error is cancelled to the same order as the underlying Trotter formula, manifesting as a small, coherent shift in the spectral peaks rather than broadening.

The construction is a multi-level process, where the effective Hamiltonian is progressively refined at each stage to cancel the dominant error terms.

\paragraph{Step 1: The Base Unit and Its Error.}
The base unit of our simulation is a single second-order Trotter step, $U_2(\tau)$. Its evolution is governed by an effective Hamiltonian that contains two distinct error sources with different scalings: the intrinsic Trotter error and the leakage error.
\begin{equation}
  H_{\text{eff}}^{(1)} = H + \underbrace{Y_{\text{leak}}^{(0)}}_{O(1)} + \underbrace{\tau^2 Y_{\text{Trotter}}^{(2)}}_{O(\tau^2)} + O(\tau^3).
\end{equation}
The dominant error is the zeroth-order leakage operator, $Y_{\text{leak}}^{(0)} \approx \tilde{H} - H$, which is independent of the timestep $\tau$ and must be cancelled to achieve an accurate simulation.

\paragraph{Step 2: First-Order Cancellation via the Forward Cycle ($U_F$).}
To cancel the $O(1)$ leakage error, we construct a ``forward cycle" $U_F$ composed of $C$ second order Trotter steps. Each step is conjugated with a phase kick $e^{-i\phi_k N_{\text{aux}}}$ using the roots of unity, $\phi_k = 2\pi k/C$. The evolution for this C-step cycle is $U_F(\tau) = \prod_{k=0}^{C-1} U'_k(\tau)$, where $U'_k(\tau)$ is the phase-kicked Trotter step. The effective Hamiltonian for this cycle, $H_{\text{eff, F}}$, averages away the zeroth-order leakage term. This cancellation can be shown rigorously using the Baker-Campbell-Hausdorff (BCH) formula for a similarity transform:
\begin{equation}
e^{-i \phi N_{\text{aux}}} Y_{\text{leak}} e^{+i \phi N_{\text{aux}}} = \sum_{j=0}^{\infty} \frac{(-i\phi)^j}{j!} [N_e-N_{\text{ph}}, Y_{\text{leak}}]_j,
\end{equation}
where $[N_e-N_{\text{ph}}, Y_{\text{leak}}]_j$ is the $j$-th nested commutator. Decomposing $Y_{\text{leak}}$ into components $Y_{\Delta m}$ that change the ancillary particle number by $\Delta m$, the fundamental commutation relation is $[N_e-N_{\text{ph}}, Y_{\Delta m}] = (\Delta m) Y_{\Delta m}$. The nested commutators are thus $[N_e-N_{\text{ph}}, Y_{\Delta m}]_j = (\Delta m)^j Y_{\Delta m}$. Substituting this into the series yields:
\begin{equation}
e^{-i \phi N_{\text{aux}}} Y_{\Delta m} e^{+i \phi N_{\text{aux}}} = Y_{\Delta m} \sum_{j=0}^{\infty} \frac{(-i\phi \Delta m)^j}{j!} = Y_{\Delta m} e^{-i \Delta m \phi}.
\end{equation}
The leading term in the effective Hamiltonian for the cycle, $H_{\text{eff, F}}$, is the average over the C steps. The sum over the phase-kicked leakage operators vanishes because the sum over the roots of unity, $\sum_{k=0}^{C-1} e^{-i \Delta m (2\pi k/C)}$, is zero for any change in particle number $\Delta m \neq 0$ that we aim to suppress. The composition of non-commuting operators, however, introduces a new error from BCH commutator terms, leaving the cycle with a residual first-order leakage error:
\begin{equation}
  H_{\text{eff, F}}^{(2)} = H + \underbrace{\tau Y_{\text{leak}}^{(1)}}_{O(\tau)} + \tau^2 Y_{\text{Trotter}}^{(2)} + O(\tau^3).
\end{equation}

\paragraph{Step 3: Second-Order Cancellation via the Symmetric Super-Cycle ($U_{\text{symm}}$).}
To cancel the remaining $O(\tau)$ leakage error, we symmetrize the entire forward cycle. We define a backward cycle as its time-reversed adjoint, $U_B(\tau) = U_F(-\tau)^\dagger$. The effective Hamiltonian for $U_B$ has its odd-powered error terms flipped in sign, $H_{\text{eff, B}}^{(2)} = H - \tau Y_{\text{leak}}^{(1)} + \dots$. The final, symmetric super-cycle is the composition $U_{\text{symm}} = U_B U_F$. This construction cancels the $O(\tau)$ error terms, yielding a final effective Hamiltonian where the total error is consistently second-order:
\begin{equation}
  H_{\text{eff, symm}}^{(3)} = H + \tau^2 \left( Y_{\text{Trotter}}^{(2)} + Y_{\text{leak}}^{(2)} \right) + O(\tau^3).
\end{equation}
The total error is now on par with the intrinsic error of the second-order Trotter formula itself.

\paragraph{A Refined Suppression Scheme Using a Binary Operator.}
A significant practical improvement can be made by recognizing that we only need to suppress transitions \textit{out of} the physical subspace ($n_{\text{aux}}=0$), not transitions between different auxiliary subspaces ($n_{\text{aux}}>0$). This allows us to replace the full number operator $N_{\text{ph}}$ with a simpler binary operator, namely the projector onto the auxiliary subspace, $Q_{\text{ph}} = 1 - \ket{0}\bra{0}_{\text{aux}}$. The phase kick $e^{-i\phi Q_{\text{ph}}}$ applies a uniform phase $e^{-i\phi}$ to any state with $n_{\text{aux}}>0$ and leaves the physical subspace untouched. Consequently, any leakage operator component $Y_{m_1 \leftarrow 0}$ that creates $m_1>0$ particles from the vacuum transforms as $e^{-i\phi Q_{\text{ph}}} Y_{m_1 \leftarrow 0} e^{+i\phi Q_{\text{ph}}} = Y_{m_1 \leftarrow 0} e^{-i\phi}$, regardless of the value of $m_1$. This means we only need to cancel a single phase frequency, which can be achieved with a minimal cycle length of $C=2$. The resulting symmetric super-cycle is only 4 Trotter steps long, dramatically reducing overhead.

\paragraph{Estimating $Y_{\text{leak}}^{(2)}$}

To estimate the leakage error $Y_{\text{leak}}^{(2)}$ we define as above
\begin{equation}
  Y_{\text{leak}} = \sum_{m} Y_{\Delta m}.
\end{equation}
Then, we use
\begin{equation}
e^{-i \phi N_{\text{aux}}} \sum_{m} Y_{\Delta m} e^{+i \phi N_{\text{aux}}} = \sum_{m} Y_{\Delta m} e^{-i \Delta m \phi}.
\end{equation}
The component we are interested in is $Y_{0,0}$ the rest is erroneous. If we now use the symmetric Baker-Campbell-Hausdorff we get the leading order error is
\begin{equation}\label{eq:leakage_error_symmetry_protected}
  Y_{\text{leak}}^{(2)} = -\sum_{k=1}^{C-1}\left(\frac{1}{24}[X_k, [X_k, Z_k ]] + \frac{1}{12} [Z_k, [X_k, Z_k ]]\right),
\end{equation}
where
\begin{equation}
  X_k = \sum_{\Delta m} e^{-i\Delta m (2\pi k/C)} Y_{\Delta m},
\end{equation}
and 
\begin{equation}
  Z_k = \sum_{l<k} \sum_{\Delta m} e^{-i\Delta m (2\pi l/C)} Y_{\Delta m}.
\end{equation}

In the case of $C = 2$ and $\Delta m = 0, 1$, this is equivalent to
\begin{equation}
  X_1 = \sum_{\Delta m} e^{-i\pi\Delta m} Y_{\Delta m},\qquad \text{and} \qquad
  Z_1 =  \sum_{\Delta m}  Y_{\Delta m}.
\end{equation}

\paragraph{Resulting Time Signal and Spectral Impact.}

The evolution under the symmetric super-cycle $U_{\text{symm}}$ is purely unitary, but the total time evolution for an arbitrary time $t=J\tau$ combines deterministic and random components. The evolution consists of $N_{cyc} = \lfloor J / (2C) \rfloor$ complete super-cycles, followed by a remainder of $J_{rem} = J \pmod{2C}$ individual Trotter steps where the phases selected are randomized at each shot.

The complete super-cycles produce a purely coherent evolution under the effective Hamiltonian $H_{\text{eff, symm}}$. This is the dominant effect, causing a systematic shift in the energies and eigenstates. The corrected eigenvalues $E'_k$ and eigenstates $|E'_k\rangle$ are given by:

\begin{align}
E'_k &= E_k + \tau^2 \braket{E_k|Y_{\text{Trotter}}^{(2)} + Y_{\text{leak}}^{(2)}|E_k} + O(\tau^3) \label{eq:coherent_shift_E} \\
|E'_k\rangle &= |E_k\rangle + \tau^2 \sum_{l\neq k} \frac{\braket{E_l|Y_{\text{Trotter}}^{(2)} + Y_{\text{leak}}^{(2)}|E_k}}{E_k-E_l}|E_l\rangle + O(\tau^3). \label{eq:coherent_shift_psi}
\end{align}

The remainder evolution, however, is an incomplete cycle whose effective Hamiltonian depends on the random permutation $\sigma$ of the first $J_{rem}$ phases. Averaging over shots introduces a dephasing factor $D_k(J_{rem})$ that modulates the signal. This factor is given by the expectation over the random permutations:

\begin{equation}
D_k(J_{rem}) = \mathbb{E}_{\sigma}\left[ \exp\left(-i \braket{E'_k| H_{\text{eff,rem}}(\sigma) |E'_k} J_{rem}\tau\right) \right],
\end{equation}

where $H_{\text{eff,rem}}(\sigma)$ is the effective Hamiltonian of the remainder. Approximating the expectation value using the second cumulant expansion, this dephasing factor can be written as:

\begin{equation}
D_k(J_{rem}) \approx \exp\left(-i \bar{E}_{k,rem} J_{rem}\tau\right) \times
\exp\left( - \frac{(J_{rem}\tau)^2 \text{Var}_\sigma(\delta E_{k,rem})}{2} \right),
\end{equation}

where $\bar{E}_{k,rem} = 0$ is the average energy contribution from the remainder and $\text{Var}_\sigma(\delta E_{k,rem})$ is the variance of the energy fluctuations due to the random permutations.

Combining these effects, the final time-domain Green's function is:

\begin{equation}
\tilde{G}_{\text{coh}}(t) \approx \sum_k |\braket{E'_k|m_{\rho}|I}|^2 e^{-i E'_k (N_{cyc} \cdot 2C\tau)} 
\times D_k(J\pmod{2C}).
\end{equation}

The impact on the spectrum is twofold. The dominant effect is a coherent shift of the peak positions by $O(\tau^2)$ according to Eq.~\eqref{eq:coherent_shift_E}. The secondary effect is a periodic dephasing governed by $D_k$. Unlike a simple exponential decay which would cause Lorentzian
broadening, this periodic damping modulates the sharp, shifted peaks
with a function that is periodic in $J\pmod{2C}$.  In the frequency
domain, this creates small \emph{spurious peaks} -- extra spectral
features at frequencies offset from the physical peaks by multiples
of $2\pi/(2C\tau)$ -- rather than a uniform broadening of the peaks
themselves.

\paragraph{Quantifying the spurious-peak weight.}
The complete super-cycles evolve the system under a single effective Hamiltonian~$H_{\text{eff, symm}}$ whose coupling between the physical and auxiliary subspaces is~$O(\tau^2)$.  By standard first-order perturbation theory, the physical eigenstates $|E_k\rangle$ of the target Hamiltonian~$H$ mix with the auxiliary eigenstates $|E_l\rangle$ by an amplitude $\sim \tau^2 \langle E_l | Y_{\text{leak}}^{(2)} | E_k \rangle / (E_k - E_l)$.  Squaring and summing over all auxiliary states gives the total spectral weight transferred from each physical peak to spurious peaks:
\begin{equation}\label{eq:sp_sideband_weight}
  w_{\text{spurious},k} = \sum_{l \in \text{aux}}
  \frac{|\langle E_l | \tau^2 Y_{\text{leak}}^{(2)} | E_k \rangle|^2}
       {(E_k - E_l)^2} = O(\tau^4).
\end{equation}
Crucially, this quantity is \emph{time-independent}: it depends on the mixing angle between physical and auxiliary eigenstates, which is fixed by the effective Hamiltonian.  Once the system settles into the perturbed eigenstates after the first few super-cycles, the spurious-peak weight does not grow further.

\section{Spectroscopy via the Chebyshev transform of the qubitization walk signal\label{app:chebyshev_transform}}

In~\cref{ssec:Qubitization} we noted that qubitization natively implements $e^{\pm i\arccos(H/\lambda)}$ rather than $e^{-iHt}$, and that spectroscopy can therefore be performed with a \emph{Chebyshev transform} instead of a Fourier transform. This appendix provides a self-contained, step-by-step derivation. We begin with the spectroscopy problem (\cref{ssec:app_setup}), review the Fourier route used with Trotter formulas (\cref{ssec:app_fourier_route}), derive the eigenstructure of the walk operator (\cref{ssec:app_walk}), show that the walk signal is a Chebyshev moment (\cref{ssec:app_signal}), construct the spectral reconstruction formula (\cref{ssec:app_KPM}) and prove its equivalence to Fourier analysis in the angular variable (\cref{ssec:app_fourier_equiv}).

Before starting, it is worth noting that this analysis can be implemented in classical computing postprocessing, via the Fourier/Chebyshev transform of the time signal measured \cite{lin2022heisenberg}; or via a Quantum Fourier or Chebyshev Transform in the quantum processor \cite{williams2023quantum}.

\subsection{Setup: the spectroscopy problem
\label{ssec:app_setup}}

Consider a Hamiltonian $H$ acting on a Hilbert space $\mathcal{H}_s$, with eigendecomposition
\begin{equation}\label{eq:app_eigendecomp}
  H = \sum_k E_k \ket{E_k}\!\bra{E_k}.
\end{equation}
The spectral function for a given initial state $\ket{\psi} = m_\rho\ket{I}/\|m_\rho\ket{I}\|$ is
\begin{equation}\label{eq:app_spectral_function}
  A(\omega) = \sum_k |c_k|^2\,\delta(\omega - E_k),
  \qquad c_k = \braket{E_k|\psi},
\end{equation}
so that $\sum_k |c_k|^2 = 1$. In practice one reconstructs a broadened version with Lorentzian peaks,
\begin{equation}\label{eq:app_broadened}
  A_\eta(\omega)
  = \frac{1}{\pi}\sum_k |c_k|^2
    \frac{\eta}{(\omega - E_k)^2 + \eta^2},
\end{equation}
where $\eta > 0$ controls the half-width at half-maximum. The goal of the quantum algorithm is to obtain the data needed to evaluate~\cref{eq:app_broadened} classically.

\subsection{Review: the Fourier route (Trotter product formulas)
\label{ssec:app_fourier_route}}

Trotter product formulas give access to
$U(\tau) \approx e^{-iH\tau}$. After $j$ applications the measured time-domain signal is
\begin{equation}\label{eq:app_G_trotter}
  \tilde{G}(\tau j)
  = \braket{\psi|U(\tau)^j|\psi}
  = \sum_k |c_k|^2\,e^{-iE_k\tau j}.
\end{equation}
This is a sum of complex exponentials with frequencies $E_k\tau$. The broadened spectral function is recovered via a discrete-time Fourier transform (DTFT) with exponential damping,
\begin{equation}\label{eq:app_DTFT}
  -\eta\operatorname{Im}G_\rho(\omega)
  = \frac{\eta\tau}{2\pi}
    \sum_{j=-\infty}^{\infty}
    e^{-\eta\tau|j|}\,\tilde{G}(\tau j)\,e^{ij\tau\omega},
\end{equation}
as defined in~\cref{eq:green-via-discretetime-fourier}.
The Fourier kernel $e^{ij\tau\omega}$ together with the damping $e^{-\eta\tau|j|}$ produce Lorentzian peaks centred at each $E_k$. The key structural feature is: \emph{each Trotter step adds a linear phase $E_k\tau$ to the
signal, so a Fourier transform extracts the energies.}

\subsection{The qubitization walk operator
\label{ssec:app_walk}}

Qubitization \cite{low2019hamiltonian} constructs a unitary \emph{walk operator} $W$ acting on an enlarged space $\mathcal{H}_a \otimes \mathcal{H}_s$ (ancilla $\otimes$ system) that \emph{block-encodes} $H/\lambda$:
\begin{equation}\label{eq:app_block_encoding}
  (\bra{0}_a \otimes I_s)\;W\;(\ket{0}_a \otimes I_s) = \frac{H}{\lambda},
\end{equation}
where $\lambda = \sum_\ell |h_\ell|$ is the $1$-norm of the Hamiltonian coefficients in a chosen decomposition (e.g.\ THC). Since $\|H/\lambda\| \le 1$, all eigenvalues satisfy $|E_k/\lambda| \le 1$. We call $\ket{0}_a \otimes \mathcal{H}_s$ the \emph{signal subspace} (or ``good'' subspace).

\paragraph{Eigenstructure of $W$.}
For each energy eigenstate $\ket{E_k}$, define the signal-subspace vector
\begin{equation}\label{eq:app_Gk_def}
  \ket{G_k} \;\equiv\; \ket{0}_a\ket{E_k}.
\end{equation}
The block-encoding condition~\cref{eq:app_block_encoding} implies that $W$ maps $\ket{G_k}$ into a superposition of $\ket{G_k}$ and a vector $\ket{G_k^\perp}$ orthogonal to the signal subspace:
\begin{equation}\label{eq:app_W_on_Gk}
  W\ket{G_k}= \cos\theta_k\,\ket{G_k}+ \sin\theta_k\,\ket{G_k^\perp}, \qquad
  \theta_k \equiv \arccos\!\left(\frac{E_k}{\lambda}\right),
\end{equation}
where $\theta_k \in [0,\pi]$ since $|E_k/\lambda| \le 1$. Here $\braket{G_k|G_k^\perp} = 0$ and $\braket{G_j^\perp|G_k^\perp} = \delta_{jk}$. The cosine coefficient is precisely $E_k/\lambda$, consistent with~\cref{eq:app_block_encoding}.

Similarly, $W$ acts on $\ket{G_k^\perp}$ as
\begin{equation}\label{eq:app_W_on_Gkperp}
  W\ket{G_k^\perp} = -\sin\theta_k\,\ket{G_k} + \cos\theta_k\,\ket{G_k^\perp}.
\end{equation}
In the two-dimensional subspace $\operatorname{span}\{\ket{G_k},\ket{G_k^\perp}\}$, the walk operator therefore acts as a rotation by angle $\theta_k$:
\begin{equation}\label{eq:app_W_rotation}
  W\big|_{\operatorname{span}\{G_k,G_k^\perp\}}
  = \begin{pmatrix}
      \cos\theta_k & -\sin\theta_k \\
      \sin\theta_k & \phantom{-}\cos\theta_k
    \end{pmatrix}.
\end{equation}

\paragraph{Eigenstates and eigenphases.}
Diagonalising the $2\times 2$ rotation yields two eigenstates per energy level,
\begin{equation}\label{eq:app_walk_eigenstates}
  \ket{w_k^\pm}
  = \frac{1}{\sqrt{2}}
    \bigl(\ket{G_k} \mp i\,\ket{G_k^\perp}\bigr),
\end{equation}
with eigenphases $\pm\theta_k$:
\begin{equation}\label{eq:app_walk_eigenphases}
  W\ket{w_k^\pm} = e^{\pm i\theta_k}\,\ket{w_k^\pm}.
\end{equation}
\emph{Proof.}
\begin{align}
  W\ket{w_k^+}
  &= \tfrac{1}{\sqrt{2}}
     \bigl(W\ket{G_k} - i\,W\ket{G_k^\perp}\bigr)
     \nonumber \\
  &= \tfrac{1}{\sqrt{2}}
     \bigl[
       (\cos\theta_k\,\ket{G_k}+\sin\theta_k\,\ket{G_k^\perp})
       - i(-\sin\theta_k\,\ket{G_k}+\cos\theta_k\,\ket{G_k^\perp})
     \bigr]
     \nonumber \\
  &= \tfrac{1}{\sqrt{2}}
     \bigl[
       (\cos\theta_k + i\sin\theta_k)\,\ket{G_k}
       + (\sin\theta_k - i\cos\theta_k)\,\ket{G_k^\perp}
     \bigr]
     \nonumber \\
  &= \tfrac{1}{\sqrt{2}}
     \bigl[
       e^{i\theta_k}\,\ket{G_k}
       - i\,e^{i\theta_k}\,\ket{G_k^\perp}
     \bigr]
  = e^{i\theta_k}\ket{w_k^+},
     \label{eq:app_verify_plus}
\end{align}
where we used $\sin\theta_k - i\cos\theta_k = -i\,e^{i\theta_k}$. The derivation for $\ket{w_k^-}$ is analogous and gives $W\ket{w_k^-} = e^{-i\theta_k}\ket{w_k^-}$.~$\square$

\paragraph{Key identity: symmetric decomposition of the signal
subspace.}
Inverting~\cref{eq:app_walk_eigenstates}:
\begin{equation}\label{eq:app_symmetric_decomp}
  \ket{G_k}
  = \ket{0}_a\ket{E_k}
  = \frac{1}{\sqrt{2}}
    \bigl(\ket{w_k^+} + \ket{w_k^-}\bigr).
\end{equation}
The signal-subspace state has \emph{equal weight} on both eigenphase branches. This identity is the reason Chebyshev polynomials appear naturally.

\subsection{The qubitization signal is a Chebyshev moment
\label{ssec:app_signal}}

\paragraph{Measurement protocol.}
The spectroscopy protocol proceeds as follows:
\begin{enumerate}
  \item Prepare $\ket{\Psi_0} = \ket{0}_a\ket{\psi}$
        (the initial state lies entirely in the signal subspace).
  \item Apply the walk operator $n$ times: $W^n\ket{\Psi_0}$.
  \item Measure the overlap with the initial state.
\end{enumerate}
The measured signal is
\begin{equation}\label{eq:app_signal_def}
  \mu_n
  \;\equiv\;
  \braket{\Psi_0|W^n|\Psi_0}
  = \bra{0}_a\!\bra{\psi}\;W^n\;\ket{0}_a\!\ket{\psi}.
\end{equation}

We now derive the explicit form of $\mu_n$ in five steps.

\paragraph{Step 1: Expand in the energy eigenbasis.}
Write $\ket{\psi} = \sum_k c_k\ket{E_k}$, so that
\begin{equation}\label{eq:app_Psi0_expand}
  \ket{\Psi_0} = \sum_k c_k\,\ket{0}_a\ket{E_k}
  = \sum_k c_k\,\ket{G_k}.
\end{equation}

\paragraph{Step 2: Decompose each $\ket{G_k}$ into walk eigenstates.}
Using~\cref{eq:app_symmetric_decomp},
\begin{equation}\label{eq:app_Psi0_walk}
  \ket{\Psi_0} = \sum_k \frac{c_k}{\sqrt{2}} \bigl(\ket{w_k^+} + \ket{w_k^-}\bigr).
\end{equation}

\paragraph{Step 3: Apply $W^n$.}
Since $W^n\ket{w_k^\pm} = e^{\pm in\theta_k}\ket{w_k^\pm}$,
\begin{equation}\label{eq:app_Wn_Psi0}
  W^n\ket{\Psi_0}
  = \sum_k \frac{c_k}{\sqrt{2}}
    \bigl(
      e^{+in\theta_k}\ket{w_k^+}
      + e^{-in\theta_k}\ket{w_k^-}
    \bigr).
\end{equation}

\paragraph{Step 4: Compute the overlap $\braket{\Psi_0|W^n|\Psi_0}$.}
The walk eigenstates from different energy sectors are orthogonal ($\braket{w_j^\alpha|w_k^\beta} = \delta_{jk}\delta_{\alpha\beta}$), so
\begin{equation}
  \mu_n = \sum_k |c_k|^2 \cdot
     \frac{1}{2}
     \bigl(e^{+in\theta_k} + e^{-in\theta_k}\bigr)
     = \sum_k |c_k|^2\,\cos(n\theta_k).
     \label{eq:app_mu_n_cos}
\end{equation}
The two eigenphase branches $+\theta_k$ and $-\theta_k$ combine
into a cosine.

\paragraph{Step 5: Identify the Chebyshev polynomial.}
The Chebyshev polynomials of the first kind satisfy the identity \cite{mason2002chebyshev}
\begin{equation}\label{eq:app_cheb_identity}
  T_n(\cos\theta) = \cos(n\theta) \qquad \forall n \in \mathbb{N}^+.
\end{equation}
Since $\theta_k = \arccos(E_k/\lambda)$, we have $\cos\theta_k = E_k/\lambda$, and~\cref{eq:app_mu_n_cos} becomes
\begin{equation}\label{eq:app_mu_n}
  \mu_n
  = \sum_k |c_k|^2\,T_n\!\left(\frac{E_k}{\lambda}\right).
\end{equation}
In words: \emph{each walk step produces a Chebyshev moment of the spectral function}, entirely analogously to how each Trotter step produces a Fourier coefficient.

\paragraph{Integral form.}
Defining the rescaled variable $x = E/\lambda \in [-1,1]$ and the rescaled spectral density $\bar{A}(x) = \lambda\,A(\lambda x) = \sum_k |c_k|^2\,\delta\!\bigl(x - E_k/\lambda\bigr)$, \cref{eq:app_mu_n} reads
\begin{equation}\label{eq:app_cheb_moments_integral}
  \mu_n
  = \int_{-1}^{1} \bar{A}(x)\,T_n(x)\,dx.
\end{equation}
These are the \emph{Chebyshev moments} of $\bar{A}$.

\paragraph{Remark on the measurement protocol.}
Since $\mu_n$ is real, a Hadamard-test implementation of the overlap measurement yields all information from the $X$-basis readout alone; the $Y$-basis expectation value vanishes identically.

\subsection{Reconstructing the spectrum: the kernel polynomial method
\label{ssec:app_KPM}}

Given the moments $\{\mu_n\}_{n=0}^{N_{\max}}$ obtained from the quantum computer, we wish to reconstruct the spectral function $A(\omega)$. This is accomplished by the \emph{kernel polynomial method} (KPM) \cite{weisse2006kernel}, which we now derive from first principles.

\paragraph{Step 1: Chebyshev completeness relation.}
The Chebyshev polynomials $\{T_n\}_{n\ge 0}$ are orthogonal with respect to the weight function $w(x) = 1/\sqrt{1-x^2}$ on $[-1,1]$:
\begin{equation}\label{eq:app_orthogonality}
  \int_{-1}^{1}
    \frac{T_m(x)\,T_n(x)}{\sqrt{1-x^2}}\,dx
  = \frac{\pi}{2}(1+\delta_{n0})\,\delta_{mn}.
\end{equation}
The associated completeness relation (resolution of the identity) is
\begin{equation}\label{eq:app_completeness}
  \delta(x - y)
  = \frac{1}{\pi\sqrt{1-x^2}}
    \biggl[1 + 2\sum_{n=1}^{\infty} T_n(x)\,T_n(y)\biggr],
  \qquad x,y \in (-1,1).
\end{equation}
\emph{Proof.}
Multiply both sides by $T_m(y)$ and integrate over $y$. The left side gives $T_m(x)$. On the right, use~\cref{eq:app_cheb_moments_integral} with $\bar{A}(y) = T_m(y)$: the moment is $\int T_m(y)\,T_n(y)\,dy$, which for the Lebesgue measure does not directly give the orthogonality relation. Instead, verify~\cref{eq:app_completeness} by substituting $x = \cos\phi$, $y = \cos\psi$ and using the standard Fourier completeness $\delta(\phi-\psi) = \frac{1}{2\pi}\sum_{n=-\infty}^{\infty} e^{in(\phi-\psi)}$, together with the Jacobian $\delta(\cos\phi-\cos\psi) = \delta(\phi-\psi)/|\sin\psi|$.~$\square$

\paragraph{Step 2: Expand $\bar{A}$ using the completeness relation.}
\begin{align}
  \bar{A}(x)
  &= \int_{-1}^{1} \bar{A}(y)\,\delta(x-y)\,dy
     \nonumber \\
  &= \frac{1}{\pi\sqrt{1-x^2}}
     \biggl[
       \underbrace{
         \int_{-1}^{1}\!\bar{A}(y)\,dy}_{=\,\mu_0}
       \;+\; 2\sum_{n=1}^{\infty}
       \underbrace{
         \int_{-1}^{1}\!\bar{A}(y)\,T_n(y)\,dy}_{=\,\mu_n}
       \;T_n(x)
     \biggr]
     \nonumber \\
  &= \frac{1}{\pi\sqrt{1-x^2}}
     \biggl[\mu_0 + 2\sum_{n=1}^{\infty}\mu_n\,T_n(x)\biggr].
     \label{eq:app_exact_expansion}
\end{align}
This is exact; every $\mu_n$ is a Chebyshev moment obtainable from the walk~\cref{eq:app_mu_n}.

\paragraph{Step 3: Truncation and damping kernels.}
In practice only a finite number of moments $\{\mu_n\}_{n=0}^{N_{\max}}$ are available. Truncating the sum abruptly at $N_{\max}$ produces Gibbs oscillations near sharp spectral features. To suppress them, each moment is multiplied by a damping kernel $g_n$ that smoothly decays to zero:
\begin{equation}\label{eq:app_KPM}
  \bar{A}_\eta(x)
  = \frac{1}{\pi\sqrt{1-x^2}}
    \biggl[\mu_0 + 2\sum_{n=1}^{N_{\max}}
      g_n\,\mu_n\,T_n(x)\biggr].
\end{equation}
Two standard choices are:
\begin{itemize}
  \item \textbf{Lorentz kernel:}
    $g_n^{\mathrm{L}} = e^{-\eta n}$, with $\eta > 0$.
    This produces Lorentzian peaks in the \emph{angular}  variable $\theta = \arccos(x)$, with half-width-at-half-maximum (HWHM) equal to $\eta$ in $\theta$-space. Mapped back to the energy variable $\omega = \lambda x$, the peak at energy $E_k$ acquires an \emph{energy-dependent} HWHM (see derivation in~\cref{ssec:app_fourier_equiv}):
    \begin{equation}\label{eq:app_lorentz_width}
      \Gamma(E_k)
      = \eta\,\lambda\,\sin\theta_k
      = \eta\sqrt{\lambda^2 - E_k^2}.
    \end{equation}
    The broadening is maximal at the centre of the spectrum ($E_k = 0$, $\Gamma = \eta\lambda$) and vanishes at the band edges ($E_k = \pm\lambda$, $\Gamma = 0$). For eigenvalues well inside the band ($|E_k| \ll \lambda$), $\Gamma \approx \eta\lambda$.

  \item \textbf{Jackson kernel:}
    \begin{equation}\label{eq:app_jackson_kernel}
      g_n^{\mathrm{J}}
      = \frac{(N_{\max}\!-\!n\!+\!1)
              \cos\frac{\pi n}{N_{\max}+1}
        + \sin\frac{\pi n}{N_{\max}+1}
              \cot\frac{\pi}{N_{\max}+1}}
        {N_{\max}+1}.
    \end{equation}
    This yields near-optimal Gibbs suppression with approximately Gaussian broadening of width $\sim \pi\lambda / N_{\max}$ \cite{weisse2006kernel}.
\end{itemize}

\paragraph{Step 4: Convert back to the physical energy variable.}
Undoing the rescaling $x = \omega/\lambda$ with $\bar{A}(x)\,dx = A(\omega)\,d\omega$ (i.e.\ $A(\omega) = \bar{A}(\omega/\lambda)/\lambda$):
\begin{equation}\label{eq:app_final_spectrum}
  A_\eta(\omega) = \frac{1}{\pi\sqrt{\lambda^2 - \omega^2}} \biggl[\mu_0 + 2\sum_{n=1}^{N_{\max}} g_n\,\mu_n\, T_n\!\left(\frac{\omega}{\lambda}\right)
    \biggr].
\end{equation}
This is the Chebyshev transform referred to in the main text: it maps the walk signal $\{\mu_n\}_{n=0}^{N_{\max}}$ directly to the broadened spectral function $A_\eta(\omega)$, entirely classically.

\subsection{Equivalence to Fourier analysis in the angular variable
\label{ssec:app_fourier_equiv}}

The Chebyshev expansion~\cref{eq:app_KPM} is not an ad hoc procedure.  It is mathematically equivalent to a standard Fourier cosine series of the walk signal, carried out in the angular variable $\theta = \arccos(\omega/\lambda)$, followed by a change of variable back to energy.

\paragraph{Step 1: Define the angular spectral density.}
The (undamped) spectral density in the angular variable is
\begin{equation}\label{eq:app_G_theta_undamped}
  \mathcal{G}(\theta)
  = \sum_k |c_k|^2\,\delta(\theta - \theta_k),
  \qquad \theta \in [0,\pi].
\end{equation}
Its Fourier cosine coefficients are, for $n \ge 0$,
\begin{equation}\label{eq:app_fourier_cosine_coeffs}
  a_n= \frac{2}{\pi}\int_0^{\pi} \mathcal{G}(\theta)\cos(n\theta)\,d\theta = \frac{2}{\pi}\sum_k |c_k|^2\cos(n\theta_k)= \frac{2\mu_n}{\pi},
\end{equation}
where we used the definition~\cref{eq:app_mu_n_cos}:
$\sum_k |c_k|^2\cos(n\theta_k) = \mu_n$ and the
normalisation $\sum_k|c_k|^2 = 1$. The Fourier cosine expansion on $[0,\pi]$ is
\begin{equation}\label{eq:app_fourier_cosine_series}
  \mathcal{G}(\theta)
  = \frac{a_0}{2} + \sum_{n=1}^{\infty} a_n\cos(n\theta)
  = \frac{1}{\pi}
    \biggl[\mu_0 + 2\sum_{n=1}^{\infty}\mu_n\cos(n\theta)\biggr].
\end{equation}

\paragraph{Step 2: Add damping.}
Replacing $\mu_n \to g_n\mu_n$ for $n \ge 1$ and truncating at $N_{\max}$:
\begin{equation}\label{eq:app_G_theta_damped}
  \mathcal{G}_\eta(\theta)
  = \frac{1}{\pi}
    \biggl[\mu_0 + 2\sum_{n=1}^{N_{\max}}
      g_n\,\mu_n\cos(n\theta)\biggr].
\end{equation}

\paragraph{Step 3: Change of variable $\omega = \lambda\cos\theta$.}
The energy and angular variables are related by $\omega = \lambda\cos\theta$, with Jacobian
\begin{equation}\label{eq:app_jacobian}
  \left|\frac{d\theta}{d\omega}\right|
  = \frac{1}{\lambda\sin\theta}
  = \frac{1}{\sqrt{\lambda^2 - \omega^2}}.
\end{equation}
Since spectral densities transform as $A_\eta(\omega)\,|d\omega|  = \mathcal{G}_\eta(\theta)\,|d\theta|$,
\begin{equation}\label{eq:app_change_var}
  A_\eta(\omega)
  = \frac{\mathcal{G}_\eta\!\bigl(\arccos(\omega/\lambda)\bigr)}
         {\sqrt{\lambda^2 - \omega^2}}.
\end{equation}

\paragraph{Step 4: Verify consistency.}
Substituting~\cref{eq:app_G_theta_damped} into~\cref{eq:app_change_var} and using $\cos\!\bigl(n\arccos(\omega/\lambda)\bigr)  = T_n(\omega/\lambda)$:
\begin{equation}\label{eq:app_fourier_recovers_KPM}
  A_\eta(\omega)
  = \frac{1}{\pi\sqrt{\lambda^2-\omega^2}}
    \biggl[\mu_0
      + 2\sum_{n=1}^{N_{\max}} g_n\,\mu_n\,
        T_n\!\Bigl(\frac{\omega}{\lambda}\Bigr)
    \biggr],
\end{equation}
which is identical to~\cref{eq:app_final_spectrum}.~$\square$

\paragraph{Physical interpretation.}
The Chebyshev weight $1/\sqrt{\lambda^2 - \omega^2}$ is not a special feature of the KPM: it is simply the \emph{Jacobian}~\cref{eq:app_jacobian} of the $\arccos$ map from energies to walk eigenphases.  A uniform distribution in $\theta$ corresponds to the arcsine distribution in energy, and the weight function compensates for this.

\paragraph{Derivation of the energy-dependent broadening\cref{eq:app_lorentz_width}.}
With the Lorentz kernel $g_n = e^{-\eta n}$, the angular spectral density~\cref{eq:app_G_theta_damped} near a peak at $\theta_k$ evaluates (via the Poisson kernel for the disk, $1 + 2\sum_{n=1}^\infty r^n\cos n\phi = (1-r^2)/(1-2r\cos\phi+r^2)$ with $r = e^{-\eta}$) to a Lorentzian in $\theta$ with HWHM $\eta$ (in the small-$\eta$ limit):
\begin{equation}\label{eq:app_theta_lorentzian}
  \mathcal{G}_\eta(\theta)
  \;\approx\;
  \frac{1}{\pi}\sum_k |c_k|^2
    \frac{\eta}{(\theta-\theta_k)^2 + \eta^2}
  \qquad (\eta \ll 1,\;\theta\text{ near }\theta_k).
\end{equation}
Under the map $\omega = \lambda\cos\theta$, a small angular interval $\delta\theta$ near $\theta_k$ corresponds to an energy interval $\delta\omega = \lambda\sin\theta_k\,\delta\theta$. Therefore the HWHM in energy is
\begin{equation}
  \Gamma(E_k)
  = \eta\,\lambda\sin\theta_k
  = \eta\sqrt{\lambda^2 - E_k^2}. \tag{\ref{eq:app_lorentz_width}}
\end{equation}

\end{document}